\documentclass[a4paper,twocolumn,11pt,accepted=2024-11-09]{quantumarticle}
\pdfoutput=1
\usepackage[utf8]{inputenc}
\usepackage[english]{babel}
\usepackage[T1]{fontenc}
\usepackage{amsmath}
\usepackage{hyperref}

\usepackage{lipsum}
\usepackage[numbers,sort&compress]{natbib}
\bibliographystyle{unsrtnat}

\usepackage{graphicx}
\usepackage{bm}
\usepackage{amsmath}
\usepackage{amssymb}
\usepackage{enumitem}
\usepackage{appendix}
\usepackage{comment}
\usepackage{relsize}
\usepackage[bottom]{footmisc}
\usepackage[dvipsnames]{xcolor}
\definecolor{darkblue}{rgb}{0,0,.65}
\definecolor{darkgreen}{rgb}{0.28,0.41,0.19}
\usepackage[dvipsnames]{xcolor}

\usepackage[all]{hypcap}
\usepackage{braket}
\usepackage{amsthm,bbm}

\usepackage{lipsum}
\usepackage{tcolorbox}

\def\RR{\mathbb{R}}

\newcommand{\bd}[1]{\boldsymbol{#1}}
\newcommand{\wb}{\bm{w}}
\newcommand{\wtb}{\bd{\tilde{w}}}
\newcommand{\Ew}{E_{\wb}}
\newcommand{\Eb}{\mathbf{E}}
\newcommand{\Ebt}{\mathbf{\tilde{E}}}
\newcommand{\rw}{\rho_{\wb}}
\newcommand{\rtw}{\tilde{\rho}_{\wb}}

\newcommand{\Dw}{\mathcal{D}_{\wb} }
\newcommand{\DEwH}{\Delta E_{\wb}}
\newcommand{\Xb}{\mathbf{X}}

\newcommand{\deltarhow}{\Delta \rho_{\wb}}
\newcommand{\deltaPsik}{\Delta \Psi_k}
\newcommand{\deltaPsi}{\Delta \Psi}
\newcommand{\deltaEk}{\Delta E_k}
\newcommand{\deltaE}{\Delta E}
\newcommand{\mub}{\bm{\mu}}

\newtheorem{thrm}{Theorem}
\newtheorem{coro}{Corollary}
\newtheorem{lemm}{Lemma}

\begin{document}

\title{Ground and Excited States from Ensemble Variational Principles}

\author{Lexin Ding}
\affiliation{Faculty of Physics, Arnold Sommerfeld Centre for Theoretical Physics (ASC),\\Ludwig-Maximilians-Universit{\"a}t M{\"u}nchen, Theresienstr.~37, 80333 M{\"u}nchen, Germany}
\affiliation{Munich Center for Quantum Science and Technology (MCQST), Schellingstrasse 4, 80799 M{\"u}nchen, Germany}

\author{Cheng-Lin Hong}
\affiliation{Faculty of Physics, Arnold Sommerfeld Centre for Theoretical Physics (ASC),\\Ludwig-Maximilians-Universit{\"a}t M{\"u}nchen, Theresienstr.~37, 80333 M{\"u}nchen, Germany}
\affiliation{Munich Center for Quantum Science and Technology (MCQST), Schellingstrasse 4, 80799 M{\"u}nchen, Germany}

\author{Christian Schilling}
\email{c.schilling@lmu.de}
\affiliation{Faculty of Physics, Arnold Sommerfeld Centre for Theoretical Physics (ASC),\\Ludwig-Maximilians-Universit{\"a}t M{\"u}nchen, Theresienstr.~37, 80333 M{\"u}nchen, Germany}
\affiliation{Munich Center for Quantum Science and Technology (MCQST), Schellingstrasse 4, 80799 M{\"u}nchen, Germany}

\maketitle

\begin{abstract}
The extension of the Rayleigh-Ritz variational principle to ensemble states $\rho_{\wb}\equiv\sum_k w_k |\Psi_k\rangle \langle\Psi_k|$ with fixed weights $w_k$ lies ultimately at the heart of several recent methodological developments for targeting excitation energies by variational means. Prominent examples are density and density matrix functional theory, Monte Carlo sampling, state-average complete active space self-consistent field methods and variational quantum eigensolvers. In order to provide a sound basis for all these methods and to improve  their current implementations, we prove the validity of the underlying critical hypothesis: Whenever the ensemble energy is well-converged, the same holds true for the ensemble state $\rw$ as well as the individual eigenstates $\ket{\Psi_k}$ and eigenenergies $E_k$. To be more specific, we derive linear bounds $d_-\Delta\Ew \leq  \Delta Q \leq d_+ \Delta\Ew$ on the errors $\Delta Q $ of these sought-after quantities. A subsequent analytical analysis and numerical illustration proves the tightness of our universal inequalities. Our results and particularly the explicit form  of $d_{\pm}\equiv d_{\pm}^{(Q)}(\wb,\Eb)$ provide valuable insights into the optimal choice of the auxiliary weights $w_k$ in practical applications.
\end{abstract}

\section{Introduction}
Quantum excitations play a crucial role in the understanding of quantum systems in general, and they lie at the heart of numerous applications in
physics, chemistry, materials science and biology~\cite{Green2014,Johnson2015,Yam2023,Cheng2009,Cerullo2395}.
Unlike ground states, which can nowadays be numerically calculated with high accuracy and ever-increasing efficiency~\cite{PhysRevX.5.041041,PhysRevX.10.031058},
the study of excited-state properties is still a long-standing computational challenge~\cite{RevModPhys.74.601,SERRANOANDRES200599,doi:10.1021/acs.chemrev.8b00244,Dash2019,PhysRevLett.126.150601,Reiher2021,Westermayr,Loos2018}.
The complexity of the latter lies in the multi-configurational character of the corresponding wave functions~\cite{Gonzalez2012},
as well as in the lack of a variational principle for targeting individual excited states exclusively.

As far as the computation of ground states is concerned, the effectiveness of numerous numerical methods~\cite{PhysRev.136.B864,RevModPhys.77.259,Carleo2017} rests on the underlying Rayleigh-Ritz (RR) variational principle: For any Hamiltonian
\begin{equation}\label{Ham}
\hat{H} \equiv \sum_{k=0}^{D-1} E_k |\Psi_k\rangle\langle\Psi_k|
\end{equation}
on a $D$-dimensional Hilbert space $\mathcal{H}$ 
the ground state energy $E_0$ can be obtained by variational means according to
\begin{equation}\label{RR}
    E_0 \leq \langle \Psi|\hat{H}|\Psi\rangle, \quad \forall |\Psi\rangle \in \mathcal{H}.
\end{equation}
The extremality within the energy spectrum of the sought-after minimal eigenvalue $E_0$ of $\hat{H}$ is absolutely crucial for the effectiveness and predictive power of variational ground state approaches.
It namely implies (see, e.g., Ref.~~\cite{Benavides2017gap} and Sec.~\ref{sec:gok}) that the accuracy of a trial state $\ket{\Psi}$ increases as the variational energy $\bra{\Psi}H \ket{\Psi}$ approaches the ground state energy $E_0$, and equality with the exact ground state is reached if and only if the variational energy equals $E_0$. 

In order to target the low-lying excitation spectrum by variational means, Gross, Oliveira and Kohn~\cite{GOK1988} extended the RR variational principle to ensemble states $\rtw \equiv \sum_{k=0}^{K-1} w_k |\tilde{\Psi}_k\rangle\langle\tilde{\Psi}_k|$ with fixed spectrum $\wb$:
\begin{equation}
\begin{split}
    \Ew \equiv  \sum_{k=0}^{K-1} w_k E_k &\leq \mathrm{Tr}[\rtw \hat{H}]
    \\
    &= \sum_{k=0}^{K-1} w_k \langle \tilde{\Psi}_k|\hat{H}  |\tilde{\Psi}_k\rangle.
\end{split}
\end{equation}
The first $K$ eigenenergies $E_k$ and eigenstates $\ket{\Psi_k}$ of $\hat{H}$ are then obtained from the spectral decomposition of the resulting minimizer $\rw \equiv \sum_{k=0}^{K-1} w_k |\Psi_k\rangle\langle\Psi_k|$.

\begin{figure}[t]
    \centering
    \includegraphics[width=\linewidth]{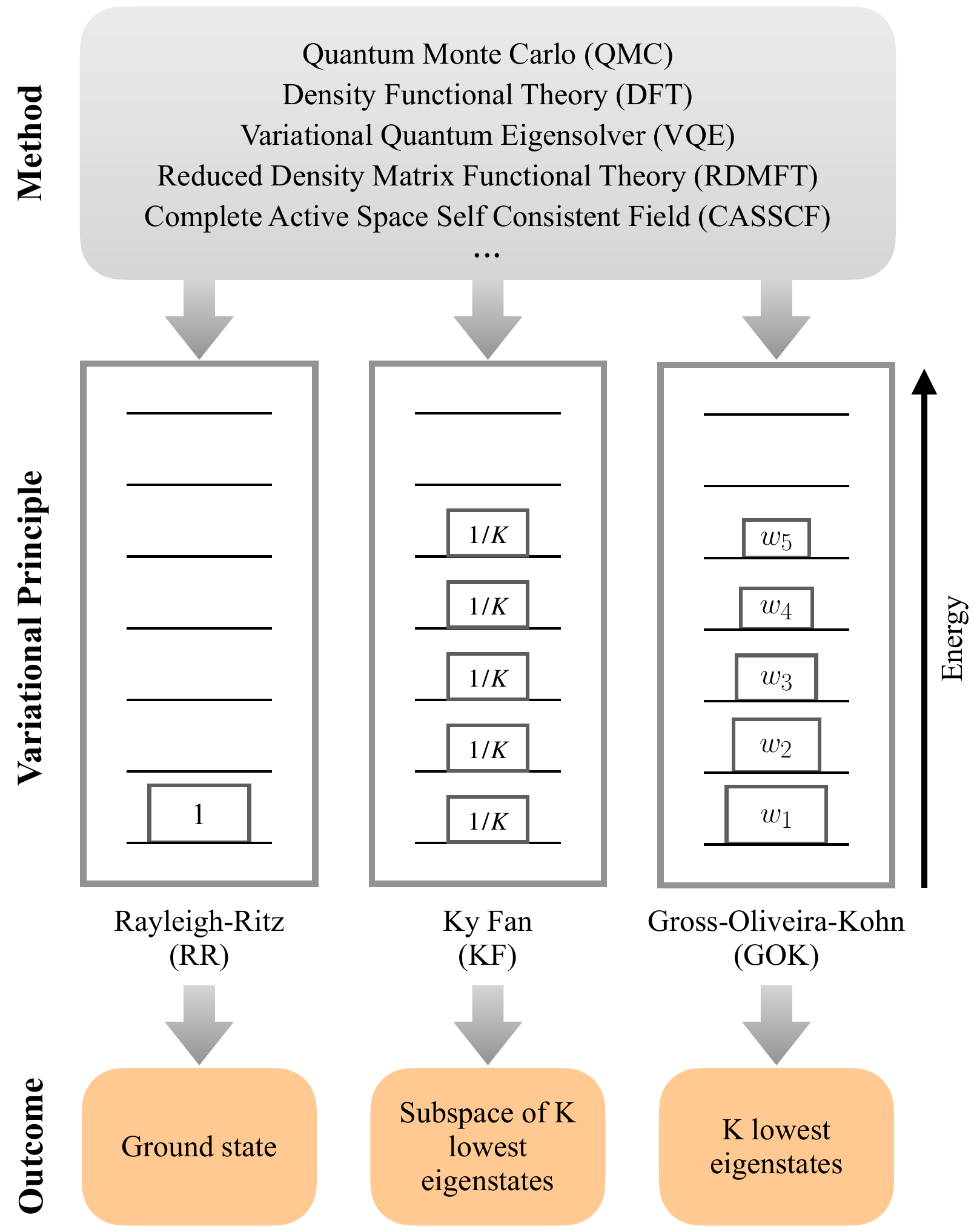}
    \caption{Illustration of the Rayleigh-Ritz (RR), Ky Fan (KF), and Gross-Oliveira-Kohn (GOK) variational principles and their scope in the context of recent methodological developments of ground and excited state approaches.}
    \label{fig:scope}
\end{figure}

This Gross-Oliveira-Kohn (GOK) variational principle naturally extends the scope of ground state methods to excited states (see the overview in Figure \ref{fig:scope}), and has recently become a key ingredient for the conceptual and numerical developments in quantum sciences due to a number of reasons:
First, since the common approach in density functional theory (DFT)~\cite{HK} to target excited states through a time-dependent formulation~\cite{RG84,Gross96ExcDFT} suffers from conceptual and practical limitations, ensemble/GOK-DFT~\cite{PhysRevA.37.2809,Carsten2017,PhysRevB.95.035120,Fromager18,loos2020weight,Fromager20, Fromager21,Gross21,gould2021ensemble,PhysRevA.104.052806,lu2022multistate,Loos23JPCLett,PhysRevB.109.235113} has emerged as a promising alternative. Second, due to the difficulties of DFT in dealing with strong correlation effects, a one-particle reduced density matrix functional theory (RDMFT)~\cite{Gilbert,LE79,V80,LCS23} based on the GOK variational principle for targeting selected eigenstates has been proposed in 2021 ($\wb$-RDMFT)~\cite{PhysRevLett.127.023001,liebert2021foundation,JLthesis,LS23SciP,LS23NJP}. Third, wave function-based approaches that calculate the lowest $K$ eigenstates consecutively suffer from an uncontrollable accumulation of errors. Utilizing instead the GOK variational principle with flexible weights $\wb$ provides a more direct approach to the specific eigenstates of interest, which is successfully exploited, e.g., in quantum Monte Carlo techniques~\cite{schautz2004optimized,schautz2004excitations,Filippi}, some traditional quantum chemical methods~\cite{mcweeny1974scf,sachs1975mcscf,deskevich2004dynamically,PhysRevLett.129.066401,MatsikaRev21} and quantum computing~\cite{ssvqe,Higgott2019variationalquantum,Cerezo2022,Nakanishi,Yalouz_2021,Yalouz22,PhysRevA.107.052423,hong2023quantum,benavides2023quantum}.

However, despite its broad applicability in recent years, two crucial aspects regarding the GOK variational principle have not been properly formulated nor addressed yet. (i) There are two seemingly misaligned objectives which first need to be reconciled: It is the ensemble energy that is minimized, yet the interest lies in the individual eigenstates and their energies. Although reaching the exact ensemble energy $\Ew$ implies that the eigenstates of the ensemble are exact as well~\cite{GOK1988}, it is from a mathematical point of view not clear at all that the GOK variational principle maintains this predictive power for the practically relevant case of \emph{imperfect} convergence. Indeed, particularly for quasidegenerate weights $w_k$, one could easily imagine an incorrect ensemble $\rtw= \sum_{k\geq 0} w_k |\tilde{\Psi}_k\rangle\langle\tilde{\Psi}_k| \neq\rw$ which, however, (almost) attains the exact energy $\Ew$ due to a cancellation of individual energy errors $\tilde{E}_k-E_k$.
(ii) There is an under-explored degree of freedom in the choice of $\wb$. The validity of the GOK variational principle is independent of the specific choices of weights. Yet clearly, a variational algorithm can be biased towards a specific eigenstate, if it is assigned a larger weight. One extreme example would be the reduction of the GOK variational principle to the original RR version, or the one of equal weights, when virtually no information regarding the individual eigenstates could be gained.
It is the ultimate goal of our work to address both of the above issues comprehensively and settle them conclusively.

The paper is structured as follows. First, we discuss the GOK variational principle and point out related caveats in Sec.~\ref{sec:gok}, formulate concisely our scientific problem in Sec.~\ref{sec:problem} and introduce in Sec.~\ref{sec:poly} two elegant concepts from convex geometry required for its solution. In Sec.~\ref{sec:bounds}, we present our main results, namely universal bounds, $d_-\Delta\Ew \leq  \Delta Q \leq d_+\Delta\Ew$, on the error $\Delta Q$ of the variational ensemble state, the eigenstates and eigenenergies.
In Sec.~\ref{sec:saturation}, we then confirm by analytically and numerically means the tightness of these bounds and identify optimal choices of the weights for practical applications based on the explicit form of $d_\pm\equiv d_\pm^{(Q)}(\wb,\Eb)$. Last but not least, in Sec.~\ref{sec:VQE}, as a proof of concept, we illustrate our results in the context of variational quantum eigensolvers.

\section{Notations, Concepts and Scientific Problem} \label{sec:concepts}

\subsection{GOK variational principle} \label{sec:gok}
In this section we provide a brief recap of the GOK variational principle, identify some crucial caveats in its practical applications and explain what is required from a variational principle in order to have a sound basis.

Attempts to extend the RR ground state variational principle date back to the late 40s, when Kohn and Ky Fan (the latter provided a full proof) established that the energy of an equiensemble state of rank $K$ is bounded from above by the sum of the $K$ lowest energy eigenvalues~\cite{kohn1947excited,fan1949theorem}:
\begin{thrm}[Ky Fan (1949)]\label{thrm:kyfan}
    For any Hamiltonian \eqref{Ham} and any integer $K \leq D$ the inequality
\begin{equation}
    \sum_{k=0}^{K-1} E_k^\uparrow \leq  \sum_{k=0}^{K-1} \langle \tilde{\Psi}_k |\hat{H}|\tilde{\Psi}_k\rangle
\end{equation}
holds for any set of $K$ orthonormal states $\{|\tilde{\Psi}_k\rangle\}_{k=0}^{K-1}$. Moreover, as long as $E_K > E_{K-1}$, equality is attained if and only if the span of the latter  coincides with the span of the first $K$ eigenstates $\{|\Psi_k\rangle\}_{k=0}^{K-1}$.
\end{thrm}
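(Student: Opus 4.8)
The plan is to prove the Ky Fan inequality by a ``majorization-free'' linear-algebra argument that isolates the role of the low-lying eigenspace. First I would fix an arbitrary orthonormal set $\{|\tilde\Psi_k\rangle\}_{k=0}^{K-1}$, let $P=\sum_{k=0}^{K-1}|\tilde\Psi_k\rangle\langle\tilde\Psi_k|$ be the corresponding rank-$K$ projector, and observe that the right-hand side equals $\mathrm{Tr}[P\hat H]$, a quantity depending only on the subspace $\mathrm{range}(P)$, not on the particular orthonormal basis chosen for it. So the claim reduces to: among all rank-$K$ projectors $P$, the trace $\mathrm{Tr}[P\hat H]$ is minimized precisely by the spectral projector $P_0=\sum_{k=0}^{K-1}|\Psi_k\rangle\langle\Psi_k|$ onto the $K$ lowest eigenstates, with value $\sum_{k=0}^{K-1}E_k^\uparrow$.

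The key step is a ``fill-from-the-bottom'' estimate. Writing $p_j=\langle\Psi_j|P|\Psi_j\rangle\in[0,1]$ for the diagonal of $P$ in the eigenbasis of $\hat H$ (ordered so $E_0\le E_1\le\cdots$), one has $\mathrm{Tr}[P\hat H]=\sum_{j=0}^{D-1}p_j E_j$ and $\sum_{j=0}^{D-1}p_j=\mathrm{Tr}[P]=K$. I would then show
\begin{equation}
\sum_{j=0}^{D-1}p_j E_j - \sum_{j=0}^{K-1}E_j
= \sum_{j\ge K}p_j E_j - \sum_{j<K}(1-p_j)E_j \ge 0,
\end{equation}
using that $\sum_{j\ge K}p_j=\sum_{j<K}(1-p_j)$ (a consequence of the trace constraint), that every $E_j$ with $j\ge K$ is at least $E_{K-1}$, that every $E_j$ with $j<K$ is at most $E_{K-1}$, and that all coefficients $p_j$, $1-p_j$ are nonnegative. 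This is a one-line convexity/rearrangement bound once the bookkeeping is set up; it is essentially the statement that shifting ``weight'' from high eigenvalues to low ones cannot increase the sum.

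For the equality characterization, assume $E_K>E_{K-1}$ and trace through the inequality chain: equality forces $p_j=0$ for all $j$ with $E_j>E_{K-1}$ (strictly larger), and $p_j=1$ for all $j$ with $E_j<E_{K-1}$. On the remaining index block where $E_j=E_{K-1}$ the values $p_j$ are only constrained to sum to the right integer, but on that degenerate eigenspace any choice of $P$ restricted there is compatible, and one checks that $\mathrm{range}(P)$ must then contain all eigenvectors of strictly smaller energy and lie inside the span of eigenvectors of energy $\le E_{K-1}$; combined with $\dim\mathrm{range}(P)=K$ and $E_K>E_{K-1}$, this pins down $\mathrm{range}(P)=\mathrm{span}\{|\Psi_k\rangle\}_{k=0}^{K-1}$, which is exactly the asserted span condition. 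Conversely, if the spans coincide then $\mathrm{Tr}[P\hat H]=\mathrm{Tr}[P_0\hat H]=\sum_{k<K}E_k^\uparrow$ immediately.

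The main obstacle is not any single hard estimate but the careful handling of degeneracy at the threshold energy $E_{K-1}$: one must argue that the freedom in the $p_j$ on the degenerate block does not spoil the equality statement, and that is precisely where the hypothesis $E_K>E_{K-1}$ (a genuine gap above, but no gap assumption below) gets used. I would present the inequality first in full generality, then devote a short paragraph to the degenerate-block analysis for the ``if and only if'' part.
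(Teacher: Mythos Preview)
Your proof is correct. The inequality step is sound: the constraints $p_j\in[0,1]$, $\sum_j p_j=K$ follow from $0\le P\le\openone$ and $\mathrm{Tr}[P]=K$, and the telescoping estimate against the threshold value $E_{K-1}$ gives the bound cleanly. The equality analysis is also fine: under $E_K>E_{K-1}$, equality forces $p_j=0$ for every $j\ge K$, whence $P|\Psi_j\rangle=0$ (positivity of $P$), so $\mathrm{range}(P)\subseteq\mathrm{span}\{|\Psi_k\rangle\}_{k<K}$ and dimension count finishes. You could in fact shorten the equality paragraph to just that last sentence; the intermediate discussion of $p_j=1$ on the low-energy block and the degenerate $E_{K-1}$-block is correct but not needed once you have $p_j=0$ for all $j\ge K$.

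Your route is genuinely different from the paper's. The paper does not prove Theorem~\ref{thrm:kyfan} separately; it states it as a classical result and instead proves the more general GOK inequality (Theorem~\ref{thrm:gok}) via the Birkhoff polytope: one writes $\mathrm{Tr}[\tilde\rho_{\wb}\hat H]=\wb^{\mathrm T}\mathbf{X}^{\mathrm T}\mathbf{E}$ with $\mathbf{X}$ unistochastic, relaxes the minimization to the Birkhoff polytope $\mathcal{B}_D=\mathrm{Conv}(\mathcal{S}_D)$, observes that a linear functional attains its minimum at a permutation vertex, and then invokes the rearrangement inequality. Ky~Fan falls out as the equal-weight special case. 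What your approach buys is an entirely elementary, self-contained argument needing no doubly-stochastic or rearrangement machinery, together with an explicit treatment of the equality case (including possible degeneracy at $E_{K-1}$) that the paper's short proof of GOK does not spell out. What the paper's approach buys is immediate generality to arbitrary $\wb$ and, more importantly, the convex-geometric framework (Birkhoff polytope, permutohedra) that is later indispensable for the constrained optimizations yielding the quantitative error bounds of Section~\ref{sec:bounds}; your projector-diagonal argument does not naturally extend to those.
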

\noindent Here and in the following, an arrow $\uparrow (\downarrow)$ indicates that the entries of a vector are arranged increasingly (decreasingly).

It was almost 40 years later when  Gross, Oliveira and Kohn~\cite{GOK1988}
further extended Theorem \ref{thrm:kyfan} to ensembles with arbitrary weights:
\begin{thrm}[Gross, Oliveira, Kohn (1988)]\label{thrm:gok}
    For any Hamiltonian \eqref{Ham} and any probability distribution ${\wb}\in\RR^D$ ($w_i\geq 0$
    and $\sum_{i=0}^{D-1}w_i = 1$), the inequality
\begin{equation}\label{GOKineq}
 {E}_{\wb} \equiv  \sum_{k=0}^{D-1} w_k^\downarrow E_k^\uparrow   \leq   \mathrm{Tr}[\tilde{\rho}_{\wb}\hat{H}] = \sum_{k=0}^{D-1} w_k \bra{\tilde{\Psi}_k} \hat{H} \ket{\tilde{\Psi}_k}.
\end{equation}
holds for any ensemble state $\tilde\rho_{{\wb}} \equiv \sum_{k=0}^{D-1} w_k |\tilde{\Psi}_k\rangle\langle\tilde{\Psi}_k|$ with spectrum $\wb$.
Moreover, as long as the energy spectrum $\Eb$ of $\hat{H}$ is non-degenerate, equality in \eqref{GOKineq} is attained if and only if $\rtw$ and $\rw \equiv \sum_{k=0}^{D-1} w_k^\downarrow |\Psi_k\rangle\langle \Psi_k|$, and thus their spectral decompositions, coincide.
\end{thrm}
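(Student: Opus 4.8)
The plan is to reduce the claim to the Ky Fan theorem (Theorem~\ref{thrm:kyfan}) via an Abel summation (summation-by-parts) argument; the inequality \eqref{GOKineq} is then essentially the lower half of von~Neumann's trace inequality, but obtained in a self-contained way. Since $\{|\tilde\Psi_k\rangle\}$ is an orthonormal eigenbasis of $\tilde\rho_{\wb}$, I would first relabel its elements so that the associated eigenvalues appear in decreasing order, turning the right-hand side of \eqref{GOKineq} into $\sum_{k=0}^{D-1} w_k^\downarrow \langle\tilde\Psi_k|\hat{H}|\tilde\Psi_k\rangle$. Setting $w_D^\downarrow \equiv 0$ and using the telescoping identity
\begin{equation*}
\sum_{k=0}^{D-1} w_k^\downarrow\, a_k = \sum_{K=1}^{D} \left(w_{K-1}^\downarrow - w_K^\downarrow\right)\sum_{k=0}^{K-1} a_k ,
\end{equation*}
valid for arbitrary numbers $a_k$ (the coefficient of $a_k$ on the right telescopes to $w_k^\downarrow$), both $\mathrm{Tr}[\tilde\rho_{\wb}\hat{H}]$ and $E_{\wb}$ become nonnegative linear combinations — with the \emph{same} coefficients $w_{K-1}^\downarrow-w_K^\downarrow\geq 0$ — of the partial sums $\sum_{k=0}^{K-1}\langle\tilde\Psi_k|\hat{H}|\tilde\Psi_k\rangle$ and $\sum_{k=0}^{K-1}E_k^\uparrow$, respectively.

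The inequality \eqref{GOKineq} then follows termwise: Theorem~\ref{thrm:kyfan} applied to the orthonormal family $\{|\tilde\Psi_k\rangle\}_{k=0}^{K-1}$ gives $\sum_{k=0}^{K-1}\langle\tilde\Psi_k|\hat{H}|\tilde\Psi_k\rangle \geq \sum_{k=0}^{K-1}E_k^\uparrow$ for every $K$, and multiplying by $w_{K-1}^\downarrow-w_K^\downarrow\geq 0$ and summing over $K$ reproduces $\mathrm{Tr}[\tilde\rho_{\wb}\hat{H}]\geq E_{\wb}$.

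For the equality statement I would track exactly which of the termwise Ky Fan inequalities must be saturated. Equality in \eqref{GOKineq} forces $\sum_{k=0}^{K-1}\langle\tilde\Psi_k|\hat{H}|\tilde\Psi_k\rangle = \sum_{k=0}^{K-1}E_k^\uparrow$ for every index $K$ with $w_{K-1}^\downarrow > w_K^\downarrow$, i.e.\ at every "jump" of the sorted weight vector. Since $\Eb$ is non-degenerate, $E_K>E_{K-1}$ holds for all $K$, so the equality clause of Theorem~\ref{thrm:kyfan} applies and yields $\mathrm{span}\{|\tilde\Psi_k\rangle\}_{k=0}^{K-1} = \mathrm{span}\{|\Psi_k\rangle\}_{k=0}^{K-1}$ at each such $K$. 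The jumps partition $\{0,\dots,D-1\}$ into consecutive blocks on which $\wb^\downarrow$ is constant; taking orthogonal complements of the span coincidences at two successive jumps shows that, on each block, $\tilde\rho_{\wb}$ and $\rw$ both act as the same constant times the projector onto the common eigenspace, whence $\tilde\rho_{\wb}=\rw$. The converse is immediate: $\rw$ has spectrum $\wb$ and commutes with $\hat H$, so $\mathrm{Tr}[\rw\hat H]=\sum_k w_k^\downarrow E_k^\uparrow=E_{\wb}$.

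The only genuinely delicate point is the bookkeeping in the equality analysis: one must handle degenerate weights — including a possible trailing block of vanishing weights, which pins down only the support of the ensemble — and apply the block-wise orthogonal-complement argument in the correct order, starting from the smallest-energy block. The inequality itself is a one-line consequence of Abel summation together with Theorem~\ref{thrm:kyfan}.
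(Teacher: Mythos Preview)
Your proof is correct and complete, but it follows a genuinely different route from the paper's own argument. The paper proves only the inequality part, and does so via the Birkhoff--von~Neumann machinery developed in Sec.~\ref{sec:poly}: writing $\mathrm{Tr}[\tilde\rho_{\wb}\hat H]=\wb^{\mathrm T}\mathbf X^{\mathrm T}\mathbf E$ with $X_{kl}=|\langle\Psi_k|\tilde\Psi_l\rangle|^2$ unistochastic, then using linearity in $\mathbf X$ to relax the minimum from $\mathcal U_D$ to the permutation matrices $\mathcal S_D$ (Eq.~\eqref{eqn:unconstr_eq}), and finally invoking the rearrangement inequality to identify the minimizer as $\wb^\downarrow\!\cdot\!\mathbf E^\uparrow$. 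Your approach instead rests on Abel summation together with Theorem~\ref{thrm:kyfan} applied at each partial-sum level.

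Both are short and valid; what they buy is different. The paper's argument is there precisely to showcase the polytope language that drives all of Sec.~\ref{sec:bounds}, so it is the natural choice in context. Your Abel-summation route is more elementary (it needs only Ky~Fan, not Birkhoff--von~Neumann or the rearrangement inequality) and, importantly, it delivers the equality characterization essentially for free: tracking saturation at each ``jump'' $K$ with $w_{K-1}^\downarrow>w_K^\downarrow$ and using the equality clause of Theorem~\ref{thrm:kyfan} gives the block-wise span coincidences, from which $\tilde\rho_{\wb}=\rw$ follows by your orthogonal-complement argument. The paper's proof, as written, does not address the equality case at all. Your closing caveat about the bookkeeping (degenerate weights, trailing zeros, the trivial $K=D$ term) is apt; the argument you sketch handles these correctly, since on each constant-weight block both operators act as the same scalar multiple of the common projector, and the zero-weight tail contributes nothing to either density operator.
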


A couple of comments are in order here concerning this GOK variational principle.
First, the weighted energy average is nothing else than the energy expectation value of the exact $\wb$-ensemble state $\rw$, i.e., $ {E}_{\wb} \equiv  \sum_{k=0}^{D-1} w_k^\downarrow E_k^\uparrow   =   \mathrm{Tr}[\rho_{\wb}\hat{H}] = \sum_{k=0}^{D-1} w_k \bra{\Psi_k} \hat{H} \ket{\Psi_k}$ and thus \eqref{GOKineq} could be recast as $\mathrm{Tr}[(\tilde{\rho}_{\wb}-\rho_{\wb})\hat{H}] \geq 0$ for all $\wb$-ensemble states $\rtw$. Second, the predictive power of the GOK variational principle 
depends on how degenerate the energy spectrum $\Eb$ is. To explain this, let us consider the (practically less interesting) case of a fully degenerate spectrum, i.e., $\hat{H}= E_0 \hat{\openone}$. Any $\wb$-ensemble state would then saturate inequality \eqref{GOKineq}. Accordingly, the information that a given $\rtw$ saturates \eqref{GOKineq} would not tell us anything about this ensemble state. On the contrary, for a non-degenerate spectrum $E_0 < E_1 < \ldots < E_{D-1}$, the saturation of \eqref{GOKineq} by some $\rtw$ would imply that its eigenbasis $\{\ket{\tilde{\Psi}_k}\}_{k=0}^{D-1}$ coincides (up to irrelevant phase factors) with the eigenbasis $\{\ket{\Psi_k}\}_{k=0}^{D-1}$ of $\hat{H}$ and $\rw$, respectively. Given the goal of our paper, namely to elaborate on and confirm the predictive power of the GOK variational principle, this comment would suggest at first sight to study all scenarios of different energy multiplicities. Yet, there are good reasons to restrict our work to non-degenerate energy spectra only: In practice, degeneracies typically emerge from symmetries of the Hamiltonian, which translates into a block-diagonal structure, $\hat{H} \equiv \oplus_{Q}\hat{H}_Q$, with respect to the different symmetry sectors. One could then apply the entire reasoning of our work to each non-degenerate block/symmetry sector separately. Moreover, as our main results in Sec.~\ref{sec:bounds} will confirm, the case of degeneracies is often either already covered as a suitable limit of the non-degenerate case, or alternatively it just does not make sense anymore to elaborate on the predictive power of GOK (as the example $\hat{H}= E_0 \hat{\openone}$ above illustrated). Third, in applications of the GOK variational principle to realistic quantum-many body systems one is typically interested only in the lowest few ($K$) eigenstates in the exponentially large Hilbert space. Hence, one then considers only $\wb$-vectors with $K$ non-vanishing weights, $w_0 \!>\! w_1 \!>\! \cdots \!>\! w_{K-1} \!>\! w_K \!=\! \cdots \!=\!w_{D-1} \!=\!0$. Of course, this in turn implies that the predictive power of the GOK variational principle then restricts to the lowest $K$ eigenenergies and eigenstates, and that possible degeneracies of the higher energy levels $E_k$, $k \geq K$, will not affect our derivation and results anymore.
Fourth, while a brute-force proof of Theorem \ref{thrm:gok} was given in Ref.~~\cite{GOK1988}, we will provide a particularly compact and simple one in Sec.~\ref{sec:poly}, after having introduced some effective tools from convex analysis.


According to Theorem \ref{thrm:gok}, the ensemble energy $\tilde{E}_{\wb}$ of any $\rtw \equiv \sum_{k=0}^{D-1} w_k^\downarrow |\tilde{\Psi}_k\rangle\langle\tilde{\Psi}_k|$ is an upper bound to the true minimum $E_{\wb}$. This, however, does not establish yet a sound basis for a variational principle that can be used for calculating individual eigenstates and their energies. Indeed, when the minimum $\Ew$ is not exactly attained, Theorem \ref{thrm:gok} does not make any predictions about the meaning or quality of the individual states $\ket{\tilde{\Psi}_k}$. That is quite in contrast to ground state calculations through the RR variational principle: It is well understood that for any pure trial state $\ket{\Psi}\!\bra{\Psi}$, its Hilbert-Schmidt distance to the exact ground state $\ket{\Psi_0}\!\bra{\Psi_0}$ ($\|\hat{A}\|_\mathrm{HS}\equiv\sqrt{\mathrm{Tr}[\hat{A}^\dagger \hat{A}]}$),
\begin{equation}
\Delta  \Psi_0 \equiv \frac{1}{2} \big\lVert|\Psi\rangle\langle\Psi|-|\Psi_0\rangle\langle\Psi_0|\big\rVert_\mathrm{HS}^2 = 1-|\langle \Psi | \Psi_0\rangle|^2,
\end{equation}
is linearly bounded from below and above by the variational error $\Delta E_0 \equiv \langle \Psi|\hat{H}|\Psi\rangle-E_0 \geq 0$ according to (see, e.g., Ref.~~\cite{Benavides2017gap})
\begin{equation}
q_- \,\Delta E_0 \leq  \Delta  \Psi_0\leq q_+\, \Delta E_0, \label{eqn:RR_state}
\end{equation}
where $q_-\equiv1/(E_{D-1}\!-\!E_0)$ and $q_+\equiv1/(E_1\!-\!E_0)$.
This means nothing else than that the variational state is exact if and only if the energy is fully converged and otherwise the errors $\Delta E_0, \Delta  \Psi_0$ are related through linear bounds. It is exactly this \emph{predictive power} that makes the RR variational principle an effective and meaningful variational principle. It is therefore the ultimate goal of our work to elaborate on and eventually confirm such predictive power also for the GOK variational principle. This means to understand in quantitative terms whether, when and how the ensemble error $\Delta \Ew$ bounds the errors of (i) the ensemble state $\rho_{{\wb}}$, (ii) the individual eigenstates $ |\Psi_k\rangle$, and (iii) the individual eigenenergies $E_k$. This central task of our work can become quite delicate --- in contrast to the context of the RR variational principle --- as the following basic example illustrates:
\begin{tcolorbox}[colback=Green!20, colframe=Green, title= An instructive example]
Consider a three-level system with Hamiltonian $\hat{H} = -|0\rangle\langle0| + |2\rangle\langle 2|$ and choose the ensemble weights ${\wb} = (1/2,1/2,0)$. In this case both density operators $\rho_{{\wb}} = \frac{1}{2}|0\rangle\langle 0| + \frac{1}{2}|1\rangle\langle1|$ and $\rtw = \frac{1}{2}|+\rangle\langle +| + \frac{1}{2}|-\rangle\langle-|$, where $\ket{\pm} \equiv 1/\sqrt{2}(\ket{0}\pm \ket{1})$, are minimizers of the ensemble energy (both states are actually the same). However, the states $\ket{\pm}$ are quite different from the first two eigenstates $\ket{0}, \ket{1}$ of $\hat{H}$ and accordingly have large errors in their energies.
\end{tcolorbox}
While the negative consequences of a degenerate weight vector $\wb$ were actually rather obvious, this basic example clearly reveals the depth of the scientific problem at hand: the anticipated error bounds must depend in a non-trivial way on (the gaps between) the weights $w_k$, and in  particular whenever two weights become identical some bounds must become meaningless.

\subsection{The scientific problem} \label{sec:problem}
In this section, we formulate our scientific problem in concise terms. For this, we first assume a fixed non-degenerate Hamiltonian $\hat{H}$ on a $D$-dimensional complex Hilbert space $\mathcal{H}$ according to \eqref{Ham} and denote its vector of increasingly ordered energy eigenvalues by $\Eb\equiv \Eb^\uparrow$. For each choice of a spectral weight vector $\wb\equiv \wb^\downarrow$, we then define the corresponding manifold of density operators with spectrum $\wb$ to which GOK's variational principle, Theorem \ref{thrm:gok}, refers to:
\begin{equation}\label{Dw}
    \mathcal{D}_{\wb} \equiv \left\{ \tilde{\rho}_{{\wb}} = \left.\sum_{k=0}^{D-1} w_k |\tilde{\Psi}_k\rangle\langle\tilde{\Psi}_k| \, \right|\, \langle\tilde{\Psi}_k|\tilde{\Psi}_l\rangle = \delta_{kl} \right\}.
\end{equation}
It is worth noticing that the manifold $\mathcal{D}_{\wb}$ is identical to the set of all unitary conjugations of the exact ensemble state $\rw \equiv \sum_{k=0}^{D-1} w_k |\Psi_k\rangle\langle\Psi_k|$, i.e.,
\begin{equation}
\Dw = \big\{\hat{U} \rho_{\wb} \hat{U}^\dagger \,|\, \hat{U} \mbox{ a unitary on }\mathcal{H}\big\}
\end{equation}
and accordingly every element $\tilde{\rho}_{\wb}$ can be expressed  as
\begin{equation}\label{rhoU}
    \rtw(U) = \hat{U} \rho_{\wb} \hat{U}^\dagger
\end{equation}
for some unitary $\hat{U}$.
To each quantum state $\tilde{\rho}_{\wb}$ in $\mathcal{D}_{\wb}$ we can associate the corresponding vector $\Ebt \equiv (\tilde{E}_k)_{k=0}^{D-1}$ of energy expectation values $\tilde{E}_k = \bra{\tilde{\Psi}_k} \hat{H} \ket{\tilde{\Psi}_k}$ with respect to $\rtw$'s eigenbasis $\{\ket{\tilde{\Psi}_k}\}_{k=0}^{D-1}$.

Elaborating on the predictive power of the GOK variational principle means nothing else than to predict the errors of physical properties of variational states $\rtw \in \Dw$ as function of their ensemble energy error
\begin{equation}
\DEwH(\rtw) \equiv  \mathrm{Tr}[(\tilde{\rho}_{{\wb}}-\rho_{{\wb}})\hat{H}]. \label{rhoerror}
\end{equation}
Due to its  relevance for numerical calculations in practice, a particular emphasis will lie on the regime where the ensemble energy error becomes small. With these definitions at hand, we can now formulate in concise terms the scientific problem that we are going to solve:
\begin{tcolorbox}[colback=blue!20, colframe=blue, title= \hspace{-3.5mm} \mbox{Scientific Problem: Predictive Power of GOK}]
Consider the GOK variational principle (Theorem \ref{thrm:gok}) for a setting $(\mathcal{H},\hat{H},\wb)$. Determine for any relevant quantity $Q \equiv Q(\rtw)$ the range
\begin{equation}
d_-^{(Q)}(\delta,\wb,\hat{H}) \leq \Delta Q(\rtw)\leq  d_+^{(Q)}(\delta,\wb,\hat{H}), \label{thmineq}
\end{equation}
of its possible errors  $\Delta Q(\rtw) \equiv Q(\rtw)- Q(\rw)$ provided $\rtw$ has a (sufficiently small) energy error $\DEwH(\rtw)=\delta>0$. The relevant quantities are \\
(i) the ensemble state $\rw$\\
(ii) all individual eigenstates $\ket{\Psi_k}$ \\
(iii) all individual eigenenergies $E_k$
\end{tcolorbox}

\noindent A couple of comments are in order. First, the sought-after optimal bounds are given by
\begin{eqnarray}\label{dbounds}
  d_-^{(Q)}(\delta,\wb,\hat{H}) &=& \min_{\scriptsize\begin{array}{c} \rtw \in \Dw: \\ \DEwH(\rtw)=\delta\end{array}} \Delta Q(\rtw) \nonumber \\
  d_+^{(Q)}(\delta,\wb,\hat{H}) &=&   \max_{\scriptsize\begin{array}{c} \rtw \in \Dw: \\ \DEwH(\rtw)=\delta\end{array}} \Delta Q(\rtw)\,.
\end{eqnarray}
Hence, it will be the challenge in the following to execute by analytical means the respective constrained minimization and maximization in \eqref{dbounds} for all three relevant physical quantities. Moreover, it will become clear below what `$\delta$ sufficiently small' will mean. Finally, the crucial question will be how the bounds $d_\pm(\delta,\wb,\hat{H})$ depend on $\delta$. To anticipate our results, we are actually going to find that the bounds for all three relevant quantities depend \emph{linearly} on the ensemble error $\Delta \Ew = \delta$. This in turn will confirm the predictive power of the GOK variational principle.

To execute analytically the optimizations in \eqref{dbounds} we first need to understand how various relevant error quantities depend on $\rtw$ and $\hat{U}$ in \eqref{rhoU}, respectively. By using $\rtw \equiv \rtw(\hat{U})$, $\mathbf{X}\equiv (|U_{kl}|^2)_{k,l=0}^{D-1}$, $U_{kl} \equiv \langle \Psi_k|\hat{U}|\Psi_l\rangle = \langle \Psi_k|\tilde{\Psi}_l\rangle$, $\wtb\equiv \mathbf{X}\wb$ and $\Ebt \equiv \mathbf{X}^{\mathrm{T}}\Eb$, rather straightforward calculations yield (one just needs to evaluate various traces in the eigenbasis of either  $\rtw$ or $\rw$ and $\hat{H}$, respectively)
\begin{eqnarray}
    \DEwH(\rtw) &\equiv& \mathrm{Tr}[(\tilde{\rho}_{{\wb}}-\rho_{{\wb}})\hat{H}] \nonumber \\
    &=&   \wb\bd{\cdot}(\mathbf{X}^{\mathrm{T}}\!-\openone)\mathbf{E} \nonumber \\
    &=& \wb\bd{\cdot}(\Ebt-\Eb)  \nonumber \\
    &=& (\wtb-\wb)\bd{\cdot} \Eb     \label{eqn:deltaEw}
\end{eqnarray}
\begin{eqnarray}
    \Delta\rho_{{\wb}}(\rtw) &\equiv& \|\rtw-\rw\|_{\mathrm{HS}}^2 \nonumber \\
    &\equiv& \mathrm{Tr}[(\rtw-\rw)^2] \nonumber \\
    &=& 2{\wb}\bd{\cdot}(\wb-\mathbf{X}\wb) \nonumber  \\
    &\equiv& 2{\wb}\bd{\cdot}({\wb}-\tilde{{\wb}}),\label{eqn:deltarho}
\end{eqnarray}
\begin{eqnarray}
    \Delta\Psi_k(\rtw)  &\equiv& \frac{1}{2}\big\|\ket{\tilde{\Psi}_k}\!\bra{\tilde{\Psi}_k}-\ket{\Psi_k}\!\bra{\Psi_k}\big\|_{\mathrm{HS}}^2\nonumber \\
    &=& 1-|\langle\Psi_k|\tilde{\Psi}_k\rangle|^2 \nonumber \\
    &=& 1-X_{kk}, \label{eqn:deltaPsik}
\end{eqnarray}
\begin{eqnarray}
    \Delta E_k(\rtw) &\equiv& \mathrm{Tr}[\left(\ket{\tilde{\Psi}_k}\!\bra{\tilde{\Psi}_k}-\ket{\Psi_k}\!\bra{\Psi_k}\right) \hat{H}] \nonumber \\
     &=& (\mathbf{X}^{\mathrm{T}}\Eb)_k-E_k \nonumber \\
     &\equiv& \tilde{E}_k-E_k, \label{eqn:deltaEk}
\end{eqnarray}

The far-reaching observation here is that all relevant quantities \eqref{eqn:deltaEw}-\eqref{eqn:deltaEk} can be expressed as \emph{linear} functions of either the matrix $\mathbf{X}\equiv (|U_{kl}|^2)_{k,l=0}^{D-1}$, or the even simpler vectors  $\wtb=\mathbf{X}{\wb}$ and $\Ebt=\mathbf{X}^{\rm T}\mathbf{E}$, respectively. Accordingly, in order to exploit this simplifying structure we need to understand better the sets of such matrices $\mathbf{X}$ and such vectors $\wtb$ and $\Ebt$. This is accomplished in the following section.

\subsection{Birkhoff polytopes and permutohedra} \label{sec:poly}


In the spirit of the last comment in the previous section, the problem \eqref{dbounds} of ensemble minimization and maximization can be conveniently cast in the language of Birkhoff polytopes (and their projections, permutohedra). Birkhoff polytopes are central mathematical objects that have been widely employed in various fields of physics (see ~\cite{bengtsson2005birkhoff} and references therein). The rich mathematical structures of these polytopes will turn out to be instrumental for the derivation of our main results and the comprehensive solution of the scientific problem (see p.~4).

The so-called Birkhoff polytope $\mathcal{B}_D$ of order $D$ is a convex polytope consisting of $D$-by-$D$ doubly stochastic matrices~\cite{marshall1979inequalities}.
A matrix $\mathbf{X}$ is called doubly stochastic (or bistochastic) if its columns and rows consist of real, non-negative entries which for each row and column sum to 1,
\begin{equation}
    \sum_{l=0}^{D-1}X_{lk} = \sum_{l=0}^{D-1}X_{kl} = 1, \quad \forall k=0,1,\ldots, D-1.
\end{equation}
Additionally, if there exists a unitary matrix $\mathbf{U}$ such that $X_{kl} = |U_{kl}|^2$, then $\mathbf{X}$ is called unistochastic.
We denote the set of unistochastic matrices as $\mathcal{U}_D$ which according to Eqs.~\eqref{eqn:deltaEw}-\eqref{eqn:deltaEk} will play a key role in the solution of the scientific problem. For $D\geq 3$, it turns out that the set $\mathcal{U}_D$ is no longer convex~\cite{marshall1979inequalities,bengtsson2005birkhoff,Zyczk09}. In particular, there exist doubly stochastic matrices that are not unistochastic, and therefore $\mathcal{U}_D$ is then a strict subset of $\mathcal{B}_D$.

In order to develop a better intuition for the Birkhoff polytope, we recall in particular the Birkhoff-von Neumann theorem~\cite{birkhoff1946three,von1953certain}: The Birkhoff polytope $\mathcal{B}_D$ is a convex polytope in $\RR^{D\times D}$ of dimension $(D-1)^2$ whose extremal points are the permutation matrices~\cite{marshall1979inequalities,marcus1992survey}. The latter are precisely those matrices which contain in each row and column one `1' and $D-1$ `0'-entries, and together they form a discrete set with $D!$ elements denoted by $\mathcal{S}_D$. Accordingly, by definition, $\mathcal{B}_D$ follows as the convex hull $\mathrm{Conv}(\mathcal{S}_D)$. Moreover, two extremal points (vertices) of $\mathcal{B}_D$ are connected by an edge if and only if they differ by a single cycle~\cite{brualdi1977convex}. Since all permutation matrices are unistochastic, the convex hull $\mathrm{Conv}(\mathcal{U}_D)$ is precisely $\mathcal{B}_D$. The relations among $\mathcal{B}_D$, $\mathcal{U}_D$ and $\mathcal{S}_D$ are depicted in the left panel of Figure \ref{fig:birkhoff}.
They can be summarized as
\begin{equation}
    \mathcal{S}_D \subseteq \mathcal{U}_D \subseteq \mathcal{B}_D, \label{eqn:birkhoff1}
\end{equation}
and
\begin{equation}
    \mathcal{B}_D = \mathrm{Conv}(\mathcal{U}_D) = \mathrm{Conv}(\mathcal{S}_D). \label{eqn:birkhoff2}
\end{equation}

Since we need to minimize and maximize the linear functions \eqref{eqn:deltarho}-\eqref{eqn:deltaEk} over a compact convex set according to Eq.~\eqref{dbounds}, we discuss in the following some basics of linear optimization. We focus on minimization, yet analogous conclusions can be drawn also for maximization. First, we recall the general fact that real-valued \emph{linear} functions $L(\mathbf{X})$  on compact sets attain their minima at extremal points. In particular, in our context this implies
\begin{equation}\label{eqn:unconstr_eq}
    \begin{split}
        \min_{\mathbf{X}\in \mathcal{U}_D} L(\mathbf{X}) = \min_{\mathbf{X}\in \mathcal{B}_D} L(\mathbf{X}) = \min_{\mathbf{P}\in\mathcal{S}_D} L(\mathbf{P}).
    \end{split}
\end{equation}
This becomes obvious in the geometric picture as it is illustrated in the left panel of Figure \ref{fig:birkhoff}: to minimize $L(\Xb)$ one needs to shift the blue hyperplane defined by $L(\Xb)=\mathrm{const}$ in the direction of its normal vector until the boundary of the underlying set is reached. If we restrict, however, the minimization of $L(\Xb)$ in \eqref{eqn:unconstr_eq} by one (or several) linear constraint $\mathcal{A}(\mathbf{X})=c \in \RR$ (in our case $\DEwH(X)=\delta$), in general only the following inequality holds
\begin{equation} \label{eqn:constr_ineq}
    \begin{split}
        \min_{\mathbf{X}\in \mathcal{U}_D, \, \mathcal{A}(\mathbf{X})=c} L(\mathbf{X}) \geq \min_{\mathbf{X}\in \mathcal{B}_D, \, \mathcal{A}(\mathbf{X})=c} L(\mathbf{X}).
    \end{split}
\end{equation}
This crucial distinction between unconstrained and constrained minimization is illustrated in Figure \ref{fig:birkhoff}.
We can see that in the latter case (right panel), the minimizers of $L(\Xb)$ over $\mathcal{B}_D$ and $\mathcal{U}_D$ do not necessarily coincide anymore.

\begin{figure}[t]
    \centering
    \includegraphics[scale=0.19]{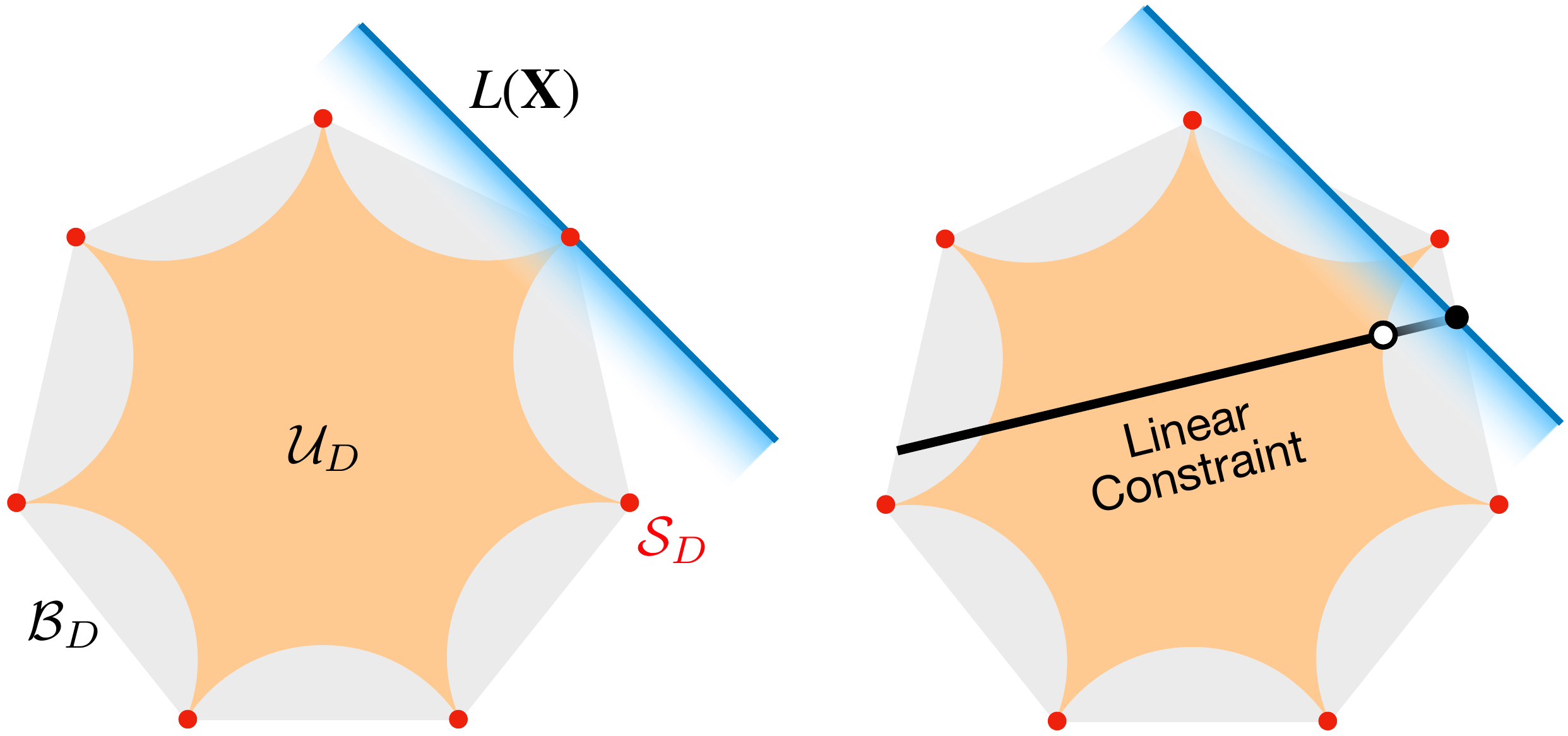}
    \caption{Schematic illustration of unconstrained (left) versus constrained minimization (right) of a linear function $L(\mathbf{X})$ (blue line) over the `grey' Birkhoff polytope $\mathcal{B}_D$. The `orange' subset $\mathcal{U}_D$ of unistochastic matrices and the discrete `red' set $\mathcal{S}_D$ of extremal elements of $\mathcal{B}_D$ (permutation matrices) are also presented.}
    \label{fig:birkhoff}
\end{figure}

In the particular case where $L(\mathbf{X})\equiv l(\mathbf{X}{\wb})$ with some potential constraint $\mathcal{A}(\mathbf{X})\equiv a(\mathbf{X}{\wb})=c$ for some vector $\wb\in \mathbb{R}^D$, the minimization of $L(\Xb)$ over $\mathcal{U}_D$ and $\mathcal{B}_D$, respectively, can be simplified to a minimization of $l(\wtb)$ over the simpler permutohedron $P({\wb})$ of ${\wb}$\footnote{When $\wb$ contains repeated elements, as in some of our cases where some entries of $\wb$ can be zero, $P(\wb)$ is not in a strict sense a permutohedron. When $\wb$ contains repeated elements, $P(\wb)$ no longer has the correct number of vertices and edges.}, given by~\cite{postnikov2009permutohedra}
\begin{equation}\label{perm}
    P({\wb}) \equiv \{\wtb\in\mathbb{R}^D\,|\,\exists \mathbf{X} \in \mathcal{B}_D:\,\wtb=\mathbf{X}{\wb}\}.
\end{equation}
The set $P({\wb})$ inherits the convex properties of $\mathcal{B}_D$, but its dimensionality is much lower.
We present in Figure \ref{fig:b3vsp3} a comparison between the Birkhoff polytope $B_3$,
and the permutohedron of a vector ${\wb}\in \RR^3$. Since the extremal points of $B_3$ are permutation matrices, the extremal elements of $P({\wb})$ follow as the permutations of the vector ${\wb}$. Since all permutations of the entries of a vector in three dimension are cyclic,
all vertices of $B_3$ are connected by an edge. However, in general, two extremal points (vertices) of $P({\wb})$ are connected if and only if they can be transformed into one another by an adjacent transposition $S_{i,i+1}$~\cite{postnikov2009permutohedra}. Consequently, all edges of $P({\wb})$ consist of vectors of the form $\tilde{{\wb}} = \mathbf{X}'{\wb}$, where $\mathbf{X}'$ are unistochastic, and the following convex relaxation is exact
\begin{equation}
    \begin{split}
        \min_{\mathbf{X}\in \mathcal{U}_D,\,\mathcal{A}(\mathbf{X})=c}L(\mathbf{X}) &= \min_{\mathbf{X}\in \mathcal{B}_D,\, \mathcal{A}(\mathbf{X})=c} L(\mathbf{X})
        \\
        &= \min_{\tilde{{\wb}}\in P({\wb}),\,a(\tilde{{\wb}})=c} l(\tilde{{\wb}}).
    \end{split}
\end{equation}
\begin{figure}[t]
    \centering
    \includegraphics[scale=0.145]{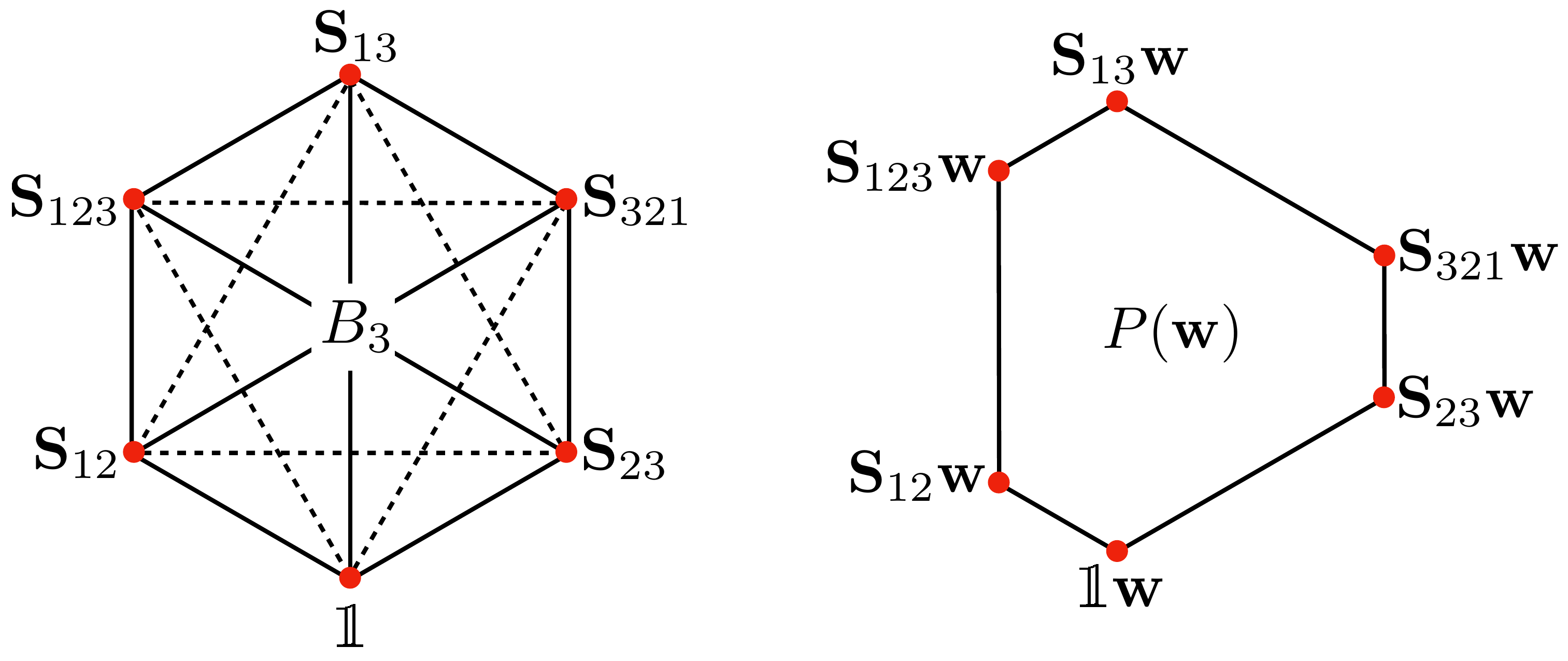}
    \caption{Schematic illustration of the four-dimensional Birkhoff polytope $B_3$ (left) and the two-dimensional permutohedron $P({\wb})$ of a generic vector ${\wb}\in\mathbb{R}^3$ (right). Vertices are presented as red dots and cyclic notation for the six permutations is used. Edges are depicted by solid or dashed lines. On the left panel, solid and dashed lines are used to indicate different lengths of edges. }
    \label{fig:b3vsp3}
\end{figure}

We conclude this section by providing a proof of the GOK variational principle, Theorem \ref{thrm:gok}. Its simplicity, also compared to the original proof in Ref.~~\cite{GOK1988}, already demonstrates how effective the tools from convex analysis are that we have introduced in this section.
\begin{proof}
    We know that $\mathrm{Tr}[\tilde{\rho}_{{\wb}}\hat{H}] = {\wb}^{\mathrm{T}} \mathbf{X}^{\mathrm{T}} \mathbf{E}$ is linear in $\mathbf{X}$, where $\Xb \equiv (|\langle\Psi_k|\tilde{\Psi}_l\rangle|^2)_{k,l=0}^{D-1}\in \mathcal{U}_D$.
   Therefore, according to Eq.~\eqref{eqn:unconstr_eq} and since $\mathcal{U}_D=\mathcal{U}_D^{\mathrm{T}}$ is invariant under transposition, we can relax the problem of minimizing $\mathrm{Tr}[\tilde\rho_{{\wb}}\hat{H}]$ over the set $\mathcal{U}_D$ of unistochastic matrices $\mathbf{X}$ to the set $\mathcal{S}_D$ of permutation matrices $\mathbf{P}$. This yields
    \begin{equation}
        \begin{split}
            \min_{\rtw} \mathrm{Tr}[\tilde\rho_{{\wb}}\hat{H}] &= \min_{\mathbf{X}\in \mathcal{U}_D} {\wb}^{\mathrm{T}} \mathbf{X}^{\mathrm{T}} \mathbf{E}
            \\
            &=\min_{\mathbf{P} \in \mathcal{S}_D} {\wb}^{\mathrm{T}} \mathbf{P} \mathbf{E}
            \\
            &= {\wb}^\downarrow\cdot \mathbf{E}^\uparrow.
        \end{split}
    \end{equation}
In the last step we applied the rearrangement inequality~\cite{hardy1952inequalities}.
\end{proof}

\section{Error Bounds on quantum states and eigenenergies} \label{sec:bounds}

In this section, we present our main results, namely according to the `Scientific Problem' on p.~4 the lower and upper bounds \eqref{thmineq} on the errors of the ensemble state, and the individual eigenstates and eigenenergies as function of the ensemble error \eqref{rhoerror}.

First, we notice that the condition of a fixed error $\DEwH= \delta > 0$ in the ensemble energy defines a linear constraint, which is geometrically
represented as an hyperplane $\mathbb{A}(\delta)$ in $\mathcal{B}_D$, $P({\wb})$, and $P(\mathbf{E})$.
Consequently, there is a straightforward strategy for determining the lower and upper bounds of the \emph{linear} functions \eqref{eqn:deltarho}-\eqref{eqn:deltaEk} according to \eqref{dbounds}: Following Sec.~\ref{sec:poly}, one just needs to evaluate these functions at the vertices of the convex polytope obtained by intersecting $\mathbb{A}(\delta)$ with the corresponding original polytope $\mathcal{B}_D$, $P({\wb})$, and $P(\mathbf{E})$, respectively. The lowest/highest value that is found is nothing else than the sought-after bound $d_{\pm}^{(Q)}(\delta,\wb,\hat{H})$ for the corresponding quantity $Q$.
\begin{figure}[t]
    \centering
    \includegraphics[scale=0.21]{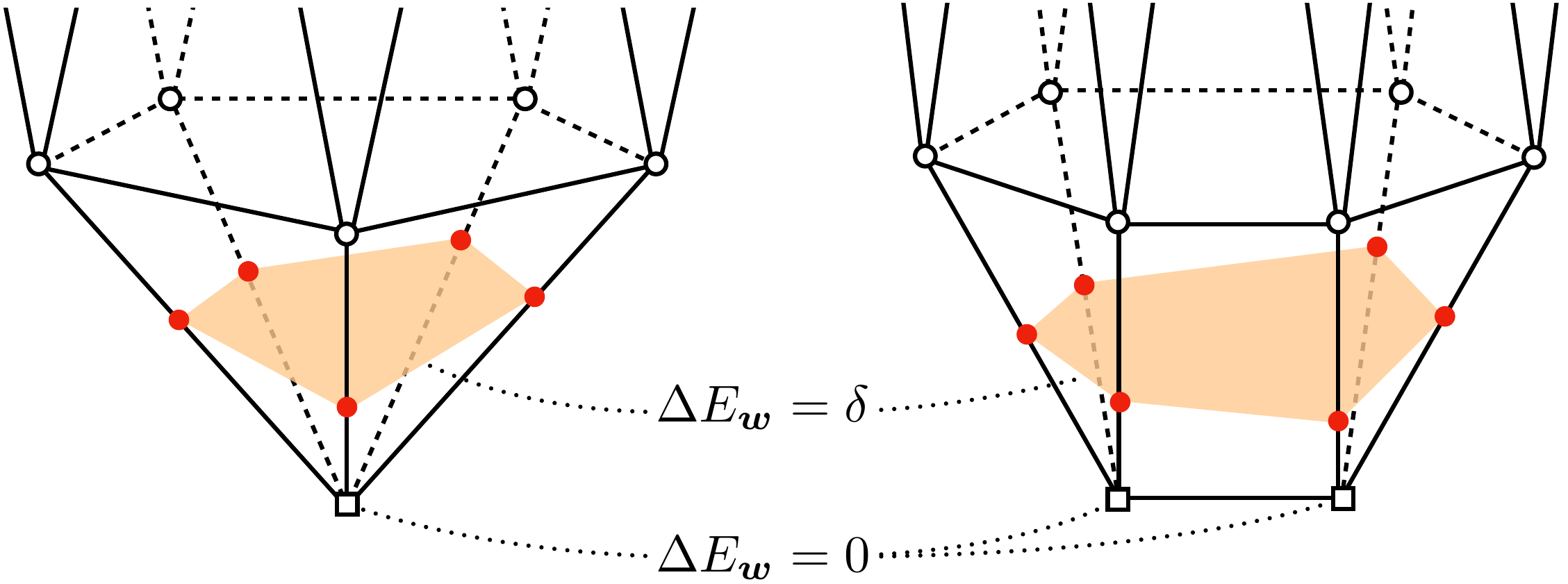}
    \caption{Linear constrained optimization on a polytope. Reference vertices ($\Delta E_{\wb}=0$) are represented as squares. The convex constraint domain ($\Delta{E}_{\wb}=\delta$) is shown as the orange area, whose vertices are red circles.
    On the left panel there is only one reference vertex. On the right panel we show the case when there are multiple reference vertices.}
    \label{fig:constraint_opt}
\end{figure}

In Figure \ref{fig:constraint_opt} we illustrate our strategy for determining the bounds \eqref{thmineq} according to \eqref{dbounds}.
The hyperplane $\mathbb{A}(\delta)$ defined by $\DEwH = \delta > 0$ intersects with the underlying polytope at the orange region.
Since according to the GOK variational principle the ensemble error $\DEwH$ is always non-negative we can distinguish the vertices of
of $\mathcal{B}_D$ and $P(\wb)$ and $P(\Eb)$, respectively, into two categories: The so-called reference vertices characterized by a vanishing error, $\DEwH =0$, and the other ones whose error is finite, $\DEwH >0$. Since polytopes have finitely many vertices, we can always find a $\delta$ sufficiently small such that only reference vertices obey $\DEwH  < \delta$.
When all entries in ${\wb}$ are distinct, there will always be just one reference vertex: In the Birkhoff polytope $\mathcal{B}_D$ it is the  identity matrix and in the permutohedra $P(\wb)$ and $P(\Eb)$ the untransformed vectors ${\wb}$ and $\mathbf{E}$, respectively.
When only $K < D$ entries of ${\wb}$ are positive and distinct (and the rest are zero), there will be several reference vertices.
To understand this, we point out that permuting the energy levels above the $K$-th eigenstate does not change the ensemble energy.

By anticipating some insights from the following derivations, the criterion of $\delta$ sufficiently small can be made precise. For this, and for later purposes we introduce two auxiliary functions:
\begin{eqnarray} \label{eqn:gG}
    g_{\wb,\Eb}&=&\min_{\substack{k<D-1\\w_k>w_{k+1}}}(w^\downarrow_k\!-\!w^\downarrow_{k+1})(E^\uparrow_{k+1}\!-\!E^\uparrow_k) \nonumber
    \\
    G_{\wb,\Eb}&=&(w^\downarrow_0-w^\downarrow_{D-1})(E^\uparrow_{D-1}-E^\uparrow_0).
\end{eqnarray}
The quantities $g_{\wb,\mathbf{E}}$ and $G_{\wb,\mathbf{E}}$ are precisely the minimal and maximal errors in the ensemble energy $E_{\wb}$ when two eigenstates in the exact ensemble $\rw$ are swapped.
Moreover, our assumption of $\delta$ sufficiently small translates into the condition $\delta \leq g_{{\wb},\mathbf{E}}$.
In that practically relevant regime, the derived bounds \eqref{dbounds} will simplify to $d_\pm^{(Q)}(\delta,\wb,\hat{H}) = d_\pm^{(Q)}(\wb,\hat{H}) \,\delta$, as a consequence of the linearity of the constraint function \eqref{eqn:deltaEw} and target functions \eqref{eqn:deltarho}-\eqref{eqn:deltaEk}.

The procedure for determining $d_\pm^{(Q)}(\wb,\hat{H})$ consists of three steps: (i) Identification of all edges of the underlying polytope $\mathcal{B}_D$, $P(\wb)$ or $P(\Eb)$ that intersects with $\mathbb{A}(\delta)$, (ii) computation of the vertices of the intersection (red vertices in Figure \ref{fig:constraint_opt}), and (iii) evaluation of the target function on all vertices of the intersection polytope and identification of its minimal and maximal value. In the following sections, we will determine $d_\pm^{(Q)}(\wb,\hat{H})$ for two important classes of $\wb$: (a) $\wb$ with $w_0 \!>\! w_1 \!>\! \cdots \!>\! w_{D-1}\geq 0$ which can be employed in practice only for smaller systems in order to compute the \emph{entire} energy spectrum, and for larger systems (b) $\wb$ with $w_0 \!>\! w_1 \!>\! \cdots \!>\! w_{K} \!=\! w_{K+1} \!=\! \cdots \!=\! w_{D-1} \!=\! 0$ for targeting the lowest $K$ eigenstates.

\subsection{Targeting the ensemble state} \label{sec:rho_accu}
Acting in accordance with the `Scientific Problem' on p.~4, the first quantity whose error range we quantify as function of the ensemble error $\DEwH$ is the ensemble state $\rw$. This leads to the GOK-analogue of the well-known estimate \eqref{eqn:RR_state} for the Rayleigh-Ritz principle:
\begin{thrm} \label{res:ens_state_err}
  Consider the `Scientific Problem' for a setting $(\mathcal{H},\hat{H},\wb)$ and recall \eqref{eqn:deltaEw}, \eqref{eqn:deltarho}. The errors $\deltarhow(\rtw)$ of the ensemble state and $\DEwH(\rtw)$ of the ensemble energy are universally related for all $\wb$-ensembles $\rtw$ according to
  \begin{equation}
        a_-(\wb,\Eb) \Delta{E}_{{\wb}} \leq \Delta\rho_{{\wb}} \leq  a_+(\wb,\Eb) \Delta{E}_{{\wb}},
    \end{equation}
    where for $w_0\!>\!w_1\!>\!\cdots\!>\!w_{D-1}$
    \begin{eqnarray}
        a_-(\wb,\Eb) &\equiv& 2\min_{k<D-1} \frac{w_k\!-\!w_{k+1}}{E_{k+1}\!-\!E_{k}} \nonumber \\
        a_+(\wb,\Eb) &\equiv& 2\max_{k<D-1} \frac{w_k\!-\!w_{k+1}}{E_{k+1}\!-\!E_{k}},
    \end{eqnarray}
    and for $w_0\!>\!w_1\!>\!\cdots\!>\!w_{K}\!=\!\cdots\!=\!w_{D-1}=0$ ($K \!<\! D\!-\!1$)
       \begin{equation}
       \begin{split}
        a_-(\wb,\Eb) \!&\equiv\! 2\min\Big\{\!\min_{k<K-1} \frac{w_k\!-\!w_{k+1}}{E_{k+1}\!-\!E_{k}}, 
        \\
        &\quad\quad\quad\quad\quad\frac{w_{K-1}}{E_{D-1}\!-\!E_{K-1}} \!\Big\}\nonumber
        \\
        a_+(\wb,\Eb) \!&\equiv\! 2\max_{k<K} \frac{w_k\!-\!w_{k+1}}{E_{k+1}\!-\!E_{k}}.
        \end{split}
    \end{equation}
\end{thrm}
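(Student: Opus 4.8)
### Proof Plan

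\textbf{Overall strategy.} The plan is to execute the three-step procedure outlined just above the statement, applied to the permutohedron $P(\wb)$ with target function $\deltarhow = 2\wb\cdot(\wb-\wtb)$ and constraint function $\DEwH = (\wtb-\wb)\cdot\Eb = \delta$. Since both functions are linear in $\wtb$, the constrained optimum over the convex polytope $P(\wb)$ is attained at a vertex of the intersection polytope $P(\wb)\cap\mathbb{A}(\delta)$. For $\delta \le g_{\wb,\Eb}$ the hyperplane $\mathbb{A}(\delta)$ is so close to the reference vertex/vertices (where $\DEwH=0$) that it only cuts the edges emanating from those reference vertices, so each vertex of the intersection polytope lies on one such edge; evaluating $\deltarhow$ there and minimizing/maximizing gives $d_\pm^{(\rho)}$, which by linearity takes the form $a_\pm(\wb,\Eb)\,\delta$.

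\textbf{Step 1: parametrize the relevant edges.} By the permutohedron structure recalled in Sec.~\ref{sec:poly}, two vertices of $P(\wb)$ are connected by an edge iff they differ by an adjacent transposition $S_{i,i+1}$. For the case $w_0>w_1>\cdots>w_{D-1}$ there is a single reference vertex, namely $\wb$ itself, and its incident edges correspond to swapping neighboring entries $w_k \leftrightarrow w_{k+1}$. Parametrizing a point on the $k$-th such edge as $\wtb(t) = \wb + t\,(w_{k+1}-w_k)(\vec{e}_k - \vec{e}_{k+1})$ for $t\in[0,1]$, I would compute
\begin{equation}
\DEwH = t\,(w_k - w_{k+1})(E_{k+1}-E_k), \qquad \deltarhow = 2t\,(w_k-w_{k+1})^2 .
\end{equation}
Eliminating $t$ yields $\deltarhow = \frac{2(w_k-w_{k+1})}{E_{k+1}-E_k}\,\DEwH$ along that edge. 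The intersection of $\mathbb{A}(\delta)$ with $P(\wb)$, for $\delta \le g_{\wb,\Eb}$, has exactly one vertex on each of these $D-1$ edges (this is the content of $g_{\wb,\Eb}$ being the smallest $\DEwH$ at any non-reference vertex). Taking the min and max of the edge slopes over $k<D-1$ produces precisely $a_\mp(\wb,\Eb)$, and the bound follows.

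\textbf{Step 2: the case with vanishing weights.} When $w_0>\cdots>w_{K-1}>w_K=\cdots=w_{D-1}=0$, permuting the degenerate (zero) entries among themselves does not change $\DEwH$, so there is a whole face of reference vertices rather than a single one. The edges of $P(\wb)$ that leave this reference face and are cut by $\mathbb{A}(\delta)$ come in two types: (a) adjacent swaps $w_k\leftrightarrow w_{k+1}$ with $k<K-1$, handled exactly as above, giving slope $\frac{2(w_k-w_{k+1})}{E_{k+1}-E_k}$; and (b) swaps of the last positive weight $w_{K-1}$ with a zero entry. Here I must be careful: because of the degeneracy, moving $w_{K-1}$ ``past'' the zeros effectively lets it land in any slot $K-1,K,\dots,D-1$, and the edge that minimizes the slope is the one that pairs $w_{K-1}$ against the \emph{largest} available energy gap, i.e. $E_{D-1}-E_{K-1}$, giving slope $\frac{2 w_{K-1}}{E_{D-1}-E_{K-1}}$; conversely the one maximizing the slope pairs it against the smallest gap $E_K - E_{K-1}$, which is the $k=K-1$ term in the max. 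Combining the two edge families yields the stated asymmetric $a_\pm$.

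\textbf{Main obstacle.} The routine part is the edge-wise elimination of $t$; the delicate part is Step 2 --- rigorously justifying, in the presence of repeated (zero) weights, that (i) $P(\wb)$ is still a polytope whose relevant boundary structure near the reference face is governed by adjacent transpositions in the footnoted degenerate sense, and (ii) for $\delta\le g_{\wb,\Eb}$ every vertex of $P(\wb)\cap\mathbb{A}(\delta)$ indeed sits on one of the enumerated edges and no ``interior'' facet of $P(\wb)$ contributes. This requires arguing that $g_{\wb,\Eb}$, defined as the minimum of $(w_k^\downarrow-w_{k+1}^\downarrow)(E_{k+1}^\uparrow-E_k^\uparrow)$ over indices with $w_k>w_{k+1}$, is a lower bound for $\DEwH$ evaluated at every non-reference vertex of $P(\wb)$ --- which in turn follows from the rearrangement inequality applied to $\wtb\cdot\Eb$ for permutations $\wtb$ of $\wb$, together with a short argument that any non-reference permutation differs from a reference one by at least one ``productive'' adjacent transposition. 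Once this is in place, the linearity of both functions closes the argument and the bound $a_-\Delta E_{\wb}\le\Delta\rho_{\wb}\le a_+\Delta E_{\wb}$ holds for all $\rtw\in\Dw$ with $\DEwH\le g_{\wb,\Eb}$, and extends to all $\rtw$ by noting that for larger $\delta$ the bounds can only improve in the needed direction (or by direct global convexity of $P(\wb)$, since any chord from the reference vertex has slope between the extreme edge slopes).
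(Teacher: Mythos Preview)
Your proposal is correct and follows the same route as the paper: reduce to the permutohedron $P(\wb)$, enumerate the edges leaving the reference vertex (adjacent transpositions $S_{k,k+1}$ in the strictly decreasing case; adjacent transpositions among the positive weights together with the swaps $w_{K-1}\leftrightarrow 0$ into any of the zero slots $K,\ldots,D-1$ in the degenerate case), parametrize each edge linearly, and read off $a_\pm$ as the extremal slopes $2(w_k-w_l)/(E_l-E_k)$ over the relevant index set. One minor correction: in $P(\wb)$ the reference locus is still the single point $\wb$ even when $w_K=\cdots=w_{D-1}=0$ (permuting zeros among themselves yields the \emph{same} vector, not distinct vertices), so there is no ``reference face'' here --- the multiple-reference-vertex picture from the main text pertains to $\mathcal{B}_D$ and $P(\Eb)$, not to $P(\wb)$ --- but this does not affect your edge enumeration or the resulting bounds.
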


\textit{Sketch of proof.} As explained in Sec.~\ref{sec:poly}, the specific linear form of \eqref{eqn:deltaEw} and \eqref{eqn:deltarho} allows us to transform the underlying optimization problem \eqref{dbounds} into one on the permutohedron $P({\wb})$.
In the regime where $\Delta{E}_{\wb} = \delta < g_{\wb,\mathbf{E}}$, there is only one reference vertex, namely ${\wb}$.
If ${\wb}$ is strictly decreasing ($K=D$), then we know from Sec.~\ref{sec:poly} that the only neighboring vertices are $\mathbf{S}_{i,i+1}{\wb}$, and the connecting edges are convex combinations between ${\wb}$ and $\mathbf{S}_{i,i+1}{\wb}$.
This is why the lower and upper bounds involve only differences between neighboring entries in ${\wb}$ and $\mathbf{E}$.
From here one can easily work out the intersection between $\Delta{E}_{\wb}=\delta$ and $P({\wb})$.
If ${\wb}$ instead fulfills the second requirement ($K<D-1$), then neighboring vertices to $\wb$ include not only $\mathbf{S}_{i,i+1}{\wb}$, but also $\mathbf{S}_{K\!-\!1,k+1}{\wb}$ where $k \geq K$. The detailed proof of Theorem \ref{res:ens_state_err} is reserved to the Appendix.

\begin{figure}[t]
    \centering
    \includegraphics[scale=0.3]{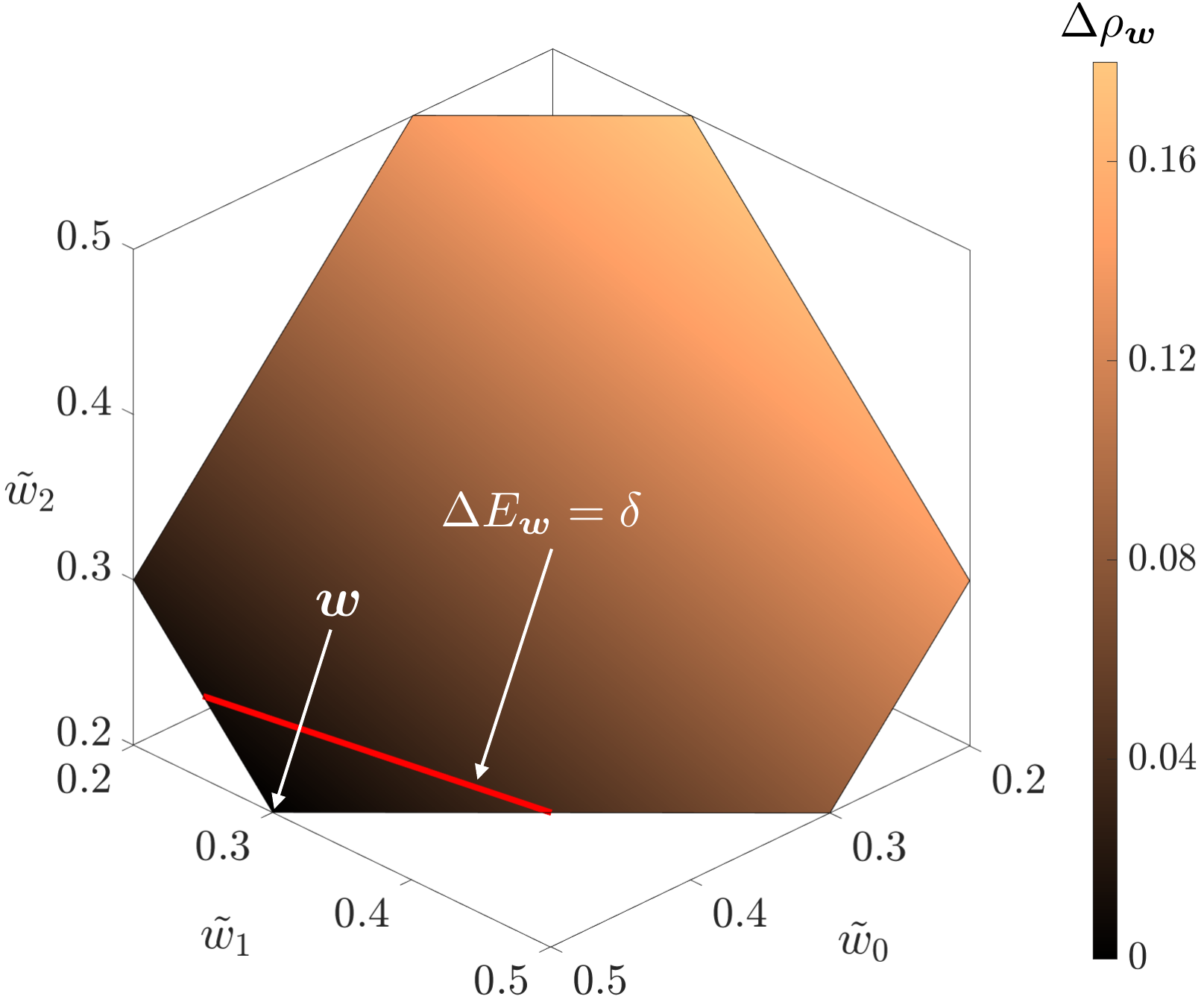}
    \caption{Permutohedron $P({\wb})$ of ${\wb}=(0.5,0.3,0.2)$. The face color encodes the value of $\Delta\rho_{\wb}(\tilde{{\wb}})$  according to \eqref{eqn:deltarho}.
    The red line is the intersection between $P({\wb})$ and the hyperplane defined by $\Delta{E}_{\wb}(\tilde{{\wb}})=\delta = 0.1$ where $\mathbf{E}=(-1,0,2)$.}
    \label{fig:permutahedron}
\end{figure}

We illustrate in Figure \ref{fig:permutahedron} the proof of Theorem \ref{res:ens_state_err} with an example Hamiltonian in three dimensions.
For the concrete plot, we chose the energy spectrum $\mathbf{E}=(-1,0,2)$ and the ensemble weights ${\wb} = (0.5,0.3,0.2)$.
The hexagonal region in Figure \ref{fig:permutahedron} is the permutohedron $P({\wb})$ with elements $\tilde{{\wb}}$.
The red solid segment represents the intersection between $P({\wb})$ and the constraint $\Delta{E}_{{\wb}} = \delta = 0.1$.
The error of the ensemble state $\Delta{\rho}_{\wb}$, which is encoded in color, has a global minimum at the reference vertex ${\wb}$ (bottom left corner).
The minimum and maximum of $\Delta\rho_{\wb}$ within the red segment are obtained at the two end points, and the inequalities provided by Theorem \ref{res:ens_state_err} are explicitly saturated.

Knowing the accuracy of the ensemble state is of central importance, as it allows us to make reliable prediction on the accuracy of the expectation value of any observable $\hat{A}$ with respect to the optimized state.
\begin{coro}
    The error $\Delta A_{{\wb}} \equiv \mathrm{Tr}[(\tilde{\rho}_{{\wb}}-\rho_{{\wb}})\hat{A}]$  of the expectation of an observable $\hat{A}$ satisfies for all $\wb$-ensemble states $\rtw$
    \begin{equation}\label{obsAbound}
        |\Delta A_{{\wb}}| \leq \|\hat{A}\|_{\mathrm{HS}} \sqrt{a_+(\wb,\Eb) \Delta{E}_{{\wb}} }.
    \end{equation}
\end{coro}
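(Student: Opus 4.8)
The plan is to bound the observable error $|\Delta A_{\wb}|$ by the Hilbert--Schmidt distance $\|\rtw-\rw\|_{\mathrm{HS}}$ via Cauchy--Schwarz, and then feed in the upper bound on $\Delta\rho_{\wb}$ just established in Theorem \ref{res:ens_state_err}. Concretely, first I would write $\Delta A_{\wb} = \mathrm{Tr}[(\rtw-\rw)\hat{A}]$ and apply the Cauchy--Schwarz inequality for the Hilbert--Schmidt inner product $\langle \hat{B},\hat{C}\rangle_{\mathrm{HS}} \equiv \mathrm{Tr}[\hat{B}^\dagger\hat{C}]$. Since $\rtw-\rw$ is Hermitian, this gives
\begin{equation}
|\Delta A_{\wb}| = |\mathrm{Tr}[(\rtw-\rw)\hat{A}]| \leq \|\rtw-\rw\|_{\mathrm{HS}}\,\|\hat{A}\|_{\mathrm{HS}}.
\end{equation}

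Next I would recall the definition $\deltarhow(\rtw) \equiv \|\rtw-\rw\|_{\mathrm{HS}}^2$ from \eqref{eqn:deltarho}, so that $\|\rtw-\rw\|_{\mathrm{HS}} = \sqrt{\deltarhow(\rtw)}$. Inserting the upper bound $\deltarhow \leq a_+(\wb,\Eb)\,\Delta E_{\wb}$ from Theorem \ref{res:ens_state_err} and using monotonicity of the square root then yields
\begin{equation}
|\Delta A_{\wb}| \leq \|\hat{A}\|_{\mathrm{HS}}\,\sqrt{a_+(\wb,\Eb)\,\Delta E_{\wb}},
\end{equation}
which is exactly \eqref{obsAbound}. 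Note that the argument is valid for every $\wb$-ensemble state $\rtw \in \Dw$, since Theorem \ref{res:ens_state_err} holds universally; no smallness assumption on $\Delta E_{\wb}$ is needed here because the upper bound $\deltarhow \leq a_+ \Delta E_{\wb}$ is the globally valid one.

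There is no real obstacle in this corollary: it is a two-line consequence of Cauchy--Schwarz plus the previous theorem. The only point requiring a moment of care is that $\|\hat{A}\|_{\mathrm{HS}}$ rather than the operator norm appears, which is simply the price of using the Hilbert--Schmidt pairing; one could sharpen the constant by splitting off the traceless part of $\hat{A}$ (since $\mathrm{Tr}[(\rtw-\rw)\hat{\openone}]=0$), replacing $\|\hat{A}\|_{\mathrm{HS}}$ by $\|\hat{A}-\tfrac{\mathrm{Tr}\hat{A}}{D}\hat{\openone}\|_{\mathrm{HS}}$, but this refinement is not needed for the stated bound. If a sharper dependence were desired one could instead use the trace-norm/operator-norm Hölder pair, $|\mathrm{Tr}[(\rtw-\rw)\hat{A}]| \leq \|\rtw-\rw\|_1\|\hat{A}\|_\infty$; however, controlling $\|\rtw-\rw\|_1$ in terms of $\Delta E_{\wb}$ would require a separate estimate, so the Hilbert--Schmidt route is the natural one given what has already been proven.
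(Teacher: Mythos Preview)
Your proof is correct and follows exactly the paper's approach: apply the Cauchy--Schwarz inequality for the Hilbert--Schmidt inner product to $\mathrm{Tr}[(\rtw-\rw)\hat{A}]$ and then insert the upper bound $\Delta\rho_{\wb}\leq a_+(\wb,\Eb)\,\Delta E_{\wb}$ from Theorem~\ref{res:ens_state_err}. Your additional remarks on sharpening via the traceless part of $\hat{A}$ or the trace-norm/operator-norm H\"older pair are correct side observations but go beyond what the paper states or proves.
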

\begin{proof}
    Since $\langle \mathbf{X}, \mathbf{Y}\rangle_\mathrm{HS}\equiv \mathrm{Tr}[\mathbf{X}^\dagger\mathbf{Y}]$ defines an inner product, the Cauchy-Schwarz inequality applies and yields $|\mathrm{Tr}[\mathbf{X}^\dagger\mathbf{Y}]|\leq \|\mathbf{X}\|_\mathrm{HS}\|\mathbf{Y}\|_{\mathrm{HS}}$, where $\|\mathbf{X}\|_\mathrm{HS} \equiv \sqrt{\langle \mathbf{X}, \mathbf{X}\rangle_\mathrm{HS}}$.
     With $\mathbf{X}=\hat{A}$ and $\mathbf{Y} = \tilde{\rho}_{\wb}-\rho_{\wb}$, \eqref{obsAbound} then follows directly from Theorem \ref{res:ens_state_err}.
\end{proof}

\subsection{Targeting the eigenstates} \label{sec:Psik_accu}
Next, we quantify the range of the errors of the individual eigenstates.
We remark that for $k\geq K$ the error $\Delta\Psi_k$ cannot be estimated by $\Delta{E}_{\wb}$ anymore since $\ket{\tilde{\Psi}_k}$ does not enter the ensemble $\rtw$ as a consequence of $w_{k\geq K}=0$.
For example, $\Delta\Psi_K$ can take any value between 0 and 1 even if $\Delta{E}_{\wb}=0$, as the energy of the $K$-th eigenstate does not contribute to the ensemble energy.
Conversely, in order to extract accurate information about the target excited states,
the corresponding weights must be non-zero and distinct from others.

\begin{thrm}\label{res:psik_error}
  Consider the `Scientific Problem' for a setting $(\mathcal{H},\hat{H},\wb)$ and recall \eqref{eqn:deltaEw}, \eqref{eqn:deltaPsik}. For any $k$ with non-degenerate $w_{k}$, the error $\deltaPsik(\rtw)$ of the $k$-th eigenstate and $\DEwH(\rtw)$ of the ensemble energy are universally related for all $\wb$-ensembles $\rtw$ according to
  \begin{equation}
        0 \leq \deltaPsik \leq  b_+^{(k)}(\wb,\Eb) \Delta{E}_{{\wb}},
    \end{equation}
    where
    \begin{eqnarray}
        b_+^{(k)}(\wb,\Eb) &\equiv&
        \begin{cases}
            \frac{1}{(w_0-w_1)(E_1-E_0)}, \quad &k = 0
            \\
            \frac{1}{\min\{t_{k-1},t_k\}}\,,\quad &k>0
        \end{cases}
    \end{eqnarray}
     and $t_k \equiv (w_k\!-\!w_{k+1})(E_{k+1}\!-\!E_{k})$.
\end{thrm}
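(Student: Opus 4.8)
The plan is to follow the three-step recipe laid out in Sec.~\ref{sec:bounds}, but now applied to the target function $\Delta\Psi_k(\rtw) = 1 - X_{kk}$ from \eqref{eqn:deltaPsik}, which is linear in the matrix $\mathbf{X}$ itself rather than in $\wtb$ or $\Ebt$. Because $X_{kk}$ is \emph{not} expressible as $l(\mathbf{X}\wb)$ or $l(\mathbf{X}^{\mathrm{T}}\Eb)$, we cannot pass to the lower-dimensional permutohedra; the optimization must be carried out on the Birkhoff polytope $\mathcal{B}_D$ (or $\mathcal{U}_D$) intersected with the hyperplane $\mathbb{A}(\delta)$ defined by $\DEwH(\rtw) = \wb\cdot(\mathbf{X}^{\mathrm{T}}-\openone)\Eb = \delta$. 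First I would establish the lower bound $\Delta\Psi_k \geq 0$: this is immediate since $X_{kk} = |\langle\Psi_k|\tilde\Psi_k\rangle|^2 \leq 1$ by Cauchy-Schwarz, and it is saturated at the reference vertex $\mathbf{X}=\openone$, which lies on $\mathbb{A}(\delta)$ only in the limit $\delta\to 0$ — but more usefully, any edge of $\mathcal{B}_D$ emanating from $\openone$ towards a transposition vertex $\mathbf{S}_{i,i+1}$ with $i,i+1 \neq k$ keeps $X_{kk}=1$ while increasing $\DEwH$, so $\Delta\Psi_k=0$ is attained for small $\delta>0$ as long as $k$ is not the unique index forced to move.

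For the upper bound, the key step is to identify which vertices of the intersection polytope $\mathcal{B}_D \cap \mathbb{A}(\delta)$ can make $X_{kk}$ small. Working in the regime $\delta \leq g_{\wb,\Eb}$, only the reference vertex (or vertices) $\openone$ and its polytope-neighbors matter. The neighbors of $\openone$ in $\mathcal{B}_D$ are the transposition matrices $\mathbf{S}_{ij}$ (differing from $\openone$ by a single 2-cycle), and along the edge from $\openone$ to $\mathbf{S}_{ij}$ one has $\mathbf{X}(\lambda) = (1-\lambda)\openone + \lambda\mathbf{S}_{ij}$, on which $\DEwH = \lambda(w_i-w_j)(E_j-E_i)$ (for $i<j$, taking $w$ decreasing and $E$ increasing this is $\lambda\,|w_i-w_j|(E_j-E_i)\geq 0$) and $X_{kk} = 1-\lambda$ if $k\in\{i,j\}$, else $X_{kk}=1$. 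So the only edges that decrease $X_{kk}$ are those involving index $k$, i.e.\ the transpositions $\mathbf{S}_{k,k+1}$ and $\mathbf{S}_{k-1,k}$ (adjacent transpositions give the smallest energy cost per unit of $\Delta\Psi_k$; non-adjacent ones cost strictly more). On the edge to $\mathbf{S}_{k,k+1}$: $\Delta\Psi_k = \lambda = \DEwH / t_k$; on the edge to $\mathbf{S}_{k-1,k}$: $\Delta\Psi_k = \lambda = \DEwH / t_{k-1}$. The constrained maximum of $\Delta\Psi_k$ at energy budget $\delta$ is therefore $\delta/\min\{t_{k-1},t_k\}$, giving $b_+^{(k)} = 1/\min\{t_{k-1},t_k\}$ for $k>0$, and for $k=0$ there is only the single edge to $\mathbf{S}_{0,1}$, yielding $b_+^{(0)} = 1/t_0 = 1/[(w_0-w_1)(E_1-E_0)]$.

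The main obstacle — and the part that needs genuine care rather than the heuristic edge-counting above — is justifying rigorously that the maximum of the linear function $1-X_{kk}$ over the \emph{intersection} polytope $\mathcal{B}_D\cap\mathbb{A}(\delta)$ is attained at a vertex lying on one of these two distinguished edges, and that passing from $\mathcal{B}_D$ to the physically relevant $\mathcal{U}_D$ does not change the answer. For the former, one argues that for $\delta\leq g_{\wb,\Eb}$ every vertex of the intersection polytope lies on an edge of $\mathcal{B}_D$ connecting the reference vertex $\openone$ to one of its neighbors (this is the content of the "$\delta$ sufficiently small" analysis, using that $g_{\wb,\Eb}$ is the smallest $\DEwH$-value among adjacent-transposition neighbors and $G_{\wb,\Eb}$ the largest), then checks that among all such edges the ones through $k$ give the largest $\Delta\Psi_k$ per unit energy, and finally that moving further along combinations still on the hyperplane cannot help because $1-X_{kk}$ is linear. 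For the $\mathcal{U}_D$-vs-$\mathcal{B}_D$ equivalence one invokes, as in Sec.~\ref{sec:poly}, that the relevant edges of the intersection region consist of unistochastic matrices (convex combinations $(1-\lambda)\openone+\lambda\mathbf{S}_{ij}$ are unistochastic, being themselves images of $|U|^2$ for an appropriate block-rotation $U$), so the convex relaxation is exact on exactly these edges. I would defer the full vertex enumeration and the case $w_0>\cdots>w_{K}=\cdots=0$ bookkeeping (where extra neighbors $\mathbf{S}_{K-1,k+1}$ with $k\geq K$ appear, but these do not touch index $k<K$ in a way that beats $t_{k-1},t_k$) to the Appendix, mirroring the treatment of Theorem~\ref{res:ens_state_err}.
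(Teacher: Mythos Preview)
Your approach mirrors the paper's: work on the Birkhoff polytope rather than a permutohedron, locate the reference vertex $\openone$ and its neighbours, and maximise $\Delta\Psi_k/\Delta E_{\wb}$ over the edges emanating from $\openone$ that involve index $k$. The lower bound, the $\mathcal{U}_D$-versus-$\mathcal{B}_D$ discussion via block (Jacobi) rotations, and the case bookkeeping are all in line with what the paper does.

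There is, however, one factual error that leaves a genuine gap. You assert that ``the neighbors of $\openone$ in $\mathcal{B}_D$ are the transposition matrices $\mathbf{S}_{ij}$ (differing from $\openone$ by a single 2-cycle)''. This is false: two permutation vertices of $\mathcal{B}_D$ are joined by an edge if and only if they differ by a single cycle of \emph{arbitrary} length (cf.\ the Brualdi reference cited in Sec.~\ref{sec:poly}). Hence every cyclic permutation --- not only the transpositions --- is a neighbour of $\openone$, and for $\delta<g_{\wb,\Eb}$ the intersection polytope $\mathcal{B}_D\cap\mathbb{A}(\delta)$ has additional vertices sitting on edges from $\openone$ to $3$-cycles, $4$-cycles, etc., that pass through index $k$. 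Your edge enumeration misses all of these, so the argument as written does not exclude the possibility that some longer cycle $C\ni k$ satisfies $\Delta E_{\wb}(\mathbf{S}_C)<\min\{t_{k-1},t_k\}$, which would strictly raise the claimed upper bound. The paper's proof correctly identifies all single-cycle permutations as the relevant $(+)$-vertices and then states (tersely) that among those touching $k$ the minimum of $\Delta E_{\wb}$ is $\min\{t_{k-1},t_k\}$; to make your version complete you must add this verification for cycles of length $\geq 3$ (the ingredients are essentially those of Lemma~\ref{lemma}).
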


Theorem \ref{res:psik_error} would directly yield a lower and upper bound on the sum $\deltaPsi= \sum_{k=0}^{K-1} \deltaPsik$ of various individual errors $\deltaPsik$, yet they will not be tight. Instead, the optimal bounds on $\deltaPsi$ is given by the following theorem. To formulate it in a compact way and for subsequent considerations we recall the Heaviside step function $\Theta(x)$ defined as
\begin{equation}\label{Heaviside}
    \Theta(x) = \begin{cases}
        1, \quad &x>0
        \\
        1/2,\quad &x=0
        \\
        0, \quad &x < 0
    \end{cases}.
\end{equation}

\begin{thrm}\label{res:eigenstate_error}
Consider the `Scientific Problem' for a setting $(\mathcal{H},\hat{H},\wb)$ and recall \eqref{eqn:deltaEw}, \eqref{eqn:deltaPsik}. Then, the sum 
of the errors of the targeted eigenstates and the error $\DEwH(\rtw)$ of the ensemble energy are universally related for all $\wb$-ensembles $\rtw$ in the following way: When $w_0 \!>\! w_1 \!>\! \cdots \!>\!w_{D-1}$, one has
\begin{equation}
    \begin{split}
    \frac{2\Delta{E}_{\bm{w}}}{G_{\bm{w},\mathbf{E}}}  \leq \sum_{k=0}^{D-1}\Delta\Psi_k \leq \frac{2\Delta{E}_{\bm{w}}}{g_{\bm{w},\mathbf{E}}},
    \end{split}
\end{equation}
where $g_{\bm{w},\mathbf{E}}$ and $G_{\bm{w},\mathbf{E}}$ are defined in \eqref{eqn:gG}.
When $w_0\!>\!w_1\!>\!\cdots\!>\!w_{K} \!=\! \cdots \!=\!  w_{D-1}=0$ for some $K\!<\!D\!-\!1$ one has
\begin{equation}
\begin{split}
    \frac{\Delta{E}_{\bm{w}}}{G_{\bm{w},\mathbf{E}}} \!&\leq\! \sum_{k=0}^{K-1}\!\Delta\!\Psi_k 
    \\
    \!&\leq\!
    \max_{k<K}\! \left\{\!\frac{2\,\Theta(K\!-\!k\!-\!1)}{(w_{k}\!-\!w_{k+1})(E_{k+1}\!-\!E_k)}\!\right\}\Delta{E}_{\wb}.
\end{split}
\end{equation}
\end{thrm}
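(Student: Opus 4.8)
The plan is to run the three-step polytope procedure outlined in Sec.~\ref{sec:bounds} for the target function $\deltaPsi(\rtw)=\sum_{k}\Delta\Psi_k = \sum_{k}(1-X_{kk}) = D - \tr[\Xb]$ under the constraint $\DEwH(\rtw)=\delta$. The key observation is that $\deltaPsi$, unlike $\deltarhow$ or the individual $\Delta\Psi_k$, depends on the \emph{diagonal} of $\Xb$ and hence not on $\wtb=\Xb\wb$ alone; so we must optimize over the Birkhoff polytope $\mathcal{B}_D$ itself (using \eqref{eqn:unconstr_eq}--\eqref{eqn:constr_ineq} and the fact that $\mathcal{U}_D$ and $\mathcal{B}_D$ share the same edges, hence the convex relaxation on edges intersecting $\mathbb{A}(\delta)$ is exact), rather than on a lower-dimensional permutohedron. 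For $\delta \le g_{\wb,\Eb}$ the constraint hyperplane separates only reference vertices from non-reference ones, so every vertex of the intersection polytope lies on an edge of $\mathcal{B}_D$ joining a reference vertex to a non-reference one, and the bounds $d_\pm^{(\deltaPsi)}$ are obtained by evaluating $\deltaPsi$ at these intersection vertices.

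Concretely, I would proceed as follows. First, treat the strictly-decreasing case $w_0>\cdots>w_{D-1}$: here the only reference vertex is the identity $\openone$, whose non-reference neighbours are the adjacent transpositions $\mathbf{S}_{i,i+1}$, connected by edges $\Xb(\lambda)=(1-\lambda)\openone+\lambda\mathbf{S}_{i,i+1}$. Along this edge one computes $\DEwH = \lambda(w_i-w_{i+1})(E_{i+1}-E_i)=\lambda t_i$, so the constraint $\DEwH=\delta$ fixes $\lambda=\delta/t_i$; and $\deltaPsi = D-\tr[\Xb(\lambda)] = 2\lambda = 2\delta/t_i$ (since swapping levels $i,i+1$ costs exactly $2\lambda$ in the trace). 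Minimising/maximising $2\delta/t_i$ over the edges that actually reach $\mathbb{A}(\delta)$, i.e.\ those with $t_i\le g_{\wb,\Eb}\le\delta$ — wait, more carefully, over all $i$ with $t_i>0$, noting $\min_i t_i = g_{\wb,\Eb}$ and that the maximal swap cost $G_{\wb,\Eb}$ enters through the non-adjacent transposition edges $\openone\to\mathbf{S}_{0,D-1}$ which must also be checked as candidate intersection vertices — yields $2\delta/G_{\wb,\Eb}\le\deltaPsi\le 2\delta/g_{\wb,\Eb}$. I would verify that no vertex of the intersection polytope arising from a longer cyclic edge gives a value outside this range, which follows because along any edge $\openone\to\mathbf{P}$ with $\mathbf{P}$ a single $r$-cycle, $\deltaPsi = r\lambda$ while $\DEwH=\lambda\,(\wb\cdot\Eb - \mathbf{P}^{\mathrm T}\wb\cdot\Eb)$, and the ratio $\deltaPsi/\DEwH$ at fixed $\delta$ is squeezed between the $2$-cycle extremes by the rearrangement inequality.

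Second, the case $w_0>\cdots>w_{K-1}>w_K=\cdots=w_{D-1}=0$: now permuting the top $D-K$ levels leaves $\DEwH$ unchanged, producing $(D-K)!$ reference vertices, and the relevant sum is truncated to $\deltaPsi=\sum_{k=0}^{K-1}\Delta\Psi_k = K - \sum_{k=0}^{K-1}X_{kk}$. The non-reference edges emanating from the reference cluster are of two types: adjacent transpositions $\mathbf{S}_{k,k+1}$ with $k<K-1$ (cost $t_k=\,(w_k-w_{k+1})(E_{k+1}-E_k)$, contributing $2\delta/t_k$ to $\deltaPsi$ since both swapped indices are in $\{0,\dots,K-1\}$) and transpositions $\mathbf{S}_{K-1,j}$ with $j\ge K$ (cost $w_{K-1}(E_j-E_{K-1})$, but contributing only $\delta/(\text{cost})$ to $\deltaPsi$ since only index $K-1$ is counted). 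The maximum over the first type gives the stated $\max_{k<K}\{2\Theta(K-k-1)/t_k\}$ (the $\Theta$ killing the $k=K-1$ term, for which no adjacent-transposition edge stays within the reference cluster in the relevant sense), and the lower bound $\delta/G_{\wb,\Eb}$ comes from the $\mathbf{S}_{K-1,D-1}$ edge, which has the largest cost $w_{K-1}(E_{D-1}-E_{K-1})=(w_0^\downarrow\!-\!w_{D-1}^\downarrow)(E_{D-1}^\uparrow\!-\!E_0^\uparrow)\cdot(\text{after identifying } w_0=\text{?})$ — here I would be careful that $G_{\wb,\Eb}=(w_0-w_{D-1})(E_{D-1}-E_0)=w_0(E_{D-1}-E_0)$ when $w_{D-1}=0$, whereas the $\mathbf{S}_{K-1,D-1}$ edge costs $w_{K-1}(E_{D-1}-E_{K-1})\le w_0(E_{D-1}-E_0)=G_{\wb,\Eb}$, so $\delta/G_{\wb,\Eb}$ is indeed a valid (if not always individually tight) lower bound; tightness of the pair is argued by exhibiting saturating unitaries as in the $\deltarhow$ case.

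The main obstacle I anticipate is the bookkeeping in the second case: correctly enumerating \emph{all} edges of $\mathcal{B}_D$ incident to the reference \emph{cluster} (not just a single vertex) that intersect $\mathbb{A}(\delta)$, distinguishing which ones move probability mass between indices both $<K$ versus one $<K$ and one $\ge K$, and confirming that the extrema of $\deltaPsi$ over the resulting intersection polytope are attained at these simple-transposition vertices rather than at vertices coming from longer cycles or from edges connecting two distinct reference vertices. This is exactly the delicate combinatorial step the paper flags as needing the Appendix; everything else reduces to the one-parameter edge computations of $\DEwH$ and $\deltaPsi$ sketched above together with the rearrangement inequality to rule out interior-cycle edges.
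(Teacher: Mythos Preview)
Your overall framework---work on the Birkhoff polytope $\mathcal{B}_D$, identify reference vertices, enumerate edges to non-reference vertices, and evaluate $\deltaPsi/\DEwH$ at the intersection vertices---is exactly the paper's strategy. But there is a genuine gap in the step you flag as ``squeezed between the 2-cycle extremes by the rearrangement inequality.''

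First, a correction: in $\mathcal{B}_D$ the neighbours of $\openone$ are \emph{all} single cycles of any length, not just the adjacent transpositions $\mathbf{S}_{i,i+1}$ (you partially correct this later). For an edge $\openone\to\mathbf{S}_L$ with $\mathbf{S}_L$ an $L$-cycle touching $L'$ indices with nonzero weight, the ratio at the intersection vertex is $L'/c_L$ where $c_L\equiv\DEwH(\mathbf{S}_L)$. To establish the \emph{upper} bound $\deltaPsi\le 2\delta/g_{\wb,\Eb}$ you need $c_L\ge (L-1)g_{\wb,\Eb}$ (strictly decreasing case), and this is \emph{not} a consequence of the rearrangement inequality. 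Rearrangement gives only the other direction, $c_L\le\lfloor L/2\rfloor G_{\wb,\Eb}$, which yields the lower bound on $\deltaPsi$. The paper proves $c_L\ge(L-1)g_{\wb,\Eb}$ (and its $K<D-1$ analogues involving $\delta^{(1)}_{\wb,\Eb},\delta^{(2)}_{\wb,\Eb}$) via a dedicated induction lemma: one writes the $(L{+}1)$-cycle energy error as the $L$-cycle error plus a telescoping remainder $(w_{n_L}-w_{n_{L+1}})(E_{n_{L+1}}-E_{n_1})$ and bounds each increment from below. This inductive lemma is the technical heart of the proof and cannot be replaced by a one-line appeal to rearrangement.

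Second, in the $K<D-1$ case your identification of the lower-bound edge is off: the edge $\openone\to\mathbf{S}_{K-1,D-1}$ gives cost $w_{K-1}(E_{D-1}-E_{K-1})<G_{\wb,\Eb}$, hence ratio $>1/G_{\wb,\Eb}$, so it does \emph{not} saturate the stated bound. The saturating 2-cycle is $\mathbf{S}_{0,D-1}$ (index $0$ is in range, $D-1$ is not), which gives exactly $1/G_{\wb,\Eb}$. More importantly, checking one edge does not establish a lower bound valid for all $\rtw$; you must bound $L'/c_L\ge 1/G_{\wb,\Eb}$ for \emph{every} cycle, which again requires the lemma (specifically $c_L\le \min(L',\lfloor L/2\rfloor)G_{\wb,\Eb}$).
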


Since the error of an eigenstate given by \eqref{eqn:deltaPsik} cannot be interpreted as a function of $\Xb \wb$ or $\Xb^{\mathrm{T}}\Eb$, the proofs of Theorems \ref{res:psik_error} and \ref{res:eigenstate_error} will refer to the more involved Birkhoff polytope $\mathcal{B}_D$ rather than the simple permutohedron. Yet, both proofs are conceptually still quite similar to the proof of Theorem \ref{res:ens_state_err}, and can be found in the Appendix.

\subsection{Target Eigenenergies}

Finally, we investigate the bounds for the error $\Delta{E}_k$ of the individual energies.
Unlike other types of errors we have seen before, $\Delta{E}_k$ can be negative, if $k > 0$.
The reason for this also reflects the fact that there is no variational principle for individual excited states.
Nonetheless, according to \eqref{eqn:deltaEk}, $\Delta{E}_k$ is a linear function of $\tilde{\mathbf{E}}$ on the permutohedron $P(\mathbf{E})$.
This means that we can use the same strategies as above to determine its lower and upper bounds.

\begin{thrm} \label{res:ek_error}
    Consider the `Scientific Problem' for a setting $(\mathcal{H},\hat{H},\wb)$ and recall \eqref{eqn:deltaEw}, \eqref{eqn:deltaEk}.
    For any $k$ with non-degenerate $w_{k}$, the error $\deltaEk(\rtw)$ of the $k$-th eigenenergy and $\DEwH(\rtw)$ of the ensemble energy are universally related for all $\wb$-ensembles $\rtw$ according to
  \begin{equation}
        c_-^{(k)}(\wb,\Eb)\Delta{E}_{{\wb}} \leq \deltaEk \leq  c_+^{(k)}(\wb,\Eb) \Delta{E}_{{\wb}},
    \end{equation}
    where
    \begin{eqnarray}
        c_-^{(k)}(\wb,\Eb) &\equiv& \begin{cases}
            0, \quad &k = 0
            \\
            \frac{1}{w_{k}-w_{k-1}}, \quad &k > 0
        \end{cases}
        \\
        c_+^{(k)}(\wb,\Eb) &\equiv& \frac{1}{w_{k}-w_{k+1}}.
    \end{eqnarray}
\end{thrm}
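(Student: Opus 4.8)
\emph{Proof strategy.} The plan is to replay the argument of Theorem~\ref{res:ens_state_err}, but now on the permutohedron $P(\Eb)$ of the energy vector rather than on $P(\wb)$. By \eqref{eqn:deltaEk} the quantity of interest, $\Delta E_k(\rtw)=\tilde E_k-E_k=(\tilde{\mathbf{E}})_k-E_k$, is a \emph{linear} functional of $\tilde{\mathbf{E}}=\Xb^{\mathrm T}\Eb\in P(\Eb)$, and by the third line of \eqref{eqn:deltaEw} the constraint $\DEwH(\rtw)=\wb\cdot(\tilde{\mathbf{E}}-\Eb)=\delta$ is linear in $\tilde{\mathbf{E}}$ as well. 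Because every edge of $P(\Eb)$ is realised by a unistochastic matrix (Sec.~\ref{sec:poly}; a convex combination $(1-s)\mathbf P+s\,\mathbf S_{i,i+1}\mathbf P$ of a permutation matrix $\mathbf P$ and one adjacent transposition of it is again unistochastic since $\mathcal U_D$ is transpose-invariant), the relaxation from $\mathcal U_D$ to $P(\Eb)$ is exact. Hence the optimal bounds \eqref{dbounds} equal the minimum and maximum of $\tilde E_k-E_k$ over the polytope $P(\Eb)\cap\mathbb A(\delta)$ with $\mathbb A(\delta)=\{\tilde{\mathbf{E}}:\wb\cdot(\tilde{\mathbf{E}}-\Eb)=\delta\}$, and it suffices to enumerate the vertices of that intersection and evaluate $\tilde E_k-E_k$ on each.

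First I would pin down the reference vertices of $P(\Eb)$, i.e.\ those $\mathbf P\Eb$ with $\DEwH=0$. Writing $(\mathbf P\Eb)_k=E_{\sigma(k)}$ and invoking the rearrangement inequality together with $w_0>\dots>w_{K-1}>0$, the partial sum $\sum_{k<K}w_kE_{\sigma(k)}$ is minimised (with value $\Ew$) precisely when $\sigma(k)=k$ for all $k<K$; therefore the reference vertices are exactly those whose first $K$ entries are $E_0,\dots,E_{K-1}$ in that order -- a single vertex $\Eb$ when $w_0>\dots>w_{D-1}$, and $(D\!-\!K)!$ vertices (differing only by a reordering of $E_K,\dots,E_{D-1}$ among positions $K,\dots,D-1$) when $w_0>\dots>w_K=\dots=w_{D-1}=0$. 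Every other vertex has $\DEwH\ge g_{\wb,\Eb}$: a single swap of $E_j,E_l$ with $j<l$ costs $(w_j-w_l)(E_l-E_j)=\big(\sum_{i=j}^{l-1}(w_i-w_{i+1})\big)\big(\sum_{i=j}^{l-1}(E_{i+1}-E_i)\big)\ge\sum_{i=j}^{l-1}t_i\ge g_{\wb,\Eb}$, and, since $\DEwH$ is linear on $P(\Eb)$ and minimised on the reference face, its second-smallest vertex value is attained at a neighbour of that face and equals $g_{\wb,\Eb}$. Consequently, for $\delta\le g_{\wb,\Eb}$ the hyperplane $\mathbb A(\delta)$ separates the reference vertices from all others, misses (by linearity) every edge joining two non-reference vertices, and the vertices of $P(\Eb)\cap\mathbb A(\delta)$ are exactly the crossing points of $\mathbb A(\delta)$ with the edges of $P(\Eb)$ incident to a reference vertex.

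Next I would enumerate those incident edges. In a permutohedron each vertex $\Eb'$ is adjacent to $\mathbf S_{i,i+1}\Eb'$, $i=0,\dots,D-2$; along such an edge $\tilde{\mathbf{E}}(s)=(1-s)\Eb'+s\,\mathbf S_{i,i+1}\Eb'$ one has $\DEwH(s)=s\,(w_i-w_{i+1})(E'_{i+1}-E'_i)$, so the crossing sits at $s^\ast=\delta/[(w_i-w_{i+1})(E'_{i+1}-E'_i)]$ and there $\Delta E_k=s^\ast\,(\mathbf S_{i,i+1}\Eb'-\Eb')_k$, which vanishes unless $k\in\{i,i+1\}$. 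For strictly decreasing $\wb$ (one reference vertex $\Eb$) the admissible edges are $i=0,\dots,D-2$ with gap $E_{i+1}-E_i$, and evaluation gives $\Delta E_k=+\delta/(w_k-w_{k+1})$ on the edge $i=k$, $\Delta E_k=\delta/(w_k-w_{k-1})$ (negative, $k>0$) on the edge $i=k-1$, and $0$ on all others; reading off the smallest and largest values returns $c_\pm^{(k)}$. For the truncated $\wb$, the edges $i=0,\dots,K-2$ behave exactly as before (their endpoints agree on positions $0,\dots,K-1$ across all reference vertices), the edges $i\ge K$ stay inside the reference face ($\DEwH=0$) and are irrelevant, and the single remaining type $i=K-1$ -- which from the various reference vertices swaps $E_{K-1}$ with \emph{some} $E_j$, $j\ge K$ -- always yields $\Delta E_{K-1}=s^\ast(E_j-E_{K-1})=\delta/w_{K-1}=\delta/(w_{K-1}-w_K)$, independently of $j$; combining this with the $i=K-2$ edge again gives precisely $c_\pm^{(K-1)}$, while for $k<K-1$ nothing changes. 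Since the target is linear, its extrema over $P(\Eb)\cap\mathbb A(\delta)$ are attained at these vertices, which establishes $c_-^{(k)}\DEwH\le\Delta E_k\le c_+^{(k)}\DEwH$ with the stated coefficients and simultaneously exhibits ensembles saturating each inequality.

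The step I expect to be the main obstacle is the bookkeeping in the truncated case: identifying all $(D\!-\!K)!$ reference vertices and, above all, recognising that the $\mathbf S_{K-1,K}$-type edges emanating from them involve \emph{different} energy gaps $E_j-E_{K-1}$ yet all collapse to the same value $\delta/w_{K-1}$ for $\Delta E_{K-1}$. This cancellation -- which is also why $c_\pm^{(k)}(\wb,\Eb)$ ends up independent of $\Eb$ -- is the one place where care is needed to be certain that no more extreme vertex has been overlooked.
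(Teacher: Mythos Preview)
Your proposal is correct and follows precisely the approach the paper indicates: the paper does not write out a detailed proof of Theorem~\ref{res:ek_error} in the appendix but states in the main text that $\Delta E_k$ is linear on the permutohedron $P(\Eb)$ and that ``we can use the same strategies as above'' (i.e.\ the argument of Theorem~\ref{res:ens_state_err}), which is exactly what you spell out. Your treatment of the truncated-$\wb$ case---identifying the $(D-K)!$ reference vertices and the cancellation of the energy gap in the $\mathbf{S}_{K-1,K}$-type edges---is correct and in fact more explicit than anything the paper provides for this theorem.
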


The interpretation of these bounds is also clear. The lower bound, which is negative, is saturated when the $(k\!-\!1)$-th and $k$-th energy levels mix with one another,
and the positive upper bound is realized by mixing the $k$-th and $(k\!+\!1)$-th levels.

The lack of a variational principle for individual excited states results in a potential error cancellation which could affect the accumulated error of all eigenenergies of interest: If one naively sums up all $\Delta{E}_k$, the collective error would be grossly underestimated.
To circumvent such a misleading error cancellation, one should consider instead the absolute sum $\sum_{k=0}^{D-1}|\Delta{E}_k|$ of various individual errors $\Delta{E}_k$.
By directly applying Theorem \ref{res:ek_error} $D$ times, we obtain (while assuming a strictly decreasing ${\wb}$)
\begin{equation}
    0\leq \sum_{k=0}^{D-1}|\Delta{E}_k| \leq \frac{D}{\min_{k<D-1}(w_k\!-\!w_{k+1})} \Delta{E}_{\wb}.
\end{equation}
However, much tighter bounds can be derived without invoking Theorem \ref{res:ek_error}, or the polytope argument.

\begin{thrm} \label{res:en_error}
    Consider the `Scientific Problem' for a setting $(\mathcal{H},\hat{H},\wb)$ and recall \eqref{eqn:deltaEw}, \eqref{eqn:deltaEk}. Then, the sum
    of the errors of the targeted eigenenergies and the error $\DEwH(\rtw)$ of the ensemble energy are universally related for all $\wb$-ensembles $\rtw$ in the following way.
    When $w_0\!>\!w_1\!>\!\cdots\!>\!w_{D-1}$, one has
    \begin{equation}
    \begin{split}
        \frac{2\Delta{E}_{{\wb}}}{w_0\!-\!w_{D-1}}&\leq\sum_{k=0}^{D-1} |\Delta{E}_k| 
        \\
        &\leq \max_{k<D-1}\! \left\{\!\frac{2}{w_{k}\!-\!w_{k+1}}\!\right\}\Delta{E}_{\wb}.
    \end{split}
    \end{equation}
    When $w_{0} \!>\! w_1 \!>\! \cdots \!>\! w_K \!=\! \cdots \!=\! w_{D-1}=0$ for some $K\!<\!D\!-\!1$, one has
    \begin{equation}
    \begin{split}
        \frac{\Delta{E}_{\wb}}{w_0}&\leq \sum_{k=0}^{K-1} |\Delta{E}_k| 
        \\
        &\leq \max_{k<K}\! \left\{\!\frac{2\,\Theta(K\!-\!k\!-\!1)}{w_{k}\!-\!w_{k+1}}\!\right\}\Delta{E}_{\wb}.
    \end{split}
    \end{equation}
\end{thrm}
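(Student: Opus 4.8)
\emph{Proof proposal.} My plan is to bypass the polytope/vertex machinery of the earlier proofs and work directly with the partial sums of the energy errors. Write $T_k\equiv\sum_{j=0}^{k}\Delta E_j$ for $k=0,\dots,D-1$ and set $T_{-1}\equiv 0$. Two facts are then available. First, since $\tilde{\mathbf{E}}=\mathbf{X}^{\mathrm{T}}\mathbf{E}$ with $\mathbf{X}$ doubly stochastic, for any index set $S$ one has $\sum_{j\in S}\tilde E_j=\sum_l c_l E_l$ with $c_l\in[0,1]$ and $\sum_l c_l=|S|$, which is minimized by putting the weight on the $|S|$ smallest energies; choosing $S=\{0,\dots,k\}$ gives $\sum_{j=0}^{k}\tilde E_j\geq\sum_{j=0}^{k}E_j$, i.e.\ $T_k\geq 0$ for all $k$, and in particular $T_{D-1}=0$ (since $\mathbf{X}^{\mathrm{T}}$ is stochastic). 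Second, a summation by parts using $T_{-1}=T_{D-1}=0$ yields the identity $\Delta E_{\wb}=\sum_{k=0}^{D-1}w_k\Delta E_k=\sum_{k=0}^{D-2}(w_k-w_{k+1})T_k$, recalling \eqref{eqn:deltaEw} and \eqref{eqn:deltaEk}.

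For the case $w_0>\dots>w_{D-1}$ I would establish the two bounds separately. Lower bound: split the indices into $P=\{k:\Delta E_k>0\}$ and $N=\{k:\Delta E_k<0\}$, and put $S_+\equiv\sum_{k\in P}\Delta E_k$, $S_-\equiv-\sum_{k\in N}\Delta E_k$; since $\sum_k\Delta E_k=T_{D-1}=0$ we get $S_+=S_-=\tfrac12\sum_k|\Delta E_k|$. Using $w_k\leq w_0$ on $P$ and $w_k\geq w_{D-1}$ on $N$ gives $\Delta E_{\wb}\leq w_0 S_+-w_{D-1}S_-=(w_0-w_{D-1})S_+$, hence $\sum_k|\Delta E_k|=2S_+\geq 2\,\Delta E_{\wb}/(w_0-w_{D-1})$. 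Upper bound: from the Abel identity and $T_k\geq 0$, $\Delta E_{\wb}\geq\big(\min_{k<D-1}(w_k-w_{k+1})\big)\sum_{k=0}^{D-2}T_k$; on the other hand the elementary estimate $|T_k-T_{k-1}|\leq T_k+T_{k-1}$ (valid because all $T_k\geq 0$) together with $T_{-1}=T_{D-1}=0$ gives $\sum_{k=0}^{D-1}|\Delta E_k|=\sum_{k=0}^{D-1}|T_k-T_{k-1}|\leq 2\sum_{k=0}^{D-2}T_k$. Combining these two estimates delivers $\sum_k|\Delta E_k|\leq\big(\max_{k<D-1}2/(w_k-w_{k+1})\big)\Delta E_{\wb}$.

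For $w_0>\dots>w_{K-1}>w_K=\dots=w_{D-1}=0$ the same two steps apply with small modifications. The lower bound becomes immediate: $\Delta E_{\wb}=\sum_{k=0}^{K-1}w_k\Delta E_k\leq w_0\sum_{k=0}^{K-1}|\Delta E_k|$. For the upper bound the Abel identity collapses to $\Delta E_{\wb}=\sum_{k=0}^{K-2}(w_k-w_{k+1})T_k+w_{K-1}T_{K-1}$ (all higher differences vanish), while the same $|T_k-T_{k-1}|\leq T_k+T_{k-1}$ estimate gives $\sum_{k=0}^{K-1}|\Delta E_k|\leq 2\sum_{k=0}^{K-2}T_k+T_{K-1}$; a term-by-term comparison of the coefficients of $\sum_{k=0}^{K-2}T_k$ and of $T_{K-1}$ then yields $\Delta E_{\wb}\geq\min\{\tfrac12\min_{k\leq K-2}(w_k-w_{k+1}),\,w_{K-1}\}\sum_{k=0}^{K-1}|\Delta E_k|$, which is precisely the claimed bound once the Heaviside factor in \eqref{Heaviside} is unpacked ($\Theta(K-k-1)=1$ for $k\leq K-2$, $=\tfrac12$ for $k=K-1$, and $w_{K-1}-w_K=w_{K-1}$).

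Finally I would verify tightness: each inequality is saturated on a single two-level mixing $\mathbf{X}=(1-t)I+t\,\mathbf{S}_{i,j}$, which is unistochastic, so it corresponds to an admissible $\rtw$. In the full case $\mathbf{S}_{k,k+1}$ with $k$ realizing the minimal gap saturates the upper bound and $\mathbf{S}_{0,D-1}$ the lower bound; in the truncated case $\mathbf{S}_{0,D-1}$ again realizes $\Delta E_{\wb}/w_0$, while $\mathbf{S}_{k,k+1}$ for $k\leq K-2$ (both levels carry weight, giving ratio $2/(w_k-w_{k+1})$) and $\mathbf{S}_{K-1,K}$ (only level $K-1$ enters the truncated sum, giving ratio $1/w_{K-1}$) realize the upper bound. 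I expect the only genuinely delicate point to be the bookkeeping in the truncated-weight case --- isolating the boundary term $w_{K-1}T_{K-1}$ correctly in the Abel identity and checking that the term-by-term coefficient comparison is legitimate --- together with the short but essential majorization argument that all $T_k\geq 0$; the crude-looking bound $|T_k-T_{k-1}|\leq T_k+T_{k-1}$ is in fact exactly tight on the single-transposition configurations, which is why no sharper estimate is needed.
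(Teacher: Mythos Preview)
Your proposal is correct and is essentially the same proof as the paper's: both bypass the polytope machinery, introduce the partial sums $T_k$ (the paper calls them $F_k$), use the Abel summation identity $\Delta E_{\wb}=\sum_{k}(w_k-w_{k+1})T_k$, derive the upper bound from $|T_k-T_{k-1}|\leq T_k+T_{k-1}$, and derive the lower bound by splitting $\Delta E_k$ into its positive and negative parts. The only cosmetic differences are that you justify $T_k\geq 0$ directly via the doubly-stochastic structure whereas the paper cites Ky Fan's theorem, and you append the tightness check (which the paper treats separately in Section~\ref{sec:boundstight}).
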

Theorem \ref{res:en_error} explicitly demonstrated that the absolute sum of the errors of all targeted eigenenergies is bounded from below and above by the error of the ensemble energy. Remarkably, the prefactors of various bounds involve only the weight vector ${\wb}$, and are independent of the spectrum $\Eb$ of the target Hamiltonian. The proof of Theorem \ref{res:en_error} can also be found in the Appendix.

\begin{figure*}[t]
    \centering
    \includegraphics[scale=0.3]{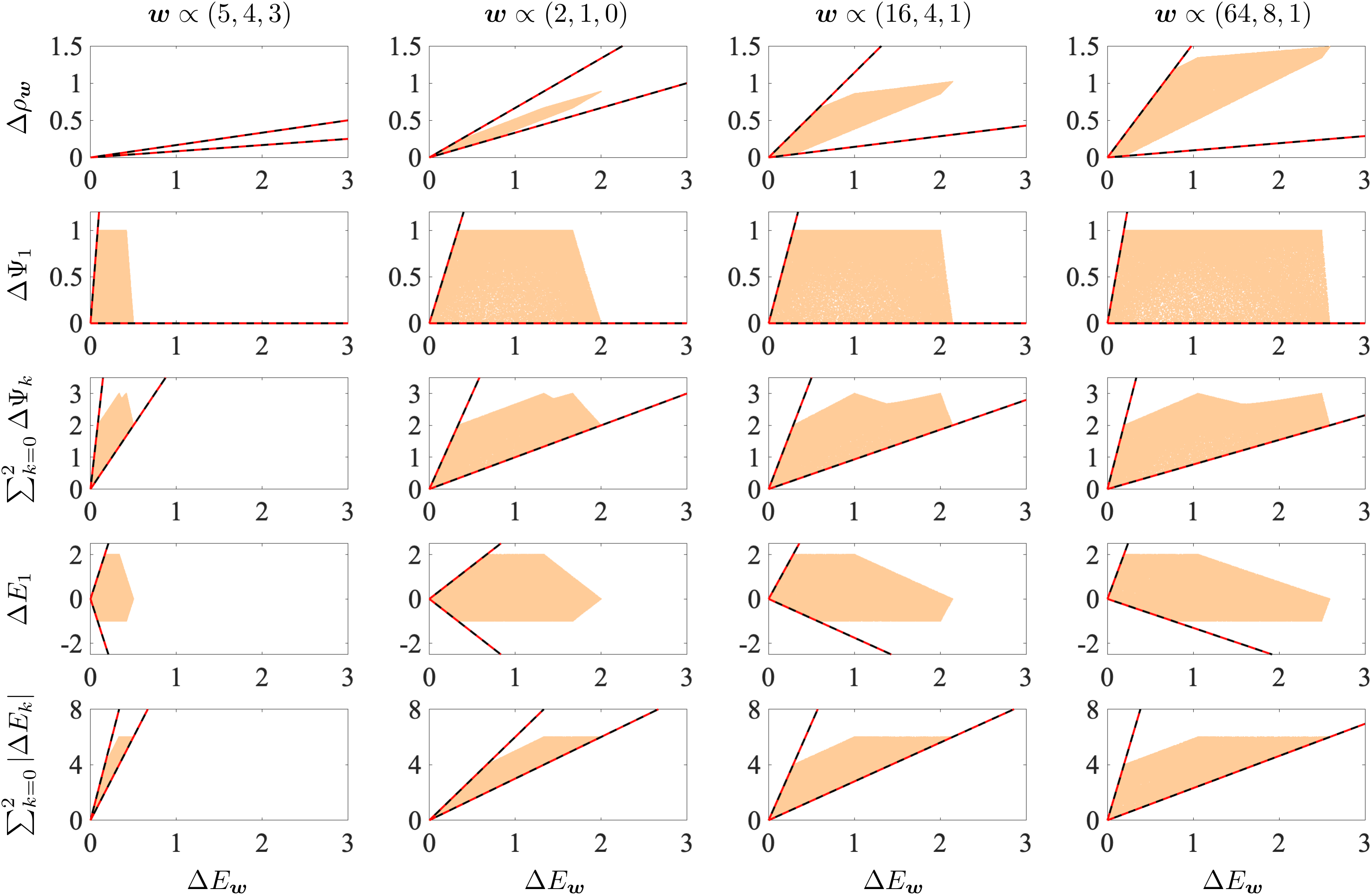}
       \caption{Numerical demonstration of the error bounds predicted by Theorems \ref{res:ens_state_err}~-~\ref{res:en_error} and their tightness. For an energy spectrum $\mathbf{E} = (-1,0,2)$ and four exemplary weight vectors $\wb$, the relations between various errors and the ensemble error $\DEwH$ are presented for $10^5$ randomly sampled $\wb$-ensemble states $\rtw = \hat{U} \rw\hat{U}^\dagger$ (orange points). Theoretical bounds (black solid lines) turn out to coincide with the actual bounds given by the sampled data (red dashed lines).}
    \label{fig:sampling1}
\end{figure*}

\section{Saturation of Inequalities and Optimal choice of ensemble weights} \label{sec:saturation}

In this section we first confirm both analytically and numerically that various error bounds provided by Theorems \ref{res:ens_state_err}~-~\ref{res:en_error} are optimal, in the sense that they can be saturated for any choice of $\wb$, provided  $\DEwH \leq g_{\wb,\Eb}$. Then, based on the explicit form $d_{\pm}\equiv d_{\pm}^{(Q)}(\wb,\Eb)$ of the prefactors of various bounds, we determine how the ensemble weights $\wb$ need to be chosen in order to minimize the error range of the quantum states and energies targeted through the GOK variational principle. In that context, we also explain how these insights can be used in practice to improve numerical approaches in terms of the accuracy of their predictions.

\subsection{Tightness of error bounds}\label{sec:boundstight}
In the following, we demonstrate by analytical means that the error bounds provided by Theorems \ref{res:ens_state_err}~-~\ref{res:en_error} are optimal, i.e., they cannot be replaced by more restrictive bounds. We already know that such tightness is given for the bounds in Theorems \ref{res:ens_state_err}, \ref{res:ek_error}, \ref{res:en_error} as a direct mathematical consequence of their derivation. Quite in contrast, the derivation of the bounds in Theorem \ref{res:psik_error} and \ref{res:eigenstate_error} involved a relaxation of the optimization \eqref{dbounds} from the underlying set $\mathcal{U}_d$ to the larger Birkhoff polytope $\mathcal{B}_d$ according to \eqref{eqn:constr_ineq}.



\begin{figure*}[t]
    \centering
    \includegraphics[scale=0.3]{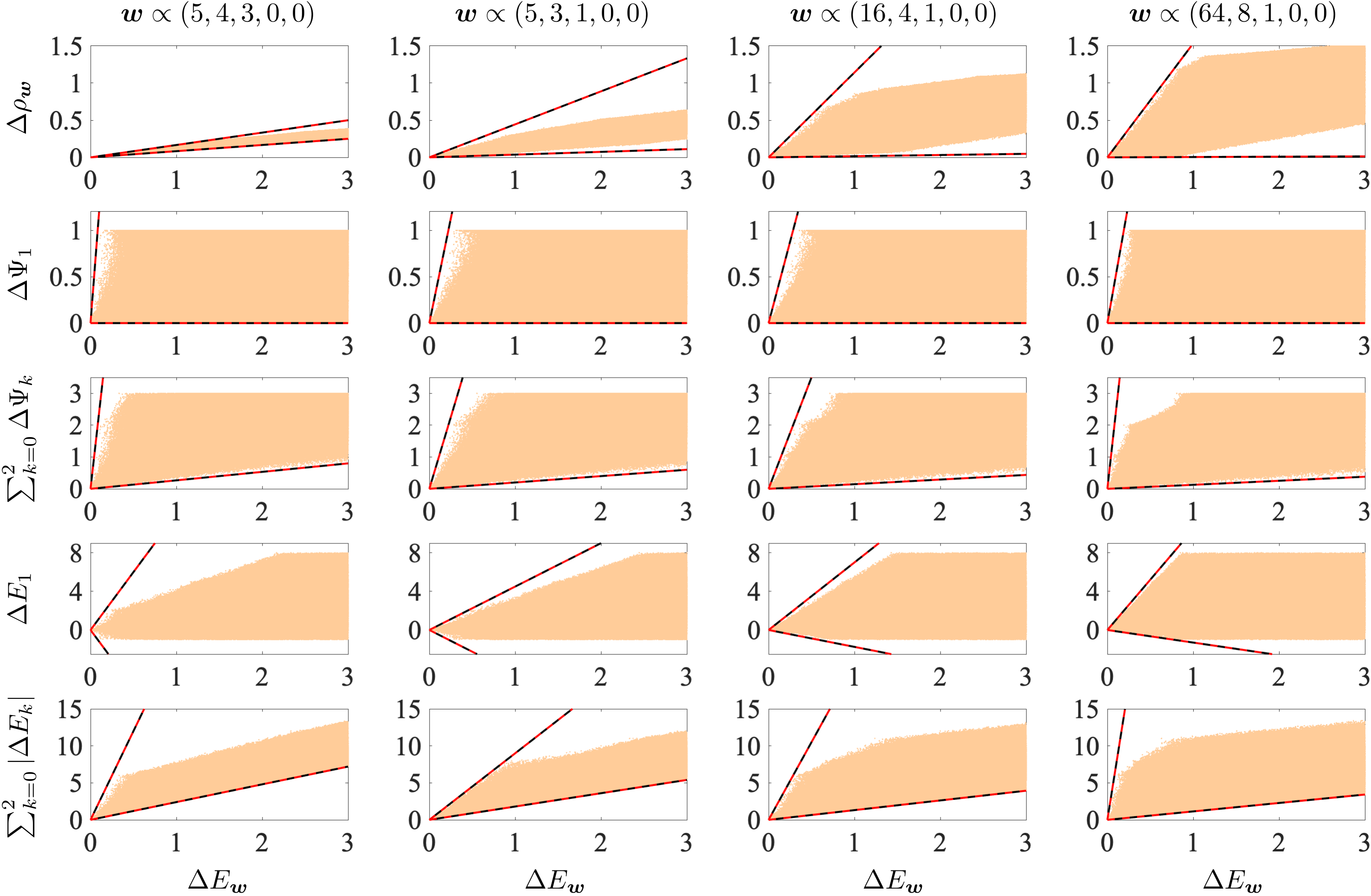}
    \caption{Numerical demonstration of the error bounds predicted by Theorems \ref{res:ens_state_err}~-~\ref{res:en_error} and their tightness. For an energy spectrum $\mathbf{E} = (-1,0,2,5,8)$ and four exemplary weight vectors $\wb$, the relations between various errors and the ensemble error $\DEwH$ are presented for $10^7$ randomly sampled $\wb$-ensemble states $\rtw = \hat{U} \rw\hat{U}^\dagger$ (orange points). Theoretical bounds (black solid lines) turn out to coincide with the actual bounds given by the sampled data (red dashed lines).}
    \label{fig:sampling2}
\end{figure*}

The strategy for confirming tightness is straightforward since we just need to provide for any valid $\wb$ a corresponding $\rtw$ which saturates the corresponding inequality between the errors. In order to illustrate this, we consider as an example the upper bound in Theorem \ref{res:en_error} (despite the fact that its tightness is already known). Let $\wb$ be the underlying non-degenerate weight vector and $k$ the index for which \mbox{$2/(w_l-w_{l+1})$}  is maximized, i.e.,
\begin{equation}
    k = \arg\max_{l<D-1}\left\{\frac{2}{w_l-w_{l+1}}\right\}.
\end{equation}
Then, according to Theorem \ref{res:en_error}, the sum of the absolute errors of various eigenenergies is bounded from above by
\begin{equation}\label{Thm7ineq}
    \sum_{l=0}^{D-1} |\Delta{E}_l| \leq \frac{2}{w_k\!-\!w_{k+1}}\Delta{E}_{\wb}.
\end{equation}
To confirm that this bound can be attained, we construct a trial state $\rtw$ for which this inequality is saturated. Our $\rtw$ is obtained from the exact state $\rw$ by applying the following Jacobi rotation $\mathbf{J}_{k,k+1}(\theta)$
\begin{equation}\label{Jaccobi}
    \begin{split}
        |\Psi_k\rangle &\longmapsto \cos(\theta)|\Psi_k\rangle + \sin(\theta)|\Psi_{k+1}\rangle,
        \\
        |\Psi_{k+1}\rangle &\longmapsto -\sin(\theta)|\Psi_k\rangle + \cos(\theta)|\Psi_{k+1}\rangle,
    \end{split}
\end{equation}
while keeping all other states unchanged. This leads directly to (recall Eq.~\eqref{eqn:deltaEw})
\begin{equation}\label{jacsatur}
        \Delta{E}_{{\wb}} = \sin^2(\theta)(w_k\!-\!w_{k+1})(E_{k+1}\!-\!E_k),
\end{equation}
and
\begin{equation}
        \sum_{l=0}^{D-1}|\Delta{E}_l| = 2\sin^2(\theta)(E_{k+1}\!-\!E_k).
\end{equation}
Comparing these two equations leads to
\begin{equation}
    \sum_{l=0}^{D-1}|\Delta{E}_l| = \frac{2\Delta{E}_{{\wb}}}{w_k-w_{k+1}},
\end{equation}
i.e., the absolute sum of the individual energy errors is proportional to the ensemble error,
exactly by the prefactor stated in Theorem \ref{res:en_error}. Moreover, based on \eqref{jacsatur} and by recalling  \eqref{eqn:gG} we conclude that $\wb$-ensemble states $\rtw$ saturating the bound in Theorem \ref{res:en_error} exist at least as long as $\DEwH \leq (w_k\!-\!w_{k+1})(E_{k+1}\!-\!E_k)$. It is worth noticing that the right-hand side is never smaller than $g_{\wb,\Eb}$ in \eqref{eqn:gG}. Last but not least, it is not difficult to see that such specific `saturating' Jacobi rotations \eqref{Jaccobi} also exist for various other bounds, in particular for those of Theorem \ref{res:psik_error} and \ref{res:eigenstate_error} and also for the second case of weight vectors with $w_0 > \ldots > w_{K-1}>w_K=\ldots w_{D-1}=0$. This eventually confirms that all our error bounds are optimal in the sense that they are mathematically tight, provided the ensemble error $\DEwH$ is small enough, i.e., $\DEwH \leq g_{{\wb},\mathbf{E}}$. The numerical analysis in the subsequent section and the analytical example above reveals that for some error bounds, the range of $\DEwH$ for which the linear bounds are tight extends even beyond $g_{{\wb},\mathbf{E}}$. Yet, from a practical point of view it does not make sense to elaborate more on that point since the relevant regime is the one of reasonably good convergence, i.e., $\DEwH$ sufficiently small.

\subsection{Numerical illustration of error bounds}\label{sec:sampling}

To directly showcase the universality and the tightness of our bounds, we present in Figure \ref{fig:sampling1} and \ref{fig:sampling2}
various errors in the quantum states and eigenenergies,
of randomly sampled eigenbases encoded as unitary matrices.
The unitaries are parameterized as exponentials of antisymmetric matrices,
whose elements are sampled uniformly in the interval $[-\pi,\pi]$. Additionally, all possible permutation matrices are included into the samples.
We test two different scenarios: (a) a Hilbert space of dimension $D=3$ where all three eigenstates are targeted (Figure \ref{fig:sampling1}), and (b) a Hilbert space of dimension $D=5$ where the lowest three eigenstates are targeted (Figure \ref{fig:sampling2}).
For both scenarios, we choose four exemplary weight vectors ${\wb}$: One that has nearly equal entries, the one given by \eqref{eqn:optw1} and \eqref{eqn:optw2} which turns out to minimize the accumulated energy error $\Delta E$, and those characterized by $w_{i}/w_{i+1}=4$, and $w_{i}/w_{i+1}=8$, respectively (excluding zeros).

As the results shown in Figures \ref{fig:sampling1} and \ref{fig:sampling2} clearly reveal, our analytic error bounds are satisfied and can be well saturated for both scenarios.  The latter is evident by the fact that the theoretical bounds (black solid lines) and the statistical bound based on the sampled data (red dashed line) coincide. Yet, we also see that already for the relatively small dimension $D=5$, the sampling of $10^7$ trial states $\rtw =\hat{U} \rw \hat{U}^\dagger$ is not sufficient anymore to fill out the entire area of admissible pairs $(\Delta Q(\rtw), \DEwH(\rtw))$. This is due to the increase of the dimension of the manifold of unitaries as $D^2$, which in turn highlights the value of our analytical analysis in the previous and subsequent section. Moreover, by comparing the numerical results for the four exemplary weight vectors $\wb$, we observe the following trends. First, the range of possible errors $\deltarhow, \deltaPsi, \deltaE$  defined by the respective lower and upper bounds is, the narrower the more equal various weights $w_l$ are. In contrast, for $\deltaPsik, \deltaEk$ the range gets narrower as the weight gaps $|w_k-{w_{k\pm 1}}|$ get bigger. In practical applications, however, the ideal selection criteria for $\wb$ would not refer to the narrowness of the error range but instead to the `flatness' of the upper bound, i.e., the value of the corresponding prefactor. Indeed, the smaller the upper bound is, the smaller must be the error of the quantity of interest and thus the more accurate are the predictions made by the numerical approach. In that regard, for all quantities except the ensemble state, the $\wb$ of the second column, namely an equally spaced $\wb$, seems to be superior. Since the latter observation might be subject to the specific energy spectra chosen here, we will elaborate in the next section more on the crucial point of weight selection. In particular, we will succeed in determining by analytical means the optimal weight vectors for each quantity of interest, and their potential dependence on the underlying energy spectrum $\Eb$.
\begin{table*}
\scriptsize
    \begin{tabular}{|c|c|c|c|c|}
    \hline
    \rule{0pt}{0ex}\rule[-1ex]{0pt}{0pt}
    \textbf{Quantity} & \textbf{Error} & \textbf{Theorem} & \textbf{Lowest Upper Bound}& \textbf{Optimal} $\wb$
    \\
    \hline
    \rule{0pt}{3.5ex}\rule[-5.5ex]{0pt}{0pt}
    $\ket{\Psi_k}$ &$\Delta \Psi_k$ &{\ref{res:psik_error}}
    & $\:k(r_-\!+\!r_+)\!+\!r_+$
    & $\wb \propto (\underbrace{r_{-}\!+\!r_{+},\ldots,r_{-}\!+\!r_{+}}_{k-1},r_{+},\underbrace{0,\ldots,0}_{D-k})$
    \\
    \hline
    \rule{0pt}{4.5ex}\rule[-3ex]{0pt}{0pt}
    $\{\ket{\Psi_k}\}_{k=0}^{D-1}$&$\sum_{k=0}^{D-1}\Delta \Psi_k$ &{\ref{res:eigenstate_error}} &  $\sum_{k=1}^{D-1}2k(E_{k}\!-\!E_{k-1})^{-1}$ & $\:\bm{w}\propto \left(\sum_{l=k+1}^{D-1}(E_{l}\!-\!E_{l-1})^{-1}\right)_{k=0}^{D-1}\:$
    \\
    \hline
    \rule{0pt}{4.5ex}\rule[-3ex]{0pt}{0pt}
    $\{\ket{\Psi_k}\}_{k=0}^{K-1}$&$\sum_{k=0}^{K-1}\Delta \Psi_k$ &{\ref{res:eigenstate_error}} &  $\sum_{k=1}^{K}2k(E_{k}\!-\!E_{k-1})^{-1}\Theta(K\!-\!k)$ & $\bm{w}\propto \left(\sum_{l=k+1}^{D-1}(E_{l}\!-\!E_{l-1})^{-1}\Theta(K\!-\!l)\right)_{k=0}^{D-1}$
    \\
    \hline
    \rule{0pt}{3.5ex}\rule[-5.5ex]{0pt}{0pt}
    $E_k$ & $|\Delta E_k|$ &  \ref{res:ek_error}  & $2k+1$ & $\bm{w}\propto(\underbrace{2,\ldots,2}_{k},1,\underbrace{0,\ldots,0}_{D-k-1})$
    \\
    \hline
    \rule{0pt}{3.5ex}\rule[-2ex]{0pt}{0pt}
    $\{E_k\}_{k=0}^{D-1}$ & $\sum_{k=0}^{D-1}|\Delta E_k|$ & {\ref{res:en_error}}   &  $D(D\!-\!1)$ & $\bm{w}\propto(D\!-\!1,D\!-\!2,\ldots, 1,0)$
    \\
    \hline
    \rule{0pt}{3.5ex}\rule[-2ex]{0pt}{0pt}
    $\{E_k\}_{k=0}^{K-1}$ & $\sum_{k=0}^{K-1}|\Delta E_k|$ & {\ref{res:en_error}}   &  $K^2$ & $\bm{w}\propto(2K\!-\!1,2K\!-\!3,\ldots,1,0,\ldots,0)$
    \\
    \hline
    \end{tabular}
    \caption{The lowest upper bounds and the corresponding optimal choice of $\bm{w}$ (normalized to 1) for various errors of the quantum states and eigenenergies. We defined $r_{\pm} \equiv |E_{k}-E_{k\pm1}|^{-1}$, $\Theta(x)$ denotes the Heaviside step function \eqref{Heaviside}, and $K$ satisfies $K\!<\!D\!-\!1$.}\label{tab:opt_bounds}
\end{table*}

\subsection{Optimal choice of weights}\label{sec:selectionw}
The main results of our work, the explicit expressions of the linear error bounds in Theorems \ref{res:ens_state_err}~-~\ref{res:en_error} allow us to determine in the following the optimal auxiliary weights $w_k$ for practical applications. Since we know from the theoretical and numerical analyses in the previous sections that various error bounds are tight, we just need to minimize for each quantity $Q$ of interest the prefactor $d_{+}^{(Q)}(\wb,\Eb)$ of its upper bound. In that way, we will identify the optimal vectors $\wb$ for which the predictive power of any numerical approach based on the GOK variational principle is maximized: given an error $\Delta{E}_{\wb}$ of the ensemble energy, these are exactly those $\wb$ for which the largest possible error $\Delta{Q}$ that $Q$ may have is minimal.

Before we derive the optimal weight vectors, two crucial comments are in order here concerning the practical relevance of the anticipated results and our error bounds in general. At first sight, one may wonder how our results can be applied in practice, given that they necessitate the approximate knowledge of $\DEwH$ and for some quantities $Q$ even $\Eb$. First, as far as $\DEwH$ is concerned, we recall that in numerous variational optimization methods systematic extrapolation techniques are available for determining good estimates of the exact variational energies. Exactly the same schemes could be used in the context of the GOK variational principle. Prime examples would be the schemes in the density matrix renormalization group (DMRG) ansatz for extrapolating to the results of infinite bond-dimension ~\cite{legeza1996accuracy,chan2002highly,legeza2003controlling,weichselbaum2009variational,marti2010dmrg,olivares2015dmrg,zauner2018variational,hubig2018error} and in the selected configuration interaction approach for extrapolating to the full configuration interaction ansatz involving all electron configurations~\cite{holmes2016heatbath,burton2023rationale}. Second, for those quantities $Q$ whose optimal $\wb$ will depend on the energy spectrum $\Eb$, one could get a reasonable estimate of $\mathbf{E}$ through a low-level method such as the Hartree-Fock ansatz or any cheaper post Hartree-Fock approach. Most importantly, this initial estimate of $\Eb$ and resulting $\wb$ can further be improved during the course of the variational minimization of the ensemble energy. We explain this exemplarily for the state-average CASSCF approach~\cite{mcweeny1974scf,sachs1975mcscf,deskevich2004dynamically,MatsikaRev21}. There, one first optimizes the active orbitals via a gradient step in order to minimize the ensemble energy $\Ew$. Then, in a second step, one determines the targeted eigenstates and their energies within the active space of the updated orbitals through some exact or approximate diagonalization technique. This two-step procedure is repeated numerous times until convergence is reached. Apparently, one obtains more and more accurate estimates of the energy spectrum $\Eb$ during the optimization and could therefore constantly update $\wb$ in order to maximize the predictive power of the final results.

In the following we present and discuss the optimal weight vectors which reduce the error ranges of $\deltaEk, \deltaE, \deltaPsik$ and $\deltaPsi$, respectively.
We remark that the underlying reasoning, however, would not make any sense for the error $\Delta{\rho}_{\bm{w}}$ of the ensemble state, since $\rw$ itself explicitly depends on $\wb$. In particular the upper error bound in Theorem \ref{res:ens_state_err} vanishes for $\wb \propto (1,1,\ldots,1)$ because there exists only one quantum state $\rtw= \rw = \openone/D$ for that specific weight vector. We commence with the practically most important case of targeting individual energies $E_k$. Since the error $\deltaEk$ can be also negative, we first merge  both bounds in Theorem \ref{res:ek_error} to a single one,
\begin{equation}\label{Ekboundabs}
    0 \leq |\Delta{E}_k| \leq \max\left\{\mu_{k-1}^{-1},\mu_{k}^{-1}\right\}\Delta{E}_{\bm{w}}.
\end{equation}
Here, we introduced the more instructive parameters
\begin{equation}\label{mu}
    \begin{split}
        \mu_{l} &\equiv w_{l}-w_{l+1}, \quad l < D-1
        \\
        \mu_{D-1} &\equiv w_{D-1},
    \end{split}
\end{equation}
where obviously $\mu_l \geq 0$ for all $l$ since $\wb=\wb^\downarrow$. Additionally, the normalization condition translates to
\begin{equation}\label{munorm}
    \sum_{l=0}^{D-1}w_l=\sum_{l=0}^{D-1}(l+1)\mu_l = 1,
\end{equation}
which implies in particular $\mu_l \leq 1/(l\!+\!1)$.
Throughout this section, any $\bm{\mu}$ is assumed to respect these conditions such that the corresponding $\wb$ is indeed a probability distribution. Minimizing then the prefactor of the right-hand side in \eqref{Ekboundabs} over all admissible $\mub$ yields in a straightforward manner
\begin{equation}
    \min_{\substack{\bm{\mu}:\, \mu_{k-1},\,\mu_{k}>0}} \max\left\{\mu_{k-1}^{-1},\mu_{k}^{-1}\right\} = 2k+1,
\end{equation}
realized by $\mu_{k} = \mu_{k+1} = (2k\!+\!1)^{-1}$. This translates to the optimal weight vector given by
\begin{equation}\label{eqn:optw0}
    \wb = \frac{1}{2k\!+\!1}(\underbrace{2,\ldots,2}_{k},1,\underbrace{0,\ldots,0}_{D-k-1}).
\end{equation}
In this vector, the gaps between those consecutive weights which involve the target eigenstate ($k$) are maximized, while all other gaps are minimized to 0. The corresponding lowest upper bound for $\Delta{E}_k/\Delta{E}_{\bm{w}}$ is $2k+1$, which increases linearly in $k$. Remarkably, this rationalizes in quantitative terms the common expectation that the higher an excitation lies, the larger can be the error of the corresponding energy $E_k$.

In the following we present and discuss the optimal $\wb$ and the corresponding lowest upper error bounds also for the other relevant quantities. These additional key results of our work can be derived similarly to those for $E_k$. Due to the technical character of their derivations we refer the reader to Appendix \ref{app:woptimal} for various details.

If we are interested in calculating all $D$ eigenenergies accurately, the optimal ${\wb}$ is determined by minimizing the upper bound of $\sum_{k=0}^{D-1}|\Delta{E}_k|$ in Theorem \ref{res:en_error}. By recalling the optimal $\wb$ for an individual energy $E_k$ in \eqref{eqn:optw0}, it is less surprisingly that the optimal $\wb$ follows here as
\begin{equation}
    \begin{split}
        {\wb} = \frac{2}{D(D-1)} (D\!-\!1,D\!-\!2,\ldots,1,0)\label{eqn:optw1}.
    \end{split}
\end{equation}
This means that all gaps between consecutive weights are equal and thus equally maximal. The corresponding lowest upper bound in Theorem \ref{res:en_error} follows as $D(D-1)$. If we are interested in calculating just the lowest $K$ ($0<K<D\!-\!1$) eigenenergies, which is the more sensible setting for realistic quantum systems with exceedingly large $D$,
the following ${\wb}$ turns out to be optimal
\begin{equation}
    \begin{split}
    {\wb} = \frac{1}{K^2} (2K\!-\!1,2K\!-\!3,\ldots,3,1,0,\ldots,0).
    \end{split}\label{eqn:optw2}
\end{equation}
This result makes sense as well from an intuitive point of view: for the lowest $K$ energies, the weights between consecutive energies are maximally distinct which is achieved also by assigning the weight 0 to all higher energy levels. The corresponding lowest upper bound for the ratio $\sum_{k=0}^{K-1}|\Delta{E}_k|/\Delta{E}_{\bm{w}}$ follows directly from \eqref{eqn:optw2} and Theorem \ref{res:en_error} as $K^2$, which grows quadratically as function of $K$.
%
%

To discuss the case of the $k$-th eigenstate $\ket{\Psi_k}$, we recast the upper bound in Theorem \ref{res:psik_error} as
\begin{equation}
    \Delta{\Psi}_k \leq \max\left\{\frac{r_-}{\mu_{K-1}},\frac{r_+}{\mu_{k}}\right\},
\end{equation}
where $r_{\pm} \equiv |E_{k}- E_{k\pm1}|^{-1}$. The lowest upper bound then follows as
\begin{equation}
    \min_{\bm{\mu}:\,\mu_{K-1},\,\mu_{K}>0} \max\left\{\frac{r_-}{\mu_{k-1}},\frac{r_+}{\mu_{k}}\right\} \!=\! k(r_-+r_+)+r_+,
\end{equation}
with $\mu_{k-1} = r_-/[k(r_-\!+\!r_+)\!+\!r_+]$ and $\mu_{k} = r_+/[k(r_-\!+\!r_+)\!+\!r_+]$. This leads to the optimal weight vector given by
\begin{equation}
\begin{split}
    \bm{w} = &\frac{1}{k(r_-\!+\!r_+)\!+\!r_+} \times
    \\
    &\quad (\underbrace{r_-\!+\!r_+,\ldots,r_-\!+\!r_+}_{k}, r_+,\underbrace{0,\ldots,0}_{D-k-1}).
    \end{split}
\end{equation}
In contrast to the optimal weights for $\deltaEk$ and $\deltaE$, which are independent of the energy spectrum $\Eb$, the optimal gaps $\mu_j$ between consecutive weights are now balanced by the corresponding energy gaps. It are not the gaps that are maximized by the optimal weight vector but the product of the weight and energy gaps. In a similar fashion
we find also the optimal weights for targeting multiple eigenstates. When all $D$ eigenstates are considered, the optimal $\bm{w}$ for achieving the lowest upper bound for $\sum_{k=0}^{D-1}\Delta\Psi_k$ follows as
\begin{equation}
    \bm{w} \propto \left(\sum_{j=k+1}^{D-1}(E_j-E_{j-1})^{-1}\right)_{k=0}^{D-1},
\end{equation}
whereas when only the lowest $K\!<\!D\!-\!1$ eigenstates are considered, the optimal $\bm{w}$ for achieving the lowest upper bound for $\sum_{k=0}^{K-1}\Delta\Psi_k$  follows as
\begin{equation}
    \bm{w} \propto \left(\sum_{j=k+1}^{D-1}(E_j-E_{j-1})^{-1}\Theta(K-j)\right)_{k=0}^{D-1}.
\end{equation}

We summarize the optimal weight vectors and lowest upper bounds for all relevant quantities in Table \ref{tab:opt_bounds}. It is worth recalling here that this second key accomplishment of our work was only possible as we succeeded in the first place in deriving the universal and optimal error bounds presented in Theorems \ref{res:ens_state_err}\!-\!\ref{res:en_error}. With the `instruction manual' Table \ref{tab:opt_bounds} at hand, the optimal $\wb$ can be chosen for any quantity $Q$ of interest. This will then maximize the accuracy of any numerical approach based on the GOK variational principle such as GOK-DFT~\cite{PhysRevA.37.2809,Carsten2017,PhysRevB.95.035120,Fromager18,loos2020weight,Fromager20, Fromager21,Gross21,gould2021ensemble,PhysRevA.104.052806,lu2022multistate,Loos23JPCLett,PhysRevB.109.235113}, $\wb$-RDMFT~\cite{PhysRevLett.127.023001,liebert2021foundation,JLthesis,LS23SciP,LS23NJP}, specific quantum Monte Carlo techniques~\cite{schautz2004optimized,schautz2004excitations,Filippi}, traditional quantum chemical methods~\cite{mcweeny1974scf,sachs1975mcscf,deskevich2004dynamically,PhysRevLett.129.066401,MatsikaRev21} and quantum computing~\cite{ssvqe,Higgott2019variationalquantum,Cerezo2022,Nakanishi,Yalouz_2021,Yalouz22,PhysRevA.107.052423,hong2023quantum,benavides2023quantum}.

\section{Illustration: Variational Quantum Eigensolver}\label{sec:VQE}
To showcase the practical relevance of our results, we apply our theorems within the framework of variational quantum eigensolvers (VQE) ~\cite{Peruzzo2014}, used to solve the transverse Ising model
\begin{equation}
    \hat{H}= \sum_{i=1}^N a_i \hat{X_i} + \sum_{1 \leq i < j \leq N}   J_{ij} \hat{Z_i}\hat{Z_j}, \label{eqn:ham_qc}
\end{equation}
where $\hat{X_i}, \hat{Z_i}$ denote the Pauli operators for the $i$-th spin. For this proof of concept, it is sufficient to chose $N=2$, resulting in a Hilbert space of dimension $D=2^2$. The coefficients $a_i$ and $J_{ij}$ are fixed random numbers uniformly sampled from the interval $[0,1)$. In our work they are given by $J_{12}=0.09000 , a_1= 0.32696, a_2=0.80430$ which results in the following eigenenergies
\begin{equation}
    \begin{split}
        E_0 &= -1.13483,
        \\
        E_1 &= -0.48575,
        \\
        E_2 &= \phantom{-}0.48575,
        \\
        E_3 &= \phantom{-}1.13483.
    \end{split}
\end{equation}

The quantum simulations were conducted using Pennylane ~\cite{pennylane}, a quantum machine learning and simulation library. To diagonalize the Hamiltonian \eqref{eqn:ham_qc} in virtue of the ensemble variational principle through a unitary transformation $\hat{U}$, we resort to the parametric ansatz proposed in Ref.~~\cite{QC-ansatz}. By increasing the circuit depth, the ansatz for $\hat{U}$ is then capable of diagonalizing the Hamiltonian \eqref{eqn:ham_qc} with arbitrary precision. Three different weight vectors $\wb$ are employed,
\begin{equation}\label{wVQE}
    \bm{w}^{(n)} \propto (4^n,3^n,2^n,1^n), \quad n=1,2,3
\end{equation}
to demonstrate the effect of the choice of $\wb$ in the error of the quantum states and eigenenergies during the optimization of the circuit parameters using the Adam algorithm~\cite{kingma2017adam}.

In Figure \ref{fig:qc-vqe}, with present the errors of the ensemble state ($\Delta\rho_{\wb}$), the first excited state ($\Delta \Psi_1$) and eigenenergy ($\Delta E_1$), as well as the accumulated errors of various eigenstates ($\deltaPsi\equiv \sum_{k=0}^{3}\Delta\Psi_k$) and eigenenergies ($\deltaE\equiv\sum_{k=0}^3|\Delta{E}_k|$), against the error $\Delta{E}_{\wb}$ of the ensemble energy.
The red data points record the evolution of the errors during the optimization of the circuit parameters, while the upper and lower bounds of the errors given by Theorems \ref{res:ens_state_err}~-~\ref{res:en_error} are shown as black lines. In contrast to Figure \ref{fig:sampling1} and \ref{fig:sampling2}, we opted here for a double-logarithmic scale for various error types except for $\Delta{E}_1$ (which can be negative), in order to resolve the behavior of the errors in the important regime where $\Delta{E}_{\wb}$ is arbitrarily small (regime of convergence).

By following in each plot from the right side the red data points we observe that during the course of the optimization the variational ensemble energy $\Ew$ gets more and more accurate.
Since the individual and accumulated errors $\Delta Q$ of the eigenenergies and eigenstates are linearly bounded from above by $\DEwH$ according to Theorems \ref{res:ens_state_err}~-~\ref{res:en_error}, the excellent convergence of the latter enforces an equally excellent (or even better) convergence of the former. In particular, for the error quantities which are bounded from above and below by positive bounds (first, third and fifth row), the corresponding black lines form ``channels'' which ``guides'' the respective errors $\Delta Q$ to the regime of excellence convergence. Note here, that the linear positive bounds $d_-\Delta\Ew \leq  \Delta Q \leq d_+ \Delta\Ew$ in Theorems \ref{res:ens_state_err}, \ref{res:eigenstate_error}, \ref{res:en_error} translate here to
\begin{equation}
\begin{split}
 \log (d_-) + \log(\Delta\Ew) &\leq  \log(\Delta Q) 
 \\
 &\leq \log(d_+) + \log(\Delta\Ew).
 \end{split}
\end{equation}
Accordingly, the vertical arrangement of any black line is defined by an offset $\log(d_{\pm})$ and the width of each ``channel'' follows as $\log(d_+)-\log(d_-)$.
The other two error quantities (second and fourth row) are also bounded from above by the ensemble error but they do not have a positive lower bound. This means that they may vanish even if the ensemble energy has an error $\DEwH>0$. Indeed, this occasionally happens during the course of the optimization for the error $\Delta E_1$ which changes signs from time to time.

\begin{figure*}[t]
    \centering
    \includegraphics[width=0.8\linewidth]{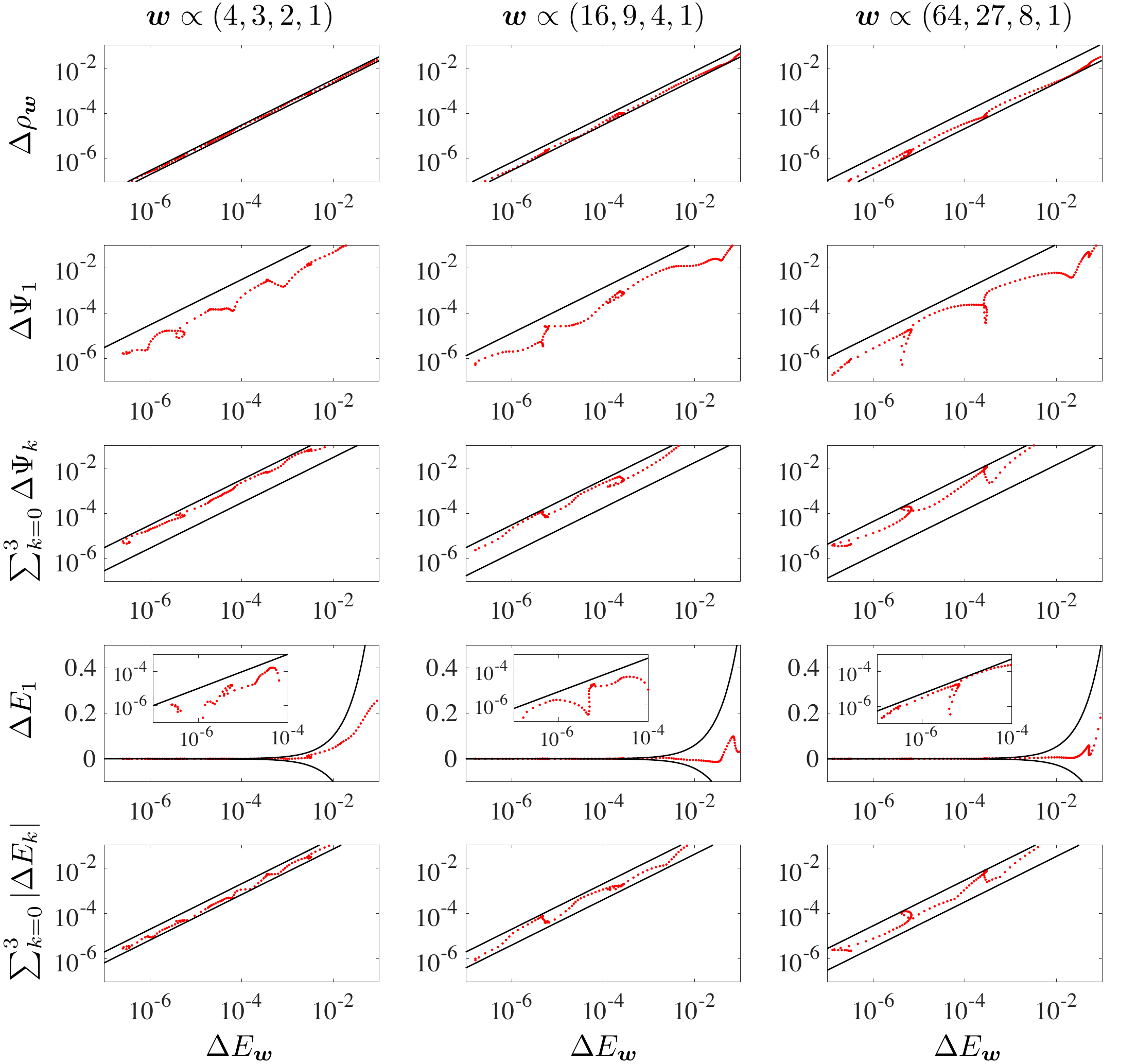}
    \caption{Illustration of the practical relevance of our main results, Theorems \ref{res:ens_state_err}~-~\ref{res:en_error}, in the context of a variational quantum eigensolver applied to the transverse field Ising model: During the course of optimization (results depicted as red points), our theoretical bounds shown as black solid lines guarantee convergence of various relevant quantities as the ensemble error $\DEwH$ gets arbitrarily small.}
    \label{fig:qc-vqe}
\end{figure*}

Comparison of the three columns confirms again that the choice of the weights can have a strong impact on the narrowness of the error ``channels'' of $\deltarhow, \deltaPsi, \deltaE$ and the vertical arrangements of various black lines. For instance, according to the bounds in Theorem \ref{res:ens_state_err}, the ``channel'' of $\deltarhow$ could become arbitrarily narrow by making various weights more and more equal. Even changing the weights from those of the last two columns to just an equally spaced $\wb$ (first column) has already a huge effect on the narrowness. Yet, in that context one must not forget that changing $\wb$ even further to a constant weight vector $\wb \propto (1,\ldots,1)$ would trivialise the search space underlying the GOK variational principle (as explained above) and thus the latter would loose its  predictive power entirely. Moreover, we observe in agreement with the predictions made by Table \ref{tab:opt_bounds} that the first weight vector, which is close to the optimal one for the accumulated energy error $\Delta E$, lowers indeed the upper bound compared to the other two $\wb$, yet the effect is not that drastic. At the same time, this equally spaced $\wb$ is far from being the optimal choice for the other error quantities (c.f Table \ref{tab:opt_bounds}). For instance, in the fourth row a wider error range follows compared to the second and third column. That is not surprising since the tightest bounds on $\Delta E_1$ are those obtained by maximizing either $w_1-w_2$ (upper bound) or $w_0-w_1$ (lower bound) with the effect of sacrificing  the accuracy of all other eigenstates and eigenenergies. This emphasizes the conceptual difference between the errors of individual excited states and the error of the ensemble state. This in turn suggests a potential for error cancellation due to the absence of a variational principle for individual excited states. All these effects will be even more well-pronounced in bigger systems with larger Hilbert spaces.

Finally, we observe that many of the error bounds in our example are occasionally (approximately) saturated during the optimization of the circuit parameters. Such a saturation actually occurs for all three weight vectors. This further demonstrates the particular relevance of our analytic error bounds which apparently can dictate the behavior of the errors $\Delta Q$ as function of $\Delta \Ew$. The worst-case scenario where the error of the eigenstates or eigenenergies is maximized for a fixed $\Delta{E}_{\wb}$ could indeed occur during the optimization process. From this concrete example, we therefore conclude that the weight vector $\wb$ of the ensemble state is more than an auxiliary quantity, and that a carefully chosen $\wb$ can lead to drastic improvements in the errors of the individual eigenstates and eigenenergies, and significantly improve the predictive power of the GOK variational principle.

\section{Summary and Conclusions}\label{sec:concl}
We have provided a sound basis for the ensemble variational principle~\cite{GOK1988} which has been proposed by Gross, Oliveira and Kohn as a seminal extension of the famous Rayleigh-Ritz variational principle. It minimizes the energy expectation value $\mathrm{Tr}[\rtw \hat{H}]$
of a quantum system over ensemble states $\rtw \equiv \sum_{k=0}^{K-1} w_k |\tilde{\Psi}_k\rangle\langle\tilde{\Psi}_k|$ with fixed auxiliary spectrum $\wb$ in order to target excited states by variational means. By employing elegant concepts from convex geometry we confirmed by constructive means the predictive power of this GOK variational principle and proved the validity of the underlying critical hypothesis: Whenever the ensemble energy is well-converged, the same holds true for the individual eigenstates  and eigenenergies $E_k$, provided various auxiliary weights $w_k$ were chosen differently. To be more specific, as our main accomplishment, we derived linear bounds
\begin{equation}
d_{-}^{(Q)}(\wb,\Eb)\,\Delta\Ew \leq  \Delta Q \leq d_{+}^{(Q)}(\wb,\Eb)\,\Delta\Ew
\end{equation}
on the errors $\Delta Q $ of these sought-after quantities. A detailed analytical and numerical analysis in Section \ref{sec:saturation} has confirmed the tightness and practical significance of these error bounds. For instance, in practical applications such as the variational quantum eigensolver (Section \ref{sec:VQE}), they dictate the convergence rate of various physical quantities as function of the error of the variational ensemble energy $\Ew$.

Moreover, a comprehensive analysis of the explicit form of the prefactors $d_{\pm}^{(Q)}(\wb,\Eb)$ allowed us for each physical quantity of interest to determine the corresponding optimal auxiliary weights $\wb$. The corresponding Table \ref{tab:opt_bounds} which summarizes these key results will serve as an `instruction manual' for practical applications. Following it will then maximize the predictive power of the GOK variational principle, and thus the accuracy of any ensemble-based numerical approach, such as GOK-DFT~\cite{PhysRevA.37.2809,Carsten2017,PhysRevB.95.035120,Fromager18,loos2020weight,Fromager20, Fromager21,Gross21,gould2021ensemble,PhysRevA.104.052806,lu2022multistate,Loos23JPCLett,PhysRevB.109.235113}, $\wb$-RDMFT~\cite{PhysRevLett.127.023001,liebert2021foundation,JLthesis,LS23SciP,LS23NJP}, specific quantum Monte Carlo techniques~\cite{schautz2004optimized,schautz2004excitations,Filippi}, traditional quantum chemical methods~\cite{mcweeny1974scf,sachs1975mcscf,deskevich2004dynamically,PhysRevLett.129.066401,MatsikaRev21} and quantum computing~\cite{ssvqe,Higgott2019variationalquantum,Cerezo2022,Nakanishi,Yalouz_2021,Yalouz22,PhysRevA.107.052423,hong2023quantum,benavides2023quantum}.

In summary, our work cemented the GOK variational principle as the cornerstone for excited state methods in complete analogy to the Rayleigh-Ritz variational principle for ground state methods. From a practical point of view, our analytical error bounds reveal the fundamental guiding principle for designing optimal weight vectors $\wb$ for computing excited states and energies in physics, chemistry and materials science.

\begin{acknowledgments}
We are grateful to C.L.~Benavides-Riveros, K.~Liao, J.~Liebert, S.~Paeckel and Z.~Xie for inspiring discussions.
We acknowledge financial support from the German Research Foundation (Grant SCHI 1476/1-1), the Munich Center for Quantum Science and Technology, and the Munich Quantum Valley, which is supported by the Bavarian state government with funds from the Hightech Agenda Bayern Plus.
\end{acknowledgments}

\bibliography{gok_quantum}

\onecolumn\newpage

\appendix

\section{Proofs of Theorems}

In this section we provide the proofs to the theorems in the main text.
Most of the proofs will resort to the geometric picture of the Birkhoff polytope or permutohedron.
To be more specific, we will formulate the proofs in terms of a linear constrained optimization problem on a convex polytope.
The linear constraint is represented by a hyperplane $\mathbb{A}(\delta)$ (defined by the equality constraint $\Delta{E}_{\wb} = \delta$) that cuts through the polytope and intersects at various edges.
Each intersected edges connects two vertices on the polytope,
one on the negative side of $\mathbb{A}(\delta)$ ($\Delta{E}_{\wb} < \delta$), which we will call a $(-)$-vertex (in the main text this is called a reference vertex),
and the other on the positive side ($\Delta{E}_{\wb} > \delta$), which we will call a $(+)$-vertex.
By identifying all $(\pm)$-vertices one effectively identifies all edges intersected by the linear constraint.
The linear constrained convex optimization problem is then reduced to the convex optimization problem within the convex intersection between the constraint and the polytope.

\subsection{Proof of Theorem \ref{res:ens_state_err}}
\begin{proof}
    We first reformulate the statement in terms of the permutohedron $P({\wb})$.
    Recall that
    \begin{eqnarray}
        \Delta\rho_{\wb}(\tilde{{\wb}}) &=& 2{\wb}^\mathrm{T}(\wb-\tilde{{\wb}}),
        \\
        \Delta{E}_{\wb}(\tilde{{\wb}}) &=& (\tilde{{\wb}}-\wb)^\mathrm{T}\mathbf{E}, \label{app:eqn:ens_error}
    \end{eqnarray}
    where $\tilde{{\wb}} \in P({\wb})$.
    We wish to determine the range of $\Delta\rho_{\wb}(\tilde{{\wb}})$,
    for a given value of $\Delta{E}_{\wb}(\tilde{{\wb}}) = \delta$ where $0 < \delta < g_{\bm{w},\mathbf{E}}$.

    In that case, one easily verifies that the only $(-)$-vertex is ${\wb}$.
    The type of neighboring vertices to $\wb$ (the $(+)$-vertices) depends on the number of unique elements in $\wb$.
    If $\wb$ is such that $w_0 > w_1 > \cdots > w_{D-1}$, then there are $D\!-\!1$ neighboring vertices, namely the $D\!-1\!$ adjacent transpositions of $\wb$.
    If $\wb$ is such that $w_0 > w_1 > \cdots > w_{K} = \cdots = w_{D-1}=0$ for some $K<D\!-\!1$, there are also $D\!-\!1$ neighboring vertices, but they consist of $K$ adjacent transpositions $\mathbf{S}_{k,k+1}$ where $0\leq k < K$, and $D\!-\!K$ non-adjacent transpositions $\mathbf{S}_{K-1,k+1}$ where $k\geq K$. The index tuples $(k,l)$ such that $\mathbf{S}_{k,l}\wb$ is a $(+)$-vertex is conveniently grouped by the set $\mathcal{I}_{\bm{w}}$.

    The linear plane $\mathbb{A}(\delta)$ defined by $\Delta{E}_{\bm{w}}(\tilde{\bm{w}}) = \delta$ intersects all edges connecting the vertex ${\wb}$, at points
    \begin{equation}
        \mathbf{v}_{k,l} = (1-p_{k,l}) {\wb} + p_{k,l} \mathbf{S}_{k,l}{\wb},\quad (k,l)\in\mathcal{I}_{\wb}
    \end{equation}
    where $p_{k,l} = \delta/[(w_k\!-\!w_{l})(E_{l}\!-\!E_{k})]$.
    Since the intersection $\mathbb{A}(\delta)\cap P({\wb})$ is again convex with extremal points being the $\mathbf{v}_{k,l}$'s with $(k,l)\in\mathcal{I}_{\bm{w}}$,
    we can deduce that for any $\mathbf{v} \in \mathbb{A}\cap P({\wb})$
    \begin{equation}
        \begin{split}
            \min_{(k,l)\in\mathcal{I}_{\wb}} \Delta\rho_{\wb}(\mathbf{v}_{k,l}) \leq \Delta\rho_{\wb}(\mathbf{v}) \leq \max_{(k,l)\in\mathcal{I}_{\wb}} \Delta\rho_{\wb}(\mathbf{v}_{k,l}).
        \end{split}
    \end{equation}
    By substituting in the exact expression of the error of the ensemble state at the points $\mathbf{v}_{k,l}$'s
    \begin{equation}
        \Delta\rho_{\wb}(\mathbf{v}_{k,l}) = 2\delta\frac{w_k-w_{l}}{E_{l}-E_{k}},
    \end{equation}
    we arrive at both the lower and upper bounds.
\end{proof}

\subsection{Proof of Theorem \ref{res:psik_error}}

\begin{proof}
    Recall that
    \begin{eqnarray}
        \Delta\Psi_k(\mathbf{X}) &=& 1-X_{kk},
        \\
        \Delta{E}_{\wb}(\mathbf{X}) &=& {\wb}^\mathrm{T}(\mathbf{X}^\mathrm{T}-\openone)\mathbf{E},
    \end{eqnarray}
    where $X_{kl} = |\langle\Psi_k|\tilde{\Psi}_l\rangle|^2$ is an unistochastic matrix in the Birkhoff polytope $B_D$.
    We are interested in the region where $0 < \Delta{E}_{\wb} = \delta < g_{{\wb},\mathbf{E}}$.

    It is not difficult to see that the lower bound for $\Delta\Psi_k$ is zero,
    since a finite error in the ensemble energy can be due to errors in other eigenenergies than $E_k$.
    For the upper bound, let us suppose ${\wb}$ contains $K$ positive entries.
    Then the $(-)$-vertices on the Birkhoff polytope are $\openone$ and permutations
    $\mathbf{S}$ that act only non-trivially on the zeros elements of ${\wb}$.
    According to the property of the Birkhoff polytope\cite{brualdi1977convex}, $(+)$-vertices are permutations that differ by a single cycle from the $(-)$-vertices.
    Notice that for any permutation $\mathbf{S}$, the resulting error $\Delta\Psi_k$ in the $k$-th eigenstate can only be 0 or 1.
    Therefore, to find the upper bound of $\Delta\Psi_k$,
    it suffices to find among the $(+)$-vertices involving the $k$-th eigenstate,
    that realizes the smallest error in the ensemble energy (and thus maximizing the ratio $\Delta\Psi_k/\Delta{E}_{\wb}$).
    Such smallest error is given by $\min\{g_{k-1,k}({\wb},\mathbf{E}),g_{k,k+1}({\wb},\mathbf{E})\}$, for $k\geq1$, where $g_{kl}(\bm{w},\mathbf{E}) \equiv (w_{k}\!-\!w_{l})(E_l\!-\!E_k)$.
    When $k=0$, the smallest error is simply $g_{0,1}({\wb},\mathbf{E})$.
    We therefore conclude that
    \begin{equation}
        0 \!\leq\! \frac{\Delta\Psi_k}{\Delta{E}_{\bm{w}}} \!\leq\! \begin{cases}
            g^{-1}_{0,1}({\wb},\mathbf{E}),  &k=0
            \\
            \max\{g^{-1}_{k-1,k}({\wb},\mathbf{E}),g^{-1}_{k,k+1}({\wb},\mathbf{E})\},  &k \geq 1.
        \end{cases}
    \end{equation}
\end{proof}

\subsection{Proof of Theorem \ref{res:eigenstate_error}}

\begin{proof}
    The proof follows similarly to the proof of Theorem \ref{res:psik_error}. We assume there are exactly $K$ positive elements in $\bm{w}$.
    The $(-)$-vertices are still the identity matrix $\openone$ and permutations $\mathbf{S}_-$ that acts only non-trivially on the zero elements of ${\wb}$.
    And the $(+)$-vertices are permutations that differ by a single cycle from the $(-)$-vertices.

    A key difference is that for a $(+)$-vertex $\mathbf{S}_+$, the corresponding error of the $K$ lowest eigenstates is
    \begin{equation}
        1 \leq \sum_{k=0}^{K-1}\Delta\Psi_k(\mathbf{S}_+) \leq K.
    \end{equation}
    Notice that the error cannot be zero, as that can only be realized by one of the $(-)$-vertices.
    When ${\wb}$ is strictly decreasing ($K=D$), the lower bound is improved to 2, since the minimal number of misplaced eigenstates is 2 in that case.
    The exact (integer) value of the error depends on the number of non-zero elements of ${\wb}$ that $\mathbf{S}_+$ non-trivially acts on. We define the following two helper quantities
    \begin{equation}
        \begin{split}
            \delta^{(1)}_{\bm{w},\mathbf{E}} &= \min_{i,j < K} (w^\downarrow_i-w^\downarrow_j)(E^\uparrow_j-E^\uparrow_i),
            \\
            \delta^{(2)}_{\bm{w},\mathbf{E}} &= \min_{i< K,j\geq K} (w^\downarrow_i-w^\downarrow_j)(E^\uparrow_j-E^\uparrow_i).
        \end{split}
    \end{equation}

    \begin{lemm} \label{lemma}
        Let $\mathbf{S}_L=\mathbf{S}_{\mathbf{n}}\mathbf{S}_{-}$ be a $(+)$-vertex of the Birkhoff polytope with respect to the hyperplane $\mathbb{A}$ defined by $0< \Delta{E}_{\wb}=\delta<g_{{\wb},\mathbf{E}}$.
        Here, $\mathbf{S}_-$ is a $(-)$-vertex and $\mathbf{S}_{\mathbf{n}}$ is a single $L$-cycle ($L\geq 2$) denoted in cycle notation as $\mathbf{n}=(n_1,n_2,\cdots,n_L)$
        where exactly $L$ many $w_{n_k}$'s are positive and distinct ($1\leq L' \leq L$), while the others are zero. Then we have
        \begin{equation}
        \begin{split}
            \Delta{E}_{\wb}(\mathbf{S}_L) \leq  \min\left(L', \left\lfloor \frac{L}{2} \right\rfloor \right) G_{{\wb},\mathbf{E}}.
        \end{split}
        \end{equation}
        When $\delta^{(1)}_{{\wb},\mathbf{E}} \leq 2 \delta^{(2)}_{{\wb},\mathbf{E}}$ and $L'>1$, we have
        \begin{equation}
            (L'-1)\delta^{(1)}_{{\wb},\mathbf{E}} \leq \Delta{E}_{\wb}(\mathbf{S}_L),
        \end{equation}
        and when $\delta^{(1)}_{{\wb},\mathbf{E}} > 2 \delta^{(2)}_{{\wb},\mathbf{E}}$ or $L'=1$, we have
        \begin{equation}
            (2L'-1)\delta^{(2)}_{{\wb},\mathbf{E}} \leq \Delta{E}_{\wb}(\mathbf{S}_L).
        \end{equation}
        Specially, when $L'=L$, the inequalities become
        \begin{equation}
            \begin{split}
                (L-1)g_{{\wb},\mathbf{E}} \leq \Delta{E}_{\wb}(\mathbf{S}_L) \leq \left\lfloor\frac{L}{2}\right\rfloor G_{{\wb},\mathbf{E}}.
            \end{split}
        \end{equation}
    \end{lemm}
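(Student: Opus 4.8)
The plan is to reduce the whole lemma to a direct estimate on the single cycle $\mathbf{n}$. After relabelling I would assume the positive weights occupy the front of $\bm{w}$, so $w_0>w_1>\cdots>w_{K-1}>0=w_K=\cdots=w_{D-1}$ with $E_0<E_1<\cdots<E_{D-1}$. Because $\mathbf{S}_-$ fixes every index with $w_k>0$, the error $\Delta E_{\bm{w}}(\mathbf{S}_L)=\sum_k w_k(E_{\mathbf{n}(k)}-E_k)$ is really a sum over the $L'$ positive-weight indices $a_1<\cdots<a_{L'}$ in the support of $\mathbf{n}$, constrained by $\sum_{k\in\mathbf{n}}(E_{\mathbf{n}(k)}-E_k)=0$ (telescoping around the cycle). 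I would use this quantity in two ways: (a) subtracting the smallest weight met by the cycle, $\Delta E_{\bm{w}}(\mathbf{S}_L)=\sum_{k\in\mathbf{n}}(w_k-\min_{j\in\mathbf{n}}w_j)(E_{\mathbf{n}(k)}-E_k)$, which is legitimate by the zero-sum constraint and is tailored to the upper bounds; and (b) Abel summation over the elementary energy gaps, $\Delta E_{\bm{w}}(\mathbf{S}_L)=\sum_{p=0}^{D-2}A_p(E_{p+1}-E_p)$, where $A_p=\sum_{k\in\mathbf{n},\,k\le p<\mathbf{n}(k)}w_k-\sum_{k\in\mathbf{n},\,\mathbf{n}(k)\le p<k}w_k$ is the net weight the cycle transports across gap $p$, tailored to the lower bounds.

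For the upper bounds I would start from (a). As every subtracted weight is non-negative, discarding the terms with $E_{\mathbf{n}(k)}<E_k$ yields $\Delta E_{\bm{w}}(\mathbf{S}_L)\le(w_0-w_{D-1})\,D^{+}$, with $D^{+}$ the total ascent of the cyclic energy sequence $E_{n_1},\dots,E_{n_L}$. Since a cyclic sequence of $L$ distinct reals decomposes into equally many ascending and descending runs, hence at most $\lfloor L/2\rfloor$ ascending runs, each of height $\le E_{D-1}-E_0$, one gets $D^{+}\le\lfloor L/2\rfloor(E_{D-1}-E_0)$ and therefore $\Delta E_{\bm{w}}(\mathbf{S}_L)\le\lfloor L/2\rfloor\,G_{\bm{w},\mathbf{E}}$. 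The factor $L'$ is then free: if $L'=L$ it follows since $\lfloor L/2\rfloor\le L$, and if $L'<L$ the cycle meets the zero-weight block, forcing $w_{D-1}=0$ and $G_{\bm{w},\mathbf{E}}=w_0(E_{D-1}-E_0)$, whereupon retaining only the positive terms of $\sum_i w_{a_i}(E_{\mathbf{n}(a_i)}-E_{a_i})$ gives $\le L'w_0(E_{D-1}-E_0)=L'\,G_{\bm{w},\mathbf{E}}$. This establishes $\Delta E_{\bm{w}}(\mathbf{S}_L)\le\min(L',\lfloor L/2\rfloor)\,G_{\bm{w},\mathbf{E}}$.

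For the lower bounds I would use (b). As $\mathbf{n}$ restricts to a bijection of its support, the number of support indices that leave $\{0,\dots,p\}$ equals the number that enter it; call it $\kappa_p$, which is positive exactly when the support of $\mathbf{n}$ straddles gap $p$. Monotonicity of $\bm{w}$ then forces $A_p\ge\kappa_p(w_p-w_{p+1})\ge0$ for $p<K-1$, $A_{K-1}\ge\kappa_{K-1}w_{K-1}$, and $A_p\ge0$ for $p\ge K$. Since all $L'$ positive-weight indices lie in the support of $\mathbf{n}$, every gap strictly between $a_1$ and $a_{L'}$ is straddled, whence $\sum_{p<K-1}\kappa_p\ge a_{L'}-a_1\ge L'-1$, while $\kappa_{K-1}\ge1$ iff $L'<L$. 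Combining with $(w_p-w_{p+1})(E_{p+1}-E_p)\ge\delta^{(1)}_{\bm{w},\mathbf{E}}$ and $w_{K-1}(E_K-E_{K-1})=\delta^{(2)}_{\bm{w},\mathbf{E}}$ yields the master estimate $\Delta E_{\bm{w}}(\mathbf{S}_L)\ge(L'-1)\,\delta^{(1)}_{\bm{w},\mathbf{E}}+\kappa_{K-1}\,\delta^{(2)}_{\bm{w},\mathbf{E}}$. When $L'<L$ we have $\kappa_{K-1}\ge1$: if $\delta^{(1)}_{\bm{w},\mathbf{E}}\le2\delta^{(2)}_{\bm{w},\mathbf{E}}$ and $L'>1$ the first term alone gives $(L'-1)\delta^{(1)}_{\bm{w},\mathbf{E}}$; if $\delta^{(1)}_{\bm{w},\mathbf{E}}>2\delta^{(2)}_{\bm{w},\mathbf{E}}$ or $L'=1$ then $(L'-1)\delta^{(1)}_{\bm{w},\mathbf{E}}+\delta^{(2)}_{\bm{w},\mathbf{E}}\ge(2L'-1)\delta^{(2)}_{\bm{w},\mathbf{E}}$. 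When $L'=L$ the master estimate gives $\Delta E_{\bm{w}}(\mathbf{S}_L)\ge(L-1)\delta^{(1)}_{\bm{w},\mathbf{E}}\ge(L-1)g_{\bm{w},\mathbf{E}}$, which together with the upper bound $\lfloor L/2\rfloor\,G_{\bm{w},\mathbf{E}}$ is exactly the stated specialisation.

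The part I expect to carry the weight of the proof is this last bookkeeping: verifying that $\kappa_p\ge1$ holds on precisely the gaps straddled by $\mathbf{n}$ (via the bijection/flow-conservation argument), that $\sum_{p<K-1}\kappa_p\ge L'-1$ survives even when the positive-weight positions are not consecutive, and that $\kappa_{K-1}\ge1$ is equivalent to the cycle reaching into the zero-weight block, followed by checking that $\delta^{(1)}_{\bm{w},\mathbf{E}}=2\delta^{(2)}_{\bm{w},\mathbf{E}}$ is indeed the crossover point between the two lower estimates. The elementary combinatorial fact about ascending runs of a cyclic sequence underlying the $\lfloor L/2\rfloor$ factor is the only other ingredient, and should simply be stated carefully.
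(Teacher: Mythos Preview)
Your proposal is correct and takes a genuinely different route from the paper. For the upper bound the paper invokes the rearrangement inequality in one line, pairing the sorted cycle indices as
\[
\Delta E_{\bm w}(\mathbf S_L)\le\sum_{l=1}^{\lfloor L/2\rfloor}(w_{n^\uparrow_l}-w_{n^\uparrow_{L-l+1}})(E_{n^\uparrow_{L-l+1}}-E_{n^\uparrow_l}),
\]
and reads off both the $\lfloor L/2\rfloor$ and $L'$ factors from the number of nonzero summands; your ascending-run count of the cyclic energy sequence is more hands-on but equally elementary and self-contained. The sharper contrast is in the lower bound: the paper proceeds by induction on $L$, peeling off the index of largest energy so that $\Delta E_{\bm w}(\mathbf S_{L+1})=\Delta E_{\bm w}(\mathbf S_L)+(w_{n_L}-w_{n_{L+1}})(E_{n_{L+1}}-E_{n_1})$, and then minimises the accumulated increments over all possible ``line-ups'' of positive and zero weights along the induction. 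Your Abel-summation/flow-conservation argument via the transports $A_p\ge\kappa_p(w_p-w_{p+1})$ and the count $\sum_{p<K-1}\kappa_p\ge a_{L'}-a_1\ge L'-1$ is considerably cleaner: it bypasses the line-up case analysis entirely and makes the role of the straddled gaps (and of the single gap at $p=K-1$) transparent. What the paper's approach buys in return is a more direct identification of the extremal configurations that saturate the bounds.

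One point worth making explicit. You restrict the $\delta^{(1)},\delta^{(2)}$ lower bounds to $L'<L$ (where indeed $\kappa_{K-1}\ge1$) and prove only the specialisation $(L-1)g_{\bm w,\mathbf E}$ when $L'=L$. This is actually the right move: when $L'=L$ and $2\delta^{(2)}<\delta^{(1)}<3\delta^{(2)}$ the stated bound $(2L'-1)\delta^{(2)}$ can fail already for the $2$-cycle on two positive-weight indices, so the lemma as literally written is slightly off there (harmlessly so, since the downstream theorem only needs $(L'-1)\delta^{(1)}$ in that regime and hence the ratio $L'/\Delta E_{\bm w}\le 1/\delta^{(2)}$ still follows). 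Your treatment is thus the correct one; just state explicitly that the $\delta^{(2)}$-bound is claimed only for $L'<L$.
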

    \begin{proof}
    For the upper bound, we use the rearrangement inequality
    \begin{equation}
        \begin{split}
            \Delta{E}_{\wb}(\mathbf{S}_L) & \leq \sum_{l=1}^{\left\lfloor L/2 \right\rfloor} (w_{n^\uparrow_{l}}\!-\!w_{n^\uparrow_{L-l+1}})(E_{n^\uparrow_{L-l+1}}\!-\!E_{n^\uparrow_l})
            \\
            &\leq \min\left(L', \left\lfloor \frac{L}{2} \right\rfloor \right) G_{{\wb},\mathbf{E}}.
        \end{split}
    \end{equation}
    The lower bound can be proven by induction. When $L=2$, the lower bounds for $\Delta{E}_{\bm{w}}(\mathbf{S}_L)$ are trivially satisfied.
    Let $\mathbf{S}_{L+1}=\mathbf{S}_{\mathbf{n}}\mathbf{S}_-$
    where $\mathbf{S}_{\mathbf{n}}$ is a $(L+1)$-cycle with
    $\mathbf{n}=(n_1,n_2,\ldots,n_K,n_{L+1})$ and $L\geq 2$.
    Using the cyclic properties of the cycle notation,
    we can assume that $n_{L+1}$ is the index corresponding to the largest energy (or equivalently the smallest weight).
    By the assumption of the induction, the $L$-cycle $\mathbf{S}_L$ representing $(n_1,n_2,\ldots,n_{L})$ satisfies the lower bound. Then
    \begin{equation}
        \begin{split}
            &\quad \:{\wb}^{\mathrm{T}}(\mathbf{S}_{L+1}\!-\!\openone)\mathbf{E}
            \\
            &= \sum_{l=1}^{L-1}w_{n_l}(E_{n_{l+1}}\!-\!E_{n_l})+ w_{n_L}(E_{n_{L+1}}\!-\!E_{n_L})
            \\
            &\quad+ w_{n_{L+1}}(E_{n_1}\!-\!E_{n_{L+1}})
            \\
            &= {\wb}^{\mathrm{T}}(\mathbf{S}_{L}\!-\!\openone)\mathbf{E} + (w_{n_L}\!-\!w_{n_{L+1}})(E_{n_{L+1}}\!-\!E_{n_1}).
        \end{split}
    \end{equation}
    Each step of induction acquires an additional term.
    Notice that the additional term vanishes if and only if both $w_{n_L}$ and $w_{n_{L+1}}$ are zero.
    When at least one of them is positive, we have the following inequality
    \begin{equation}
        \min(\delta^{(1)}_{{\wb},\mathbf{E}},\delta^{(2)}_{{\wb},\mathbf{E}}) \leq (w_{n_L}\!-\!w_{n_{L+1}})(E_{n_{L+1}}\!-\!E_{n_1}).
    \end{equation}
    Depending on the order of the additional index $n_{L+1}$ at each step of the induction, which we call a line-up, the above bound can be made tighter. Namely, if two consecutive elements $n_{L}$ and $n_{L+1}$ are positive, we can conclude the lower bound of the additional term is $\delta^{(1)}_{\bm{w},\mathbf{E}}$. If only one of them is positive, the lower bound becomes $\delta^{(1)}_{\bm{w},\mathbf{E}}$.
    If we use $1$ to represent a positive $w_{n_l}$ and $0$ for a vanishing one,
    then the line-up that minimizes the accumulated sum in the induction is one of the following two cases.
    When $\delta^{(1)}_{{\wb},\mathbf{E}} \geq 2 \delta^{(2)}_{{\wb},\mathbf{E}}$ and $L'>1$, the minimizing line-up is
    $$
    \underbrace{1,1,\ldots,1}_{L'},0,0,\ldots,0.
    $$
    The accumulated sum in the induction is then
    \begin{equation}
        \begin{split}
            (L'-1)\delta^{(1)}_{{\wb},\mathbf{E}} \leq \Delta{E}_{\wb}(\mathbf{S}_L).
        \end{split}
    \end{equation}
    When $\delta^{(1)}_{{\wb},\mathbf{E}} > 2 \delta^{(2)}_{{\wb},\mathbf{E}}$ or $L'=1$, the minimizing line-up is
    \begin{eqnarray*}
        &\underbrace{1,0,1,0,\ldots,1,0}_{2L'},0,\ldots,0, \quad &L' \leq L/2,
        \\
        &\underbrace{0,1,0,1,\ldots,0,1}_{2(L-L')},1,\ldots,1, \quad &L' > L/2.
    \end{eqnarray*}
    Straightforward algebra shows that the accumulated sum in the induction is then
    \begin{equation}
        (2L'-1)\delta^{(2)}_{{\wb},\mathbf{E}} \leq \Delta{E}_{\wb}(\mathbf{S}_L).
    \end{equation}
    \end{proof}

    From Lemma \ref{lemma}, we can derive that for a $(+)$-vertex $\mathbf{S}_L$ acting non-trivially on $L'$ non-zeros elements of ${\wb}$,
    the following lower bound on the error of the lowest $K$ ($K\!<\!D\!-\!1$) eigenstates
    \begin{equation}
        \frac{\sum_{k=0}^{K-1}\Delta\Psi_k(\mathbf{S}_L)}{\Delta{E}_{\wb}(\mathbf{S}_L)} \geq \frac{L'}{ \min\left(L', \left\lfloor \frac{L}{2} \right\rfloor \right) G_{{\wb},\mathbf{E}}} \geq \frac{1}{G_{{\wb},\mathbf{E}}}.
    \end{equation}
    When $K=D$, we always have $L'=L$, which gives us a tighter lower bound
    \begin{equation}
        \frac{\sum_{k=0}^{D-1}\Delta\Psi_k(\mathbf{S}_L)}{\Delta{E}_{\wb}(\mathbf{S}_L)} \geq \frac{2}{G_{{\wb},\mathbf{E}}}.
    \end{equation}

    For the upper bound, we again separate the two cases. According to Lemma \ref{lemma}, when $\delta^{(1)}_{{\wb},\mathbf{E}} \leq 2 \delta^{(2)}_{{\wb},\mathbf{E}}$ and $L'>1$
    \begin{equation}
        \frac{\sum_{k=0}^{K-1}\Delta\Psi_k(\mathbf{S}_L)}{\Delta{E}_{\wb}(\mathbf{S}_L)} \leq \frac{L'}{(L'-1)\delta^{(1)}_{{\wb},\mathbf{E}}} \leq \frac{2}{\delta^{(1)}_{{\wb},\mathbf{E}}},
    \end{equation}
    where equality is achieved when $L'=2$.
    When $\delta^{(1)}_{{\wb},\mathbf{E}} > 2 \delta^{(2)}_{{\wb},\mathbf{E}}$ or $L'=1$,
    \begin{equation}
        \begin{split}
        \frac{\sum_{k=0}^{K-1}\Delta\Psi_k(\mathbf{S}_L)}{\Delta{E}_{\wb}(\mathbf{S}_L)} &\leq \frac{L'}{(2L'-1)\delta^{(2)}_{{\wb},\mathbf{E}}}\leq \frac{1}{\delta^{(2)}_{{\wb},\mathbf{E}}},
        \end{split}
    \end{equation}
    where equality is achieved when $L'=1$. Specially, when $K=D$, we always have $L'=L\geq2$. Therefore
    \begin{equation}
        \frac{\sum_{k=0}^{D-1}\Delta\Psi_k(\mathbf{S}_L)}{\Delta{E}_{\wb}(\mathbf{S}_L)} \frac{L}{(L-1)g_{\bm{w},\mathbf{E}}} \leq \frac{2}{g_{\bm{w},\mathbf{E}}}.
    \end{equation}
\end{proof}

\subsection{Proof of Theorem \ref{res:en_error}}

\begin{proof}
    Let us first rewrite the ensemble energy error as
    \begin{equation}
    \begin{split}
        \Delta{E}_{{\wb}} &= \sum_{k=0}^{D-1} (w_k\!-\!w_{k+1}) \sum_{j=0}^{k}(\tilde{E}_j \!-\! E_j)
        \\
        &= \sum_{k=0}^{D-2} (w_k\!-\!w_{k+1}) X_k,
    \end{split}
    \end{equation}
    where $w_{D} \equiv 0$ and $F_k = \sum_{j=0}^k(\tilde{E}_j-E_j)$.
    $F_k$ are non-negative thanks to Theorem \ref{thrm:kyfan}, and specially $F_{D-1} = \mathrm{Tr}[\hat{H}-\hat{H}]=0$.
    If ${\wb}$ is strictly decreasing, then (defining $F_{-1}=0$)
    \begin{equation}
        \begin{split}
            \sum_{k=0}^{D-1}|\Delta{E}_k| &= \sum_{k=0}^{D-1}|F_k\!-\!F_{k-1}| \leq 2\sum_{k=0}^{D-1}F_k
            \\
            &\leq \frac{2}{\min_{k<D-1}(w_k\!-\!w_{k+1})}\sum_{k=0}^{D-2}(w_k\!-\!w_{k+1})F_k
            \\
            &= \frac{2\Delta{E}_{{\wb}}}{\min_{k<D-1}(w_k\!-\!w_{k+1})}.
        \end{split}
        \end{equation}
    For the lower bound, we first define the positive and negative parts (both as positive values) $\Delta\mathbf{\tilde{E}}_{\pm}$ of the vector $\Delta\mathbf{\tilde{E}}=(\Delta{E}_k)_{k=0}^{D-1}$,
    where the negative and positive entries are set to zero, respectively.
    \begin{equation}
        \begin{split}
            \Delta{E}_{{\wb}} &= \sum_{k=0}^{D-1} w_k [(\Delta{E}_+)_k - (\Delta{E}_-)_k]
            \\
            &\leq w_0 \sum_{k=0}^{D-1} (\Delta{E}_+)_k - w_{D-1} \sum_{k=0}^{D-1} (\Delta{E}_+)_k
            \\
            &= \frac{w_0\!-\!w_{D-1}}{2}\sum_{k=0}^{D-1}|\Delta{E}_k|.
        \end{split}
    \end{equation}
    In the last line we used the identity (recall that $\sum_{k=0}^{D-1}(\Delta{E}_+)_k - \sum_{k=0}^{D-1}(\Delta{E}_-)_k = F_{D-1} = 0$)
    \begin{equation}
        \sum_{k=0}^{D-1}(\Delta{E}_+)_k = \frac{1}{2}\sum_{k=0}^{D-1}|\Delta{E}_k|.
    \end{equation}

    If $w_{K-1} > w_K > w_{K+1}$ for some $0<K<D-1$,
    then the total error of the lowest $K$ eigenenergies are bounded from above similarly as
    \begin{equation}
    \begin{split}
        \sum_{k=0}^{K-1}|\Delta{E}_k| &= \sum_{k=0}^{K-1}|F_k\!-\!F_{k-1}|\leq 2\sum_{k=0}^{K-2}F_k \!+\! F_{K-1}
        \\
        &\leq \Delta{E}_{{\wb}} \max_{k<K}\left\{\frac{2\Theta(K\!-\!k\!-\!1)}{w_k\!-\!w_{k+1}}\right\}.
    \end{split}
    \end{equation}
    For the lower bound, we simply observe that
    \begin{equation}
        \tilde{E}_{{\wb}} \leq \sum_{k=0}^{K-1} w_k |\Delta{E}_k| \leq w_0 \sum_{k=0}^{K-1} |\Delta{E}_k|,
    \end{equation}
    which completes the proof.
\end{proof}

\section{Derivation for optimal weight vectors}\label{app:woptimal}
Recall the $\bm{\mu}$-notation for parameterizing a non-increasing weight vector $\bm{w}$
\begin{equation}
    \begin{split}
        \mu_k &= w_k-w_{k+1}, \quad  k < D-1.
        \\
        \mu_{D-1} &= w_{D-1}
    \end{split}
\end{equation}
where $\mu_k$'s satisfy
\begin{equation}
    \begin{split}
    0 \leq \mu_k \leq \frac{1}{k+1},
    \\
    \sum_{k=0}^{D-1} (k+1)\mu_k = 1.
    \end{split}
\end{equation}
Now we would like to solve the following minimization for the linear upper bound of the error of all eigenenergies $\sum_{k=0}^{D-1}|\Delta{E}_k|$
\begin{equation}
    \min_{\bm{\mu}} \max_{k<D-1} \frac{1}{\mu_k}.
\end{equation}
Notice that $\mu_{D-1}$ does not enter the above expression. Therefore, we can reformulate the minimization with only inequality constraints on $(\mu_k)_{k=0}^{D-2}$. To be explicit, the minimization becomes
\begin{equation}
    \min_{0\leq \mu_{k<D-1} \leq 1/(k+1)} \,{\max_{k<D-1}\mu_k^{-1}}.
\end{equation}
In Figure \ref{fig:optimal_weights} we illustrate how the minimum is obtained. The gray shaded area is the domain of minimization of $\bm{\mu}$. The equation $\max_{k<D-1} \mu_k^{-1} = a$ is an L-shaped curve (in color red) with its corner on the line defined by $\mu_0 = \mu_{1}=\cdots = \mu_{D-2}$. As the curve moves away from the origin along the line $\mu_0 = \mu_{1}=\cdots = \mu_{D-2}$, the value of $\max_{k<D-1} \mu_k^{-1}$ decreases. It is then clear that the minimum is obtained as the curve touches the slanted boundary of the gray domain, defined by $\mu_0 + 2\mu_1 + \cdots + (D\!-\!1)\mu_{D-2}\!=\!1$ and $\mu_{D-1}=0$. Solving its intersection with the line $\mu_0 = \mu_{1}=\cdots = \mu_{D-2}$, we obtain the optimal weight vector
\begin{equation}
    \bm{w} = \frac{2}{D(D-1)}(D\!-\!1,D\!-\!2,\ldots,1).
\end{equation}

\begin{figure}[t]
    \centering
    \includegraphics[scale=0.3]{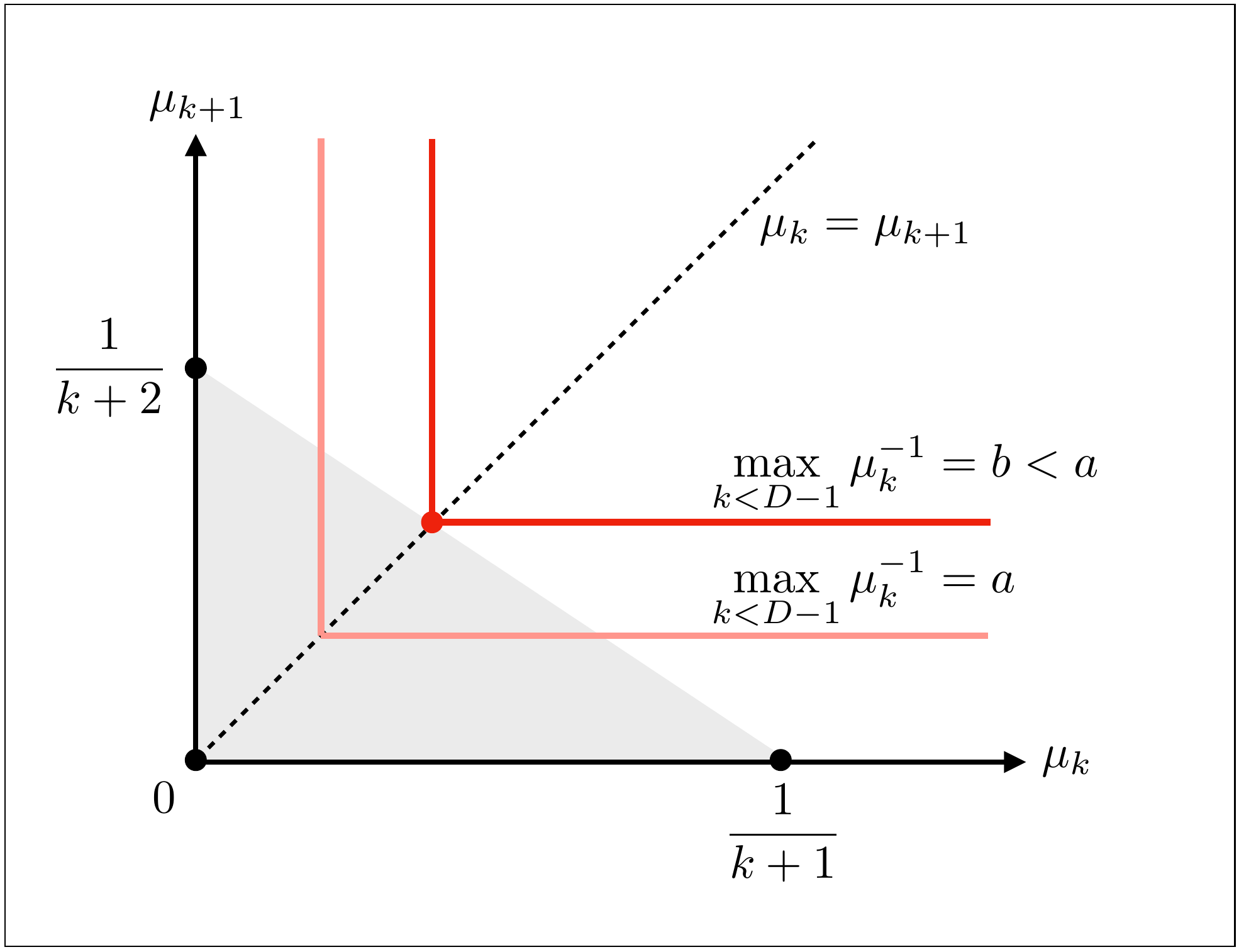}
    \caption{Pictorial illustration of the derivation for optimal weights. See text for more details.}
    \label{fig:optimal_weights}
\end{figure}

When we consider the upper bound for the error of the lowest $K$ eigenstates, even more elements of $\bm{\mu}$ are free and do not enter the minimization
\begin{equation}
    \min_{0\leq\mu_{k<K}\leq1/(k+1)} \, \max_{k<K} \frac{\Theta(K\!-\!k\!-\!1)}{\mu_k}.
\end{equation}
Following the same logic, in the optimal solution the elements $\mu_K,\mu_{K+1},\ldots,\mu_{D-1}$ will be 0. This time, the corner of the curve defined by $\max_{k<K} \mu^{-1}_k\Theta(K\!-\!k\!-\!1)=a$ lies on the line $\mu_0 = \mu_{1}=\cdots = 2\mu_{K-1}$. Combining these condition would then lead to the optimal weights
\begin{equation}
    \bm{w} = \frac{1}{K^2}(2K\!-\!1,2K\!-\!3,\ldots,1).
\end{equation}

As for the optimal weights for the error of the eigenstates, the proof follows similarly. The only difference is that weight gaps $\mu_k$'s are modified by the according energy gaps. For example, for targeting all eigenstates, the optimal weight vector is the minimizer of the following problem
\begin{equation}
    \min_{0\leq \mu_{k<D-1} \leq 1/(k+1)}\,\max_{k<D-1}\mu_k^{-1}(E_{k+1}-E_{k})^{-1}.
\end{equation}
The solution is then given by the intersection of the following conditions
\begin{equation}
    \begin{split}
        &\mu_0 + 2\mu_1 + \cdots (D-1)\mu_{D-2} = 1,
        \\
        &\mu_{D-1} = 0,
        \\
        &\frac{\mu_0}{E_1\!-\!E_0} = \frac{\mu_1}{E_2\!-\!E_1} = \cdots = \frac{\mu_{D-2}}{E_{D-1}\!-\!E_{D-2}}.
    \end{split}
\end{equation}
The first two conditions determine the boundary of the domain, whereas the last condition determines the point on the boundary at which the corner of the curve defined by $\max_{k<D-1}\mu_k^{-1}(E_{k+1}-E_{k})^{-1} = a$ touches. For targeting the lowest $K$ eigenstates, we simply sent all $w_{k\geq K}$ to 0, and replace $E_{D-1}\!-\!E_{D-2}$ with $(E_{D-1}\!-\!E_{D-2})/2$.

\end{document}